\begin{document}
\newcommand{\commentout}[1]{}

\newcommand{\nwc}{\newcommand}
\newcommand{\bz}{{\mathbf z}}
\newcommand{\sqk}{\sqrt{\ks}}
\newcommand{\sqkone}{\sqrt{|\ks_1|}}
\newcommand{\sqktwo}{\sqrt{|\ks_2|}}
\newcommand{\invsqkone}{|\ks_1|^{-1/2}}
\newcommand{\invsqktwo}{|\ks_2|^{-1/2}}
\newcommand{\partz}{\frac{\partial}{\partial z}}
\newcommand{\grady}{\nabla_{\by}}
\newcommand{\gradp}{\nabla_{\bp}}
\newcommand{\gradx}{\nabla_{\bx}}
\newcommand{\invf}{\cF^{-1}_2}
\newcommand{\myphi}{\Phi_{(\eta,\rho)}}
\newcommand{\minrg}{|\min{(\rho,\gamma^{-1})}|}
\newcommand{\al}{\alpha}
\newcommand{\xvec}{\vec{\mathbf x}}
\newcommand{\kvec}{{\vec{\mathbf k}}}
\newcommand{\lt}{\left}
\newcommand{\ksq}{\sqrt{\ks}}
\newcommand{\rt}{\right}
\nwc{\bG}{{\bf G}}
\newcommand{\ga}{\gamma}
\newcommand{\vas}{\varepsilon}
\newcommand{\lan}{\left\langle}
\newcommand{\ran}{\right\rangle}
\newcommand{\tvas}{{W_z^\vas}}
\newcommand{\psiep}{{W_z^\vas}}
\newcommand{\wep}{{W^\vas}}
\newcommand{\weptil}{{\tilde{W}^\vas}}
\newcommand{\wepz}{{W_z^\vas}}
\newcommand{\weps}{{W_s^\ep}}
\newcommand{\wepsp}{{W_s^{\ep'}}}
\newcommand{\wepzp}{{W_z^{\vas'}}}
\newcommand{\wepztil}{{\tilde{W}_z^\vas}}
\newcommand{\vvas}{{\tilde{\ml L}_z^\vas}}
\newcommand{\veptil}{{\tilde{\ml L}_z^\vas}}
\newcommand{\vep}{{{ V}_z^\vas}}
\newcommand{\cvc}{{{\ml L}^{\ep*}_z}}
\newcommand{\cvcp}{{{\ml L}^{\ep*'}_z}}
\newcommand{\cvp}{{{\ml L}^{\ep*'}_z}}
\newcommand{\cvtil}{{\tilde{\ml L}^{\ep*}_z}}
\newcommand{\cvtilp}{{\tilde{\ml L}^{\ep*'}_z}}
\newcommand{\vtil}{{\tilde{V}^\ep_z}}
\newcommand{\ktil}{\tilde{K}}
\newcommand{\n}{\nabla}
\newcommand{\tkappa}{\tilde\kappa}
\newcommand{\ks}{{\omega}}
\newcommand{\bx}{\mb x}
\newcommand{\br}{\mb r}
\nwc{\bH}{{\mb H}}
\newcommand{\bu}{\mathbf u}
\nwc{\bxp}{{{\mathbf x}}}
\nwc{\byp}{{{\mathbf y}}}
\newcommand{\bD}{\mathbf D}
\nwc{\bh}{\mathbf h}
\newcommand{\bB}{\mathbf B}
\newcommand{\bC}{\mathbf C}
\nwc{\cO}{\mathcal  O}
\newcommand{\bp}{\mathbf p}
\newcommand{\bq}{\mathbf q}
\newcommand{\by}{\mathbf y}
\nwc{\bP}{\mathbf P}
\nwc{\bs}{\mathbf s}
\nwc{\bX}{\mathbf X}
\newcommand{\pdg}{\bp\cdot\nabla}
\newcommand{\pdgx}{\bp\cdot\nabla_\bx}
\newcommand{\one}{1\hspace{-4.4pt}1}
\newcommand{\corr}{r_{\eta,\rho}}
\newcommand{\rinf}{r_{\eta,\infty}}
\newcommand{\rzero}{r_{0,\rho}}
\newcommand{\rzeroinf}{r_{0,\infty}}
\nwc{\om}{\omega}
\nwc{\Gp}{{G_{\rm par}}}
\nwc{\nwt}{\newtheorem}
\nwc{\xp}{{x^{\perp}}}
\nwc{\yp}{{y^{\perp}}}
\nwt{remark}{Remark}
\nwt{definition}{Definition} 
\nwc{\bd}{{\mb d}}
\nwc{\ba}{{\mb a}}
\nwc{\bal}{\begin{align}}
\nwc{\be}{\begin{equation}}
\nwc{\ben}{\begin{equation*}}
\nwc{\bea}{\begin{eqnarray}}
\nwc{\beq}{\begin{eqnarray}}
\nwc{\bean}{\begin{eqnarray*}}
\nwc{\beqn}{\begin{eqnarray*}}
\nwc{\beqast}{\begin{eqnarray*}}

\nwc{\eal}{\end{align}}
\nwc{\ee}{\end{equation}}
\nwc{\een}{\end{equation*}}
\nwc{\eea}{\end{eqnarray}}
\nwc{\eeq}{\end{eqnarray}}
\nwc{\eean}{\end{eqnarray*}}
\nwc{\eeqn}{\end{eqnarray*}}
\nwc{\eeqast}{\end{eqnarray*}}

\nwc{\ep}{\varepsilon}
\nwc{\eps}{\varepsilon}
\nwc{\ept}{\epsilon}
\nwc{\vrho}{\varrho}
\nwc{\orho}{\bar\varrho}
\nwc{\ou}{\bar u}
\nwc{\vpsi}{\varpsi}
\nwc{\lamb}{\lambda}
\nwc{\Var}{{\rm Var}}

\nwt{cor}{Corollary}
\nwt{proposition}{Proposition}
\nwt{corollary}{Corollary}
\nwt{theorem}{Theorem}
\nwt{summary}{Summary}
\nwt{lemma}{Lemma}
\nwc{\nn}{\nonumber}
\nwc{\mf}{\mathbf}
\nwc{\mb}{\mathbf}
\nwc{\ml}{\mathcal}
\nwc{\bj}{{\mb j}}
\nwc{\bA}{{\mb \Phi}}
\nwc{\IA}{\mathbb{A}} 
\nwc{\bi}{\mathbf i}
\nwc{\bo}{\mathbf o}
\nwc{\IB}{\mathbb{B}}
\nwc{\IC}{\mathbb{C}} 
\nwc{\ID}{\mathbb{D}} 
\nwc{\IM}{\mathbb{M}} 
\nwc{\IP}{\mathbb{P}} 
\nwc{\bI}{\mathbf{I}} 
\nwc{\IE}{\mathbb{E}} 
\nwc{\IF}{\mathbb{F}} 
\nwc{\IG}{\mathbb{G}} 
\nwc{\IN}{\mathbb{N}} 
\nwc{\IQ}{\mathbb{Q}} 
\nwc{\IR}{\mathbb{R}} 
\nwc{\IT}{\mathbb{T}} 
\nwc{\IZ}{\mathbb{Z}} 
\nwc{\II}{\mathbb{I}} 

\nwc{\cE}{{\ml E}}
\nwc{\cP}{{\ml P}}
\nwc{\cQ}{{\ml Q}}
\nwc{\cL}{{\ml L}}
\nwc{\cX}{{\ml X}}
\nwc{\cW}{{\ml W}}
\nwc{\cZ}{{\ml Z}}
\nwc{\cR}{{\ml R}}
\nwc{\cV}{{\ml V}}
\nwc{\cT}{{\ml T}}
\nwc{\crV}{{\ml L}_{(\delta,\rho)}}
\nwc{\cC}{{\ml C}}
\nwc{\cA}{{\ml A}}
\nwc{\cK}{{\ml K}}
\nwc{\cB}{{\ml B}}
\nwc{\cD}{{\ml D}}
\nwc{\cF}{{\ml F}}
\nwc{\cS}{{\ml S}}
\nwc{\cM}{{\ml M}}
\nwc{\cG}{{\ml G}}
\nwc{\cH}{{\ml H}}
\nwc{\cN}{{\ml N}}
\nwc{\bk}{{\mb k}}
\nwc{\bT}{{\mb T}}
\nwc{\cbz}{\overline{\cB}_z}
\nwc{\supp}{{\hbox{\rm supp}}}
\nwc{\fR}{\mathfrak{R}}
\nwc{\bY}{\mathbf Y}
\newcommand{\mbr}{\mb r}
\nwc{\pft}{\cF^{-1}_2}
\nwc{\bU}{{\mb U}}
\nwc{\bPhi}{{\mb \Phi}}
\nwc{\bPsi}{{\mb \Psi}}
\title{Exact Localization and Superresolution  with Noisy Data and Random Illumination
}
\author{Albert C.  Fannjiang}
\email{
fannjiang@math.ucdavis.edu}
  \thanks{The research is partially supported by
the NSF grant DMS - 0908535.}

      \address{
   Department of Mathematics,
    University of California, Davis, CA 95616-8633}
   
       \begin{abstract}
This paper studies the problem of exact localization of
multiple objects with noisy data. The crux of the proposed
approach consists of random illumination. Two recovery methods
are analyzed: the Lasso and the One-Step Thresholding (OST).  

For independent
random probes, it is shown that both recovery methods  can localize exactly $s=\cO(m)$, up to a logarithmic factor,  objects where
$m$ is the number of data. Moreover, when the number of random probes is large  the Lasso with random illumination has
a performance guarantee for superresolution, beating the
Rayleigh resolution limit. Numerical evidence confirms
the predictions and  indicates that the performance of the Lasso is superior
to that of the OST for the proposed set-up with random  illumination. 
       \end{abstract}
       
       \maketitle

\section{Introduction}

Two-point resolution is a standard criterion for
evaluation of imaging systems, i.e. the ability of the imaging system
to distinguish two closely located point objects. The
smallest resolvable distance $\ell$ between two objects, called  the (two-point) resolution length, is then defined as a metric of
the resolving power 
of the imaging system. Let $A$ be the aperture of the imaging
system, $z_0$ the distance to the objects and $\lambda$
the wavelength. The classical Rayleigh resolution criterion then
states
\beq\label{ray}
{A\ell\over z_0\lambda}=\cO(1)
\eeq
where there is some arbitrariness in the constant depending
on the precise definition of minimum  resolvable length $\ell$.

For noisy data, such a  criterion is more difficult 
to apply as  determination of $\ell$ becomes a statistical problem. 
One option would be to formulate the two-point resolution
problem as a statistical-hypothesis-testing problem
(one  versus two objects), see \cite{FS,SM} and references
therein. 
However, it is
cumbersome to generalize this approach to multiple point objects. 

 In this paper we first study the resolution issue  from
 the perspective of exact,  simultaneous  localization of multiple point objects. We evaluate an imaging method by saying that
 it can exactly localize $s$ (sparsity) randomly distributed point objects
mutually  separated by a minimum distance $\ell$ with high probability. In addition to reconsidering  the issue of resolution, 
 we seek an approach 
that can recover a high number $s=\cO(m)$ objects where
$m$ is the number of data, 
with resolution $\ell$ far below what is dictated by
the Rayleigh resolution limit (\ref{ray}) (see Remark \ref{rmk3}). 
This latter effect is called superresolution. 
 
Consider the noisy data model:
\beq
\label{u1}
Y=\bA X+E,\quad \|E\|_2\leq \ep
\eeq
where $X\in \IC^N$ is the object to be recovered, $Y\in \IC^m$
is the data vector and $E\in \IC^N$ represents  noise. 
We shall assume that $\bA$ has unit-norm columns. This can always be realized by redefining the object vector $X$. 

Sparse object reconstruction  for this model can be broken into two steps: localization (i.e. support recovery) and strength estimation. For underdetermined sytems, the former, being  combinatorial in nature, is by far
more difficult than the latter which is a straightforward inversion
if  the former is  exact.  
The former step is  called 
{\em model-selection }  in linear regression and machine learning 
theory \cite{BCJ, BBM, BM, BTW, CP,Gre,MB, Tib, ZY}
from which one of the reconstruction methods studied in the present paper originates.  

Exact localization with noisy data  is challenging. Many reconstruction methods guarantee 
stability (i.e. the reconstruction error bounded by a constant
multiple of  
the noise level) but not necessarily exact localization. 
Orthogonal Matching Pursuit (OMP) is a simple greedy algorithm with proven guarantee of exact localization for sufficiently
small noise and worst-case coherence.

A basic quantity for stability analysis in compressed sensing is the notion
of coherence.   Let the worst-case coherence $\mu(\bA)$ be defined as
\beq
\mu(\bA)=\max_{i\neq j} {\lt|\Phi^*_j\Phi_i\rt|\over \|\Phi_j\|_2\|\Phi_i\|_2}. 
\eeq

A standard  result is the following result  \cite{DET}.

\begin{proposition}
\label{prop0}
Consider the signal model (\ref{u1}). 
Suppose the sparsity $s$  of the real-valued object vector $X\in \IR^N$  satisfies
\[
s< {1\over 2} (1+{1\over \mu})-{\ep\over \mu X_{\rm min}},\quad
X_{\rm min}=\min_{i\in \cS} |X_i|
\]
Denote by $\hat X^\ep$ the output of OMP
which stops as soon as the residual error (in $\ell^2$-norm) is no greater than $\ep$. 
Then 
\begin{itemize}
\item[(i)] $\hat X^\ep$ has the correct support, i.e.
\beqn
\hbox{\rm supp} (\hat X^\ep)=\hbox{\rm supp} (X)
\eeqn
\item[(ii)]
$\hat X^\ep$ approximates the true object vector
\beqn
\|\hat X^\ep-X\|_2^2\leq {\ep^2\over 1-\mu(s-1)}
\eeqn
\end{itemize}
\end{proposition}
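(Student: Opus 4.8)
The plan is to analyze the greedy selection process of OMP inductively, showing that at each of the first $s$ iterations the algorithm picks a column index belonging to the true support $\cS$, and that after exactly $s$ steps the residual has dropped to or below $\ep$, so the stopping rule halts with precisely the correct support. The two conclusions of the proposition then follow: part (i) is the statement that the support is recovered, and part (ii) is the standard least-squares error estimate on the recovered support, which uses only that the restricted $s$-column submatrix $\bA_\cS$ is well-conditioned (its Gram matrix has eigenvalues bounded below by $1-\mu(s-1)$ by Gershgorin, since the columns are unit-norm and off-diagonal entries are bounded by $\mu$).

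The key inductive step is the ``exact recovery condition'' argument. Suppose that after $k<s$ correct steps the residual is $R_k = \bA X_k + E$ where $X_k$ is supported in $\cS$ and equals $X$ on the already-selected indices. OMP selects the next index by maximizing $|\Phi_j^* R_k|$. I would bound below the maximal correlation over $j\in\cS$ and bound above the maximal correlation over $j\notin\cS$, and show the former strictly exceeds the latter under the stated sparsity hypothesis. Writing $R_k = \bA_\cS (X - X_k) + E$ and using unit-norm columns together with $\mu$-coherence to control cross terms, the in-support correlation is at least roughly $\|X-X_k\|_\infty(1-\mu(s-1)) - \ep$ while the out-of-support correlation is at most roughly $\mu s \|X-X_k\|_\infty + \ep$; comparing these and using $\|X-X_k\|_\infty \geq X_{\min}$ yields the gap precisely when $s < \tfrac12(1+\mu^{-1}) - \ep/(\mu X_{\min})$. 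This is the same calculation that appears in \cite{DET,Tropp}, so I would cite it rather than reproduce every inequality.

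The remaining point is the stopping behavior: one must check that OMP does not stop prematurely (while the residual still exceeds $\ep$ but before all of $\cS$ is found) and does not overshoot (selecting a spurious index after exhausting $\cS$). Premature stopping is excluded because as long as $\supp(X_k)\subsetneq\cS$ one has $X-X_k\neq 0$, and the lower bound on the in-support correlation shows the residual cannot yet be as small as $\ep$ — more precisely, once all true indices are selected the residual is exactly $E$ with $\|E\|_2\le\ep$, and before that it is strictly larger in the relevant sense. Overshooting is excluded because after step $s$ the residual equals $\bA_\cS(X-X_s)+E$ with $X_s$ the least-squares fit on $\cS$; by the normal equations this residual is orthogonal to the range of $\bA_\cS$, its norm is at most $\|E\|_2\le\ep$, and the stopping criterion triggers.

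I expect the main obstacle — or at least the only place demanding care — to be making the ``does not stop too early'' claim fully rigorous, since the stopping rule is phrased in terms of the $\ell^2$-norm of the residual while the progress estimate is most naturally an $\ell^\infty$-type correlation bound; one has to argue that a residual with a large correlation against some unit-norm column cannot itself have small $\ell^2$-norm, and track how $\|X-X_k\|$ decreases across iterations. Everything else is the now-standard coherence bookkeeping, and I would lean on the cited literature for those estimates.
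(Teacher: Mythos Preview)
The paper does not give its own proof of this proposition; it is quoted as ``a standard result'' from \cite{DET} (Donoho--Elad--Temlyakov) and left unproved. Your sketch is precisely the coherence-based inductive argument that underlies that reference (and Tropp's earlier work), so you are reproducing the cited proof rather than offering an alternative, and you explicitly acknowledge as much.

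One small inaccuracy worth flagging: you write that after $k$ correct steps ``$X_k$ \ldots\ equals $X$ on the already-selected indices.'' In standard OMP the estimate $X_k$ is the least-squares fit on the selected index set $S_k$, not the restriction of $X$ to $S_k$; these differ in general. This does not break the argument, since what you actually need is only that $X-X_k$ is supported on $\cS$ and that $(X-X_k)_i=X_i$ for $i\in\cS\setminus S_k$ (because $X_k$ vanishes there), whence $\|X-X_k\|_\infty\ge X_{\min}$ whenever $S_k\subsetneq\cS$. The rest of your outline---the Gershgorin eigenvalue bound $1-\mu(s-1)$ for part~(ii), the in-support versus out-of-support correlation comparison, and the handling of the stopping rule via orthogonality of the residual to the already-selected columns---is the standard bookkeeping from \cite{DET}.
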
 

The general lower bound
\cite{DGS, Wel}
\[
\sqrt{N-m\over m(N-1)}\leq \mu 
\]
for the mutual coherence of any $m\times N$ matrix $\bA$
implies that the sparsity $s$ allowed  by Proposition \ref{prop0}
is $\cO(\sqrt{m})$ for $N\gg m$. 

A main  purpose of the paper  is to explore the utility of
two other methods from compressed sensing theory,  
the One-Step Thresholding (OST) \cite{BCJ} and
the Lasso \cite{Tib, CDS},  that have the potential for exact localization of  much higher number $\cO(m)$ of objects.

The One-Step Thresholding (OST),    proposed in \cite{BCJ},   involves just  one matrix multiplication
plus {\em  thresholding}: Compute $Z=\bA^* Y$ and determine the set of
points
\[
\hat\cS=\lt\{i\in \{1,...,N\}: |Z_i|> \tau_*\rt\}
\]
for some threshold $\tau_*$. 
In other words, the OST is  the linear processor
of Matched Field Processing (MFP) plus
a thresholding step \cite{BKM}. On the other hand the linear
processor of 
MFP is the same as the first iterate of OMP. Consequently
OST has even lower complexity than OMP
which is its main appeal.

For the OST's performance guarantee,  
we need
the notion of average coherence defined as \cite{BCJ}
\beq
\nu(\bPhi)&=&{1\over N-1}
\max_{j'} \lt| \sum_{j\neq j'} \Phi_{j'}^*\Phi_j\rt|\nn
\eeq
in addition to the worst-case coherence. 


The following is the performance guarantee for OST \cite{BCJ}.
\begin{proposition}
\label{prop2} 

Consider the signal model (\ref{u1}). 
 Assume that $X\in \IR^N$ is drawn 
from the generic $s$-sparse ensemble of real-valued objects.
Assume $E$ to be distributed as $\hbox{\rm CN}(0,\sigma^2 \bI)$, the complex Gaussian random vectors with 
the covariance matrix $\sigma^2 \bI$.

Suppose
\beq
\mu(\bA) &\leq &{c_1\over \sqrt{m}} \leq {1\over \sqrt{10\log N}}\label{307}
%
\eeq
for some $c_1>0$ (which may depend on $\log N$)
and
\beq
 \nu(\bA) &\leq& {12\mu (\bA) \over \sqrt{m}}. \label{308}
\eeq
Assume $\|X\|_2=1$. 
Define the threshold 
\beq
\label{threshold}
\tau_*=4\sqrt{\log N}\max\lt\{\sigma, 12\mu \sqrt{2}\rt\}. 
\eeq
Suppose the number of objects obeying 
\beq
\label{300}
s\leq {m\over 2\log{N}}
\eeq
and that
\beq
\label{301}
X_{\rm min}=\min_{i\in \cS} |X_i|>2\tau_*. 
\eeq
Then 
the OST with threshold  $\tau_*$
satisfies $\IP\lt(\hat \cS \neq \cS\rt) \leq 9/N$.
\end{proposition}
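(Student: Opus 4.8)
The plan is to reduce the statement, via the triangle inequality, to two tail bounds that decouple cleanly: a Gaussian deviation bound for the noise contribution to $Z=\bA^*Y$, and a Hoeffding-type concentration bound for the mutual-interference contribution, the latter being the real content. Write $Z=\bA^*\bA X+\bA^*E$; since the columns $\Phi_i$ of $\bA$ have unit norm, $\bA^*\bA=\bI+\bH$ where $\bH$ is the hollow Gram matrix ($\bH_{ii}=0$ and $|\bH_{ij}|\le\mu(\bA)$ for $i\ne j$), so $Z=X+\bH X+\bA^*E$. Since $\hat\cS=\{i:|Z_i|>\tau_*\}$ coincides with $\cS=\supp(X)$ as soon as $\min_{i\in\cS}|Z_i|>\tau_*\ge\max_{i\notin\cS}|Z_i|$, and since $|Z_i|\ge|X_i|-|(\bH X)_i|-|(\bA^*E)_i|$ on $\cS$ while $|Z_i|\le|(\bH X)_i|+|(\bA^*E)_i|$ off $\cS$, it suffices --- using $X_{\min}>2\tau_*$ from (\ref{301}) --- to show that, with probability at least $1-9/N$, both $\|\bH X\|_\infty\le\tau_*/24$ and $\|\bA^*E\|_\infty\le(23/24)\tau_*$: this forces $|Z_i|\ge X_{\min}-\tau_*>\tau_*$ on $\cS$ and $|Z_i|\le\tau_*$ off $\cS$.

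The noise bound is routine and uses nothing about $\bA$: each $(\bA^*E)_i=\Phi_i^*E$ is a scalar complex Gaussian of variance $\sigma^2\|\Phi_i\|_2^2=\sigma^2$, so $\IP(|(\bA^*E)_i|>t)\le e^{-t^2/\sigma^2}$, and a union bound over the $N$ coordinates with $t=(23/24)\tau_*\ge(23/24)\cdot4\sigma\sqrt{\log N}$ --- the last inequality supplied by (\ref{threshold}) --- makes the exceptional probability $\cO(N^{-1})$.

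The interference bound is where the randomness of the generic $s$-sparse ensemble does the work. Fix a coordinate $i$ and condition on the random support $\cS$; conditionally the signs $\{\operatorname{sgn}(X_j):j\in\cS\}$ are independent and symmetric, so $(\bH X)_i=\sum_{j\in\cS\setminus\{i\}}\bH_{ij}X_j$ has conditional mean zero, with variance proxy $\sum_{j\in\cS\setminus\{i\}}|\bH_{ij}|^2|X_j|^2\le\mu(\bA)^2\|X\|_2^2=\mu(\bA)^2$ since $|\bH_{ij}|\le\mu(\bA)$ and $\|X\|_2=1$. Hoeffding's inequality then gives
\[
\IP\bigl(|(\bH X)_i|>t\mid\cS\bigr)\ \le\ 2\exp\!\left(-\frac{t^2}{2\mu(\bA)^2}\right),
\]
a bound uniform in $\cS$ and hence unconditional. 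Taking $t$ a small multiple of $\mu(\bA)\sqrt{\log N}$ and union-bounding over the $N$ coordinates yields $\|\bH X\|_\infty=\cO(\mu(\bA)\sqrt{\log N})$ with probability $1-\cO(N^{-1})$; since (\ref{307}) and (\ref{threshold}) force $\tau_*\ge48\sqrt2\,\mu(\bA)\sqrt{\log N}$, this is below $\tau_*/24$. A final union bound over the noise and interference events, with the constants tracked, delivers $\IP(\hat\cS\ne\cS)\le9/N$. Two comments. First, it is essential to use the \emph{symmetry} of the signs (Hoeffding), not Cauchy--Schwarz: the latter would only give $\|\bH X\|_\infty\le\mu(\bA)\sqrt s$, which degrades as $s$ grows, whereas the concentration bound is $s$-free and is exactly what raises the admissible sparsity to $s=\cO(m/\log N)$, well past the $\cO(\sqrt m)$ ceiling of the worst-case-coherence estimate underlying Proposition \ref{prop0}. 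Second, in the variant of the analysis that exploits the randomness of the support directly (instead of conditioning on it and invoking sign symmetry), the residual expectation of $(\bH X)_i$ is controlled by $\bigl|\sum_{j\ne i}\Phi_i^*\Phi_j\bigr|\le(N-1)\nu(\bA)$ together with Cauchy--Schwarz, which is precisely what the average-coherence hypothesis (\ref{308}), in tandem with the sparsity bound (\ref{300}), is present to render negligible.

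The one genuinely delicate point, I expect, is the interference bound: one must order the conditioning correctly (support first, so that the signs are conditionally i.i.d.\ and the variance proxy collapses to $\mu(\bA)^2\|X\|_2^2$ independently of $s$), and then check that the resulting $\cO(\mu(\bA)\sqrt{\log N})$ interference and $\cO(\sigma\sqrt{\log N})$ noise both fit inside the margin opened by $X_{\min}>2\tau_*$ and the explicit threshold (\ref{threshold}). Everything else is bookkeeping of union bounds and constants.
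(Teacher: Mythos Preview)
The paper does not prove Proposition~\ref{prop2}; it is quoted verbatim from \cite{BCJ} as a known performance guarantee for OST, and the paper only \emph{uses} it (inside the proof of Theorem~\ref{thm1}). So there is no ``paper's own proof'' to compare your attempt against.

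On the merits of your sketch itself: the architecture is the right one---write $Z=X+\bH X+\bA^*E$, control the Gaussian tail of each $\Phi_i^*E$ by $e^{-t^2/\sigma^2}$ and union-bound, and control the interference $(\bH X)_i$ by concentration over the randomness of the generic ensemble. Your sign-Hoeffding step is correct as written, and your observation that it delivers $\|\bH X\|_\infty=\cO(\mu\sqrt{\log N})$ independently of $s$ is the heart of why OST beats the $\cO(\sqrt m)$ barrier.

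One thing worth flagging explicitly: your primary argument never touches the average-coherence hypothesis (\ref{308}) or the sparsity bound (\ref{300}); you relegate them to a side comment. That is not a gap in your logic, but it does mean your proof is, strictly speaking, for a narrower model than the one \cite{BCJ} treats. In the BCJ analysis the randomness of the \emph{support} is used directly (not only the sign symmetry): writing $(\bH X)_i=\sum_{j\ne i}\Phi_i^*\Phi_j X_j\,\mathbf 1_{\{j\in\cS\}}$, the Bernoulli indicators have mean $s/N$, and the resulting bias term is exactly what (\ref{308}) together with (\ref{300}) is there to kill. Your ``second comment'' identifies this mechanism correctly; if you want your write-up to match the stated hypotheses, promote that comment to the main line of argument rather than the sign-Hoeffding shortcut. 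Otherwise the reader will wonder, as I did, why two of the five hypotheses are never invoked.
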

In other words, for sufficiently small worst-case coherence
(\ref{307}) and  average coherence (\ref{308})
and noise (\ref{301}), OST can exactly  localize $\cO(m)$  objects, up to a logarithmic factor with high probability. 
Once
the support is exactly recovered, an estimate $\hat X$ can
be obtained by pseudo-inversion on the object support $\cS$.

The other method studied in this paper is the Lasso
\cite{Tib}.
The Lasso estimate $\hat X$ is defined as the solution
to
\beq
\label{lasso}
\min_{Z} {1\over 2} \|Y-\bA Z\|_2^2+\gamma\sigma \|Z\|_1,\quad\gamma>0
\eeq
where $\gamma$ is a regularization parameter. 

The following  sufficient condition for exact localization by the Lasso is given
by  \cite{CP}. 
\begin{proposition}
\label{prop1}

Consider the signal model (\ref{u1}). 
 Assume that $X\in \IR^N$ is drawn 
from the generic $s$-sparse ensemble of real-valued objects.
Assume $E$ to be distributed as $\hbox{\rm CN}(0,\sigma^2 \bI)$. 

Suppose that $\bPhi$  obeys the
coherence property 
\beq
\label{309}
\mu(\bPhi)\leq {a_0\over \log N}
\eeq
with some positive constant $a_0$. Suppose
\beq
\label{303}
s\leq{ c_0 N\over \|\bPhi\|_2^2 \log N}
\eeq
for some positive constant $c_0$. 
Let $\cS$ be the support of $X$ and suppose
\beq\label{305}
X_{\rm min} > 8\sigma\sqrt{2\log N}. 
\eeq
Then the Lasso estimate $\hat X$ with $\gamma=2\sqrt{2\log N}$
obeys
\beq
\hbox{\rm supp} (\hat X)&=&\hbox{\rm supp} (X)\\
\hbox{\rm sign}(\hat X)&=&\hbox{\rm sign}(X)
\eeq
with probability at least $1-2N^{-1}((2\pi \log N)^{-1/2}+sN^{-1})-\cO(N^{-2\log 2}))$. 
\end{proposition}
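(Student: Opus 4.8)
The plan is to follow the dual-certificate (``oracle'') argument of Cand\`es and Plan \cite{CP}. Because $X$ is drawn from the generic $s$-sparse ensemble, we condition on the support $\cS$ (uniform among $s$-subsets of $\{1,\dots,N\}$), on $z=\mathrm{sgn}(X_\cS)$ (an independent Rademacher vector), and on $E$. The goal is to produce a vector $\hat X$ supported on $\cS$, with $\mathrm{sgn}(\hat X_\cS)=z$, that satisfies the Karush--Kuhn--Tucker (subgradient) conditions for (\ref{lasso}) \emph{with strict inequality} in the dual constraint off $\cS$; a standard uniqueness lemma for the Lasso then forces every minimizer of (\ref{lasso}) to share the support and signs of $X$. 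Thus the proof reduces to building $\hat X$ and showing three ``good events'' occur with the stated probability.

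\textbf{Oracle solution and control on $\cS$.} Let $\bPhi_\cS$ be the $m\times s$ column submatrix. Set $\hat X_{\cS^c}=0$ and $\hat X_\cS=X_\cS+(\bPhi_\cS^*\bPhi_\cS)^{-1}(\bPhi_\cS^*E-\gamma\sigma z)$, the unique minimizer of (\ref{lasso}) among vectors supported on $\cS$, provided $\bPhi_\cS^*\bPhi_\cS$ is invertible. The worst-case bound (\ref{309}) alone only gives invertibility for $s=\cO(\log N)$ (Gershgorin), so instead one uses the conditioning of \emph{random} subdictionaries: hypotheses (\ref{309}) and (\ref{303}) together imply $\|\bPhi_\cS^*\bPhi_\cS-\bI\|\le 1/2$ off an event of probability $\cO(N^{-2\log 2})$, whence $\|(\bPhi_\cS^*\bPhi_\cS)^{-1}\|\le 2$ and, since $\bPhi_\cS^*\bPhi_\cS-\bI$ has zero diagonal, the rows of $(\bPhi_\cS^*\bPhi_\cS)^{-1}-\bI$ have $\ell^2$ norm $\cO(1/\sqrt{\log N})$ by (\ref{303}). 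Sign consistency on $\cS$ is $\|\hat X_\cS-X_\cS\|_\infty<X_{\min}$; splitting the error into the penalty term $\gamma\sigma(\bPhi_\cS^*\bPhi_\cS)^{-1}z$ (bounded by $\cO(\sigma\sqrt{\log N})$ using the Rademacher structure of $z$ and the row bound above) and the noise term $(\bPhi_\cS^*\bPhi_\cS)^{-1}\bPhi_\cS^*E$ (bounded by $\cO(\sigma\sqrt{\log N})$ by Gaussian tails, since conditioned on $\cS$ it is Gaussian with coordinate variance $\le 2\sigma^2$) shows the total is $\cO(\sigma\sqrt{\log N})$, which (\ref{305}) --- with its explicit constant $8$ --- is designed to dominate.

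\textbf{Strict dual feasibility off $\cS$.} The remaining KKT condition is $\|\bPhi_{\cS^c}^*(Y-\bPhi\hat X)\|_\infty<\gamma\sigma$. Substituting the oracle solution gives $Y-\bPhi\hat X=P^\perp_\cS E+\gamma\sigma\,\bPhi_\cS(\bPhi_\cS^*\bPhi_\cS)^{-1}z$ with $P^\perp_\cS=\bI-\bPhi_\cS(\bPhi_\cS^*\bPhi_\cS)^{-1}\bPhi_\cS^*$. The noise part $\|\bPhi_{\cS^c}^*P^\perp_\cS E\|_\infty$ is $\le\sigma\sqrt{2\log N}\,(1+o(1))$ by a Gaussian tail bound and a union bound over the $N-s$ columns (each coordinate has variance $\le\sigma^2$). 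The delicate term is $\gamma\sigma\max_{j\notin\cS}|\langle w_j,z\rangle|$ with $w_j=(\bPhi_\cS^*\bPhi_\cS)^{-1}\bPhi_\cS^*\bPhi_j$: fixing $\cS$ and $j$ and using that $z$ is Rademacher, Hoeffding gives $\IP(|\langle w_j,z\rangle|>t)\le 2\exp(-t^2/(2\|w_j\|_2^2))$, where $\|w_j\|_2\le 2\|\bPhi_\cS^*\bPhi_j\|_2$ and $\|\bPhi_\cS^*\bPhi_j\|_2^2=\sum_{i\in\cS}|\Phi_i^*\Phi_j|^2$, whose size for random $\cS$ is (with high probability) $\lesssim s\|\bPhi\|_2^2/N=\cO(1/\log N)$ by (\ref{303}); a union bound over $j\notin\cS$ then makes $\max_j|\langle w_j,z\rangle|$ a small constant (below, say, $1/4$) off an event of probability $\cO(N^{-1})$. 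Adding the two contributions keeps $\|\bPhi_{\cS^c}^*(Y-\bPhi\hat X)\|_\infty$ strictly below $\gamma\sigma=2\sqrt{2\log N}\,\sigma$.

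\textbf{Assembly and the main difficulty.} On the intersection of the subdictionary-conditioning event, the two Gaussian-tail events, and the Hoeffding union-bound event, $\hat X$ satisfies the strict KKT conditions, so the uniqueness lemma yields $\supp(\hat X)=\supp(X)$ and $\mathrm{sgn}(\hat X)=\mathrm{sgn}(X)$; tracking the four failure probabilities produces the asserted bound $1-2N^{-1}((2\pi\log N)^{-1/2}+sN^{-1})-\cO(N^{-2\log 2})$. I expect the main obstacle to be the off-support interpolation estimate of Step 3 together with the random-subdictionary conditioning it relies on: one must pass from worst-case quantities to the \emph{average} behaviour of the matrix, exploiting the comparatively weak hypothesis $\mu=\cO(1/\log N)$ (far weaker than the $\mu=\cO(1/s)$ behind Proposition \ref{prop0}) through decoupling and matrix-concentration control of $\|\bPhi_\cS^*\bPhi_\cS-\bI\|$ and of $\max_{j}\|\bPhi_\cS^*\bPhi_j\|_2$ for a random support $\cS$ --- this is the technical heart, while the Gaussian-noise estimates and the final KKT bookkeeping are comparatively routine.
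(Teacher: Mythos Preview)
Your proposal is correct in outline and follows exactly the dual-certificate argument of Cand\`es and Plan \cite{CP}. Note, however, that the paper does \emph{not} supply its own proof of Proposition~\ref{prop1}: the statement is quoted verbatim from \cite{CP} as a black-box input to the proofs of Theorem~\ref{thm2} and Proposition~\ref{prop4}. So there is nothing in the paper to compare your argument against --- you have reconstructed (accurately, at the level of a sketch) the proof in the cited reference rather than one the author gives.
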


Some comparison between Proposition \ref{prop1} and \ref{prop2} is in order.
Both  deal with
randomly distributed  objects. 
Both (\ref{307}) and (\ref{309}) are sufficiently
weak assumptions for most imaging problems. Also (\ref{305}) and
(\ref{301}) are similar when $\mu=\cO(\sigma)$.
The lower bounds for the success probabilities are comparable up to
a logarithmic factor. The main technical assumption 
of Proposition \ref{prop1} is (\ref{303}) while
for Proposition \ref{prop2} it is  (\ref{308}). 
When  the operator norm $\|\bA\|_2$ obeys
the bound $\|\bA\|_2=\cO(N/m)$ condition (\ref{303}) 
is comparable to (\ref{300}).

A drawback to Proposition \ref{prop2} is that the thresholding rule (\ref{threshold}) requires
the precise knowledge of $\mu$ which can only
be calculated numerically. As we shall see,   the Lasso-based method also 
has a better numerical performance  than does  the OST (cf. Figure 
\ref{fig4} and \ref{fig6}). 

To realize the potential of the two above results
 in imaging, we shall consider the
 idea of  {\em random illumination}
 for point scatterers. We shall show  that a suitable condition of random illumination enables us to   (i) obtain a guaranteed exact localization 
of $s=\cO(m)$, up to a logarithmic factor, objects and to (ii) harness 
the
superresolution capability  (i.e. breaking the Rayleigh resolution limit (\ref{ray})). 

Previously we have studied  the problems of  imaging  point scatterers \cite{cis-simo, crs-pt} using coherence and operator-norm bounds. We shall demonstrate that the imaging performance
can be  significantly improved by random illumination. In particular, suitable 
random illumination leads to  superresolution. 

However,  both Propositions \ref{prop2} 
and \ref{prop1} share  the following common drawbacks: (i) they are restricted to random
objects; (ii) they do not address 
the reconstruction error when the error level is above threshold and exact localization is  unattainable; (iii) they are limited to the  i.i.d. Gaussian noise model. 
Issue  (i) is pertinent particularly to imaging extended objects
whose supports  are  clearly not random. Issue  (ii) 
is related to  robustness with respect to a wider range of error. 
Issue  (iii) arises  in optics where the Poisson or shot noise
model is more appropriate. 

The standard compressed sensing method that
are without any of the above limitations is the Basis Pursuit Denoising (BPDN)
\beq
\label{269}
\min_{Z} \|Z\|_1,\quad\hbox{s.t.}\,\, \|Y-\bA Z\|_2\leq \ep
\eeq
\cite{CDS}. BPDN, of course, is equivalent to the Lasso
(\ref{9}) for an appropriately chosen $\gamma$.

The performance guarantee for BPDN is typically given 
in terms of  the restricted isometry property (RIP) due to Cand\`es and Tao \cite{CT}.
Precisely, let  the sparsity $s$  of a vector  $Z\in \IC^N$ be the
number of nonzero components of $Z$ and define the restricted isometry constant (RIC)  $\delta_s\in [0,1]$
to be the {\em smallest} nonnegative  number such that the inequality
\beq
\label{rip}
 (1-\delta_s) \|Z\|_2^2\leq \|\bA Z\|_2^2\leq (1+\delta_s)
\|Z\|_2^2
\eeq
holds for all $Z\in \IC^N$ of sparsity at most $ s$. 
BPDN has the following performance guarantee  \cite{Can}.
\begin{proposition}\label{prop3}
Suppose the RIC satisfies the bound
\beq
\label{266}
\delta_{2s} < \sqrt{2}-1. 
\eeq
Then the BPDN minimizer $\hat X$ is unique and satisfies
the error bound
\[
\|\hat X -X\|_2 \leq C_1 s^{-1/2} \|X-X^{(s)}\|_1+C_2\ep
\]
where $X^{(s)}$ is the best $s$-sparse approximation of $X$ and $C_1, C_2$ are absolute constants depending on
$\delta_{2s}$ only. 
\end{proposition}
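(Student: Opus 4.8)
The plan is to carry out the by-now classical argument of Cand\`es built on two ingredients: the \emph{tube} and \emph{cone} constraints forced by the BPDN program, and the near-orthogonality of $\bA$ on disjointly supported sparse vectors that follows from (\ref{rip}). Write $h=\hat X-X$ for the reconstruction error. Because $Y=\bA X+E$ with $\|E\|_2\le\ep$, the vector $X$ is itself feasible for (\ref{269}); hence by optimality of $\hat X$ one has the cone constraint $\|\hat X\|_1\le\|X\|_1$ and, by the triangle inequality, the tube constraint
\[
\|\bA h\|_2\le\|\bA\hat X-Y\|_2+\|Y-\bA X\|_2\le 2\ep .
\]
Let $T_0$ be the index set of the $s$ largest-magnitude entries of $X$, so $\|X_{T_0^c}\|_1=\|X-X^{(s)}\|_1$. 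Splitting $\|X+h\|_1$ over $T_0$ and $T_0^c$ and invoking the cone constraint turns it into the standard inequality $\|h_{T_0^c}\|_1\le\|h_{T_0}\|_1+2\|X_{T_0^c}\|_1$.

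Next I would set up the shelling partition: sort the coordinates of $h$ lying outside $T_0$ by decreasing magnitude and chop them into successive blocks $T_1,T_2,\dots$ of size $s$. The elementary bound $\|h_{T_j}\|_2\le s^{-1/2}\|h_{T_{j-1}}\|_1$ for $j\ge2$, summed and combined with Cauchy--Schwarz ($\|h_{T_0}\|_1\le s^{1/2}\|h_{T_0}\|_2$), gives
\[
\sum_{j\ge2}\|h_{T_j}\|_2\ \le\ s^{-1/2}\|h_{T_0^c}\|_1\ \le\ \|h_{T_0}\|_2+2s^{-1/2}\|X_{T_0^c}\|_1 .
\]
The heart of the matter is to estimate $\|h_{T_{01}}\|_2$, where $T_{01}=T_0\cup T_1$ has cardinality $2s$. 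Applying the lower inequality in (\ref{rip}) and writing $h=h_{T_{01}}+\sum_{j\ge2}h_{T_j}$,
\[
(1-\delta_{2s})\|h_{T_{01}}\|_2^2\ \le\ \|\bA h_{T_{01}}\|_2^2\ =\ \langle \bA h_{T_{01}},\bA h\rangle-\sum_{j\ge2}\langle \bA h_{T_{01}},\bA h_{T_j}\rangle .
\]
The first term is bounded using the tube constraint and the upper inequality in (\ref{rip}) by $2\ep\sqrt{1+\delta_{2s}}\,\|h_{T_{01}}\|_2$; the cross terms are bounded using the standard consequence of (\ref{rip}) that $|\langle \bA u,\bA v\rangle|\le\delta_{2s}\|u\|_2\|v\|_2$ whenever $u,v$ have disjoint supports of total size at most $2s$ (apply it with $u\in\{h_{T_0},h_{T_1}\}$ and $v=h_{T_j}$, and use $\|h_{T_0}\|_2+\|h_{T_1}\|_2\le\sqrt2\,\|h_{T_{01}}\|_2$), giving $\sum_{j\ge2}|\langle \bA h_{T_{01}},\bA h_{T_j}\rangle|\le\sqrt2\,\delta_{2s}\|h_{T_{01}}\|_2\sum_{j\ge2}\|h_{T_j}\|_2$.

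Dividing through by $(1-\delta_{2s})\|h_{T_{01}}\|_2$ and inserting the block estimate yields
\[
\|h_{T_{01}}\|_2\ \le\ \frac{2\sqrt{1+\delta_{2s}}}{1-\delta_{2s}}\,\ep\ +\ \rho\,\|h_{T_{01}}\|_2\ +\ 2\rho\,s^{-1/2}\|X_{T_0^c}\|_1,\qquad \rho:=\frac{\sqrt2\,\delta_{2s}}{1-\delta_{2s}} .
\]
This is exactly where hypothesis (\ref{266}) enters: $\rho<1$ precisely when $\delta_{2s}<\sqrt2-1$, so the term $\rho\|h_{T_{01}}\|_2$ can be absorbed on the left-hand side, producing $\|h_{T_{01}}\|_2\le C_2\ep+C_1 s^{-1/2}\|X_{T_0^c}\|_1$ with $C_1,C_2$ functions of $\delta_{2s}$ alone. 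Finally, $\|h\|_2\le\|h_{T_{01}}\|_2+\sum_{j\ge2}\|h_{T_j}\|_2\le 2\|h_{T_{01}}\|_2+2s^{-1/2}\|X_{T_0^c}\|_1$ together with $\|X_{T_0^c}\|_1=\|X-X^{(s)}\|_1$ gives the stated error bound. Uniqueness of the minimizer is a separate, routine matter: since (\ref{266}) forces $\delta_{2s}<1$, $\bA$ is injective on $2s$-sparse vectors, and a short convexity argument using strict convexity of the $\ell^2$-ball defining the feasible set rules out two distinct minimizers (in the exactly $s$-sparse case it is immediate, since then the error bound forces $\hat X=X$). The only genuinely delicate point of the whole argument is keeping track of the constants so that the absorption step closes exactly at the threshold $\delta_{2s}=\sqrt2-1$; everything else reduces to the triangle inequality, Cauchy--Schwarz, and the two inequalities in (\ref{rip}).
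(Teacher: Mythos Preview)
Your argument is correct and is precisely the classical Cand\`es proof. Note, however, that the paper does not supply its own proof of Proposition~\ref{prop3}: the proposition is quoted from the literature (reference \cite{Can}) as a known performance guarantee for BPDN, so there is nothing in the paper to compare against beyond the citation itself. What you have written is essentially a faithful reconstruction of the argument in \cite{Can}, with the cone/tube constraints, the size-$s$ shelling of $h_{T_0^c}$, the RIP cross-term bound $|\langle\bA u,\bA v\rangle|\le\delta_{2s}\|u\|_2\|v\|_2$ on disjoint supports, and the absorption step that identifies $\delta_{2s}<\sqrt2-1$ as the threshold at which $\rho<1$. The only mildly soft spot is the uniqueness claim: your convexity sketch is fine in spirit, but strictly speaking the feasible set of (\ref{269}) need not be strictly convex (the boundary $\|Y-\bA Z\|_2=\ep$ is a cylinder when $\bA$ has a nontrivial kernel), and uniqueness of the BPDN minimizer in full generality requires a slightly more careful argument than the one-line remark you give; in the context of the paper this is not an issue since only the error bound is used downstream.
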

In Proposition \ref{prop3}, BPDN does not guarantee the exact recovery of the discrete support $X$, which is less important for
extended objects, but also does not
have any of the limitations mentioned above for Propositions \ref{prop2}
and \ref{prop1}.

The plan for the rest of the paper is as follows. 
In Section \ref{sec2}, we review the forward
scattering problem and the paraxial approximation. 
We describe the set-up of random illumination
in the paraxial regime. In Section \ref{main} we  state and prove  the main results. In Section \ref{sec:ext}  we  analyze the performance of BPDN with random
illumination for extended objects in  Section \ref{sec:ext}
and discuss the issue of resolution in imaging extended objects. In Section \ref{sec:thm3}
we give  the worst-case coherence bounds.
In Section \ref{sec5}, we give the average coherence bound. In Section \ref{sec:thm6} we give
an operator norm bound of $\cO(N/m)$ required
to guarantee a nearly optimal performance for
the Lasso. In Section \ref{sec:num} we present
numerical simulations to verify the predictions
and show the superiority of the Lasso over the OST
for the set-up of random illumination. We also present numerical results for extended objects. We conclude
in Section \ref{sec9}. 

\section{Point  scatterers and paraxial approximations}
\label{sec2}
Let $\cL$ be 
a finite square lattice of spacing $\ell$  in  the object  plane $\{z=0\}\subset\IR^3$:
\beq
\label{pix}
\cL=\lt\{\br_l: l=1,...,N\rt\}=\lt\{(i\ell , j\ell): i,j=1,...,\sqrt{N}\rt\},\quad l=(i-1)\sqrt{N}+j
\eeq
and suppose that $s$  point scatterers  are located
  at  grid points of $\cL$. 
The total number of grid points
 $N$ is  a perfect square. 

Let $\tau_{j}\in \IC, l=1,...,N$ be the reflectivity of the scatterers. 
The 
scattered  field  $u^{\rm s}$  obeys  
\beq
\label{311}
u^{\rm s}(\br) = \sum_{j=1}^N \tau_{j}
 G(\br,\br_{j})(u^{\rm i}(\br_{j})+u^{\rm s}(\br_j))
\eeq
for any $\br \not \in \{\br_k: \tau_k\neq 0\}$ 
where $u^{\rm i}$ is the incident field and
 \beq
\label{green}
G(\br,\br')={e^{\imath \om|\br-\br'|}\over 4\pi |\br-\br'|},\quad
\forall \br\neq \br'\in \IR^3
\eeq
is  the Green function
of the operator $-(\Delta+\om^2)$. 

In the Born scattering approximation, $u^{\rm s}$ on the
right hand side of (\ref{311}) is neglected,  resulting in
\beq
\label{312}
u^{\rm s}(\br) = \sum_{j=1}^N \tau_{j}
 G(\br,\br_{j})u^{\rm i}(\br_{j})
\eeq
\commentout{
The exciting field $u(\br_{j})$ is part
of the unknown and is governed by
the  Foldy-Lax equation
\beq
\label{FL}
u(\br_j)=u^{\rm i}(\br_j)
+\sum_{j\neq l} \tau_{l}
 G(\br_j,\br_{l})u(\br_{l}) 
\eeq
where the self-field term is omitted  to prevent blow-up. 
Under the Born approximation, we set $u(\br_j)=u^{\rm i}(\br_j), j=1,...,m$. 
}

Let $\ba_j, j=1,...,n$ be the locations of the sensors in the sensor plane $\{z=z_0\}\subset\IR^3$ and write $\ba_j=(\xi_j,\eta_j,z_0)$ where
$\xi_j$ and $\eta_j$ are chosen  independently and uniformly 
from the discrete subset of $[0,A]$
\beq
\label{arr}
\cD=\lt\{{q A \over \sqrt{N}}: q=1,...,\sqrt{N}\rt\}
\eeq
where $A$ is the aperture of the sensor array.

In the Fresnel approximation under  the condition 
\beq
\label{6-2}
{(A+\ell\sqrt{N})^4\over \lambda z^3_0}\ll 1
\eeq
the Green function $G$ can be approximated by  
\beq
\label{8-3}
\Gp(\br,\ba)={e^{\imath \om z_0}\over 4\pi z_0}  e^{\imath \om|x-\xi|^2/(2z_0)}e^{\imath \om|y-\eta|^2/(2z_0)},\quad
\br=( x,y, 0),\quad \ba=(\xi,\eta, z_0),  
\eeq
called  the paraxial Green function. 

In the subsequent analysis we shall assume both
the Born and paraxial approximations in the scattering
model.

   \begin{figure}[t]
\begin{center}
\includegraphics[width=0.8\textwidth]{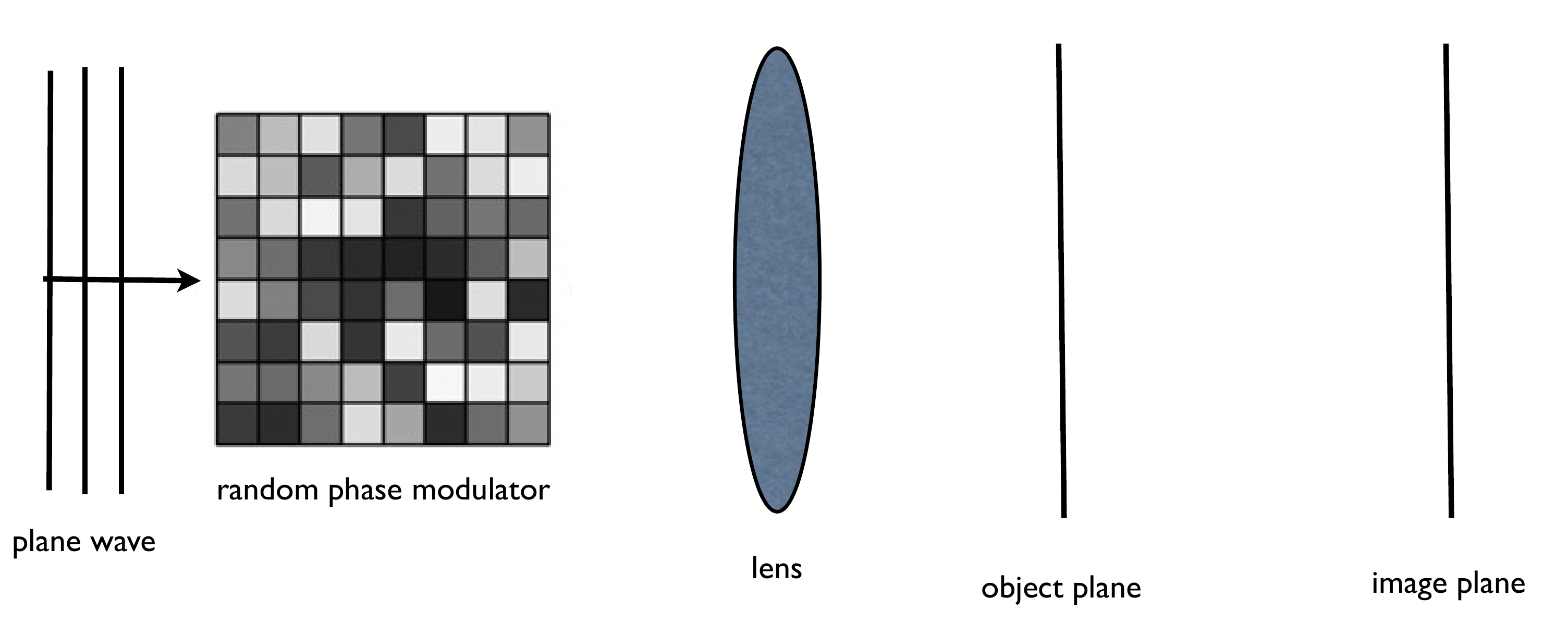}
\end{center}
\caption{The imaging geometry for point objects}
\label{geom}
\end{figure}

\commentout{
Assume we have a full control of
the source points in $\{(x,y,z): x, y\in [-L/2, L/2], z=z_1\}$ and write
the incident wave as
\beq
u^{\rm i}(\br)&=&\int_{-L/2}^{L/2} \int_{-L/2}^{L/2} \Gp(\br, (\xi,\eta, z_1)) f(\xi,\eta) d\xi  d\eta\nn\\
&=&{e^{\imath \om z_1}\over 4\pi z_1}\int_{-L/2}^{L/2} \int_{-L/2}^{L/2}    e^{\imath \om|x-\xi|^2/(2z_1)}e^{\imath \om|y-\eta|^2/(2z_1)}f(\xi,\eta)d\xi d\eta.\label{19}
\eeq

Let the source distribution $f$ 
be  a complex-valued, circularly symmetric Gaussian white-noise field of variance $\kappa^2$:
\beq
\IE \lt[f(\xi,\eta)f^*(\xi',\eta')\rt]&=&\kappa^2 \delta(\xi-\xi', \eta-\eta')\\
\IE \lt[f(\xi,\eta)f(\xi',\eta')\rt]&=&0,\quad \forall \xi,\xi',\eta,\eta'.
\eeq

Note that the transformation defined in (\ref{19}) is unitary
and hence $u^{\rm i}$ is also a complex-valued, circularly
symmetric Gaussian random field. Let us 
calculate its correlation  as follows.
\beq
{\IE[u^{\rm i}(x,y,0) u^{\rm i}(x', y', 0)]}=0,\quad
\forall x,x',y,y', 
\eeq
and 
\beq
\lefteqn{\IE[u^{\rm i}(x,y,0) u^{\rm i*}(x', y', 0)]}\label{corr}\\
&=& {\kappa^2\over 16 \pi^2 z_1^2}
 \int_{-L/2}^{L/2} \int_{-L/2}^{L/2}    e^{\imath \om|x-\xi|^2/(2z_1)}e^{\imath \om|y-\eta|^2/(2z_1)}  e^{-\imath \om|x'-\xi|^2/(2z_1)}\nn\\
 &&\hspace{5cm} \times e^{-\imath \om|y'-\eta|^2/(2z_1)}d\xi d\eta\nn\\
&=& {\kappa^2\over 16 \pi^2 z_1^2}e^{\imath \om (x^2-x^{'2})/(2z_1)} e^{\imath \om (y^2-y^{'2})/(2z_1)} \nn\\
&&\hspace{3cm}
\times \int_{-L/2}^{L/2} \int_{-L/2}^{L/2}    e^{\imath \om (x-x')\xi/z_1}    e^{\imath \om (y-y')\xi/z_1}d\xi d\eta\nn\\
&=& {\kappa^2\over 4 \pi^2\om^2(x-x')(y-y')}e^{\imath \om (x^2-x^{'2})/(2z_1)} e^{\imath \om (y^2-y^{'2})/(2z_1)}\nn\\
&&\hspace{4cm}
\times \sin{\lt(\om(x-x')L\over 2 z_1\rt)}\sin{\lt(\om(y-y')L\over 2z_1\rt)}.  \nn
 \eeq
 For $(x,y), (x',y')\in \cL$, $(x-x')/\ell, (y-y')/\ell \in \IZ$.
 Hence (\ref{corr}) vanishes  for $(x,y)\neq (x', y')$ 
 under the assumption 
 \beq
 \label{18}
 {\ell L\over \lambda z_1}\in \IN,\quad\lamb=2\pi/\om.
 \eeq
 Therefore under (\ref{18}) the random incident field
 takes on i.i.d. random values at grid points. 
}

 
 A main ingredient of the proposed approach is
{ random } illumination which  has recently been used extensively
for wavefront reconstruction and imaging \cite{Alm, Rice, Rom}.  Here we consider random {\em phase} modulation (RPM) which is a random perturbation of the phase of a wavefront while maintaining the amplitude of the near field beam almost constant. The advantage of phase modulation, compared to amplitude modulation, is the lossless energy transmission of an incident wavefront through the modulator. 
In optics RPM can be created by random phase plates, digital holograms 
or liquid crystal panels \cite{BWW, SW}.

 \section{Main results for point objects}\label{main}
 We assume that as a result of $p$ independent realizations of 
 random phase modulators
 the incident field at the grid points can be represented as $e^{\imath \theta_{kj}}, k=1,...,p, j=1,...,N$ where  $\theta_{kj}$ are i.i.d
 {\em uniform}  random variables in $[0,2\pi]$ (i.e. circularly symmetric). 
The information about  $ \theta_{kj}$ is incorporated in
the sensing matrix. 

 Let the scattered field $u^{\rm s}_k$ is measured and collected by $n$ sensors located at $\ba_l, l=1,...,n$. 
Let $X=(\tau_j)_1^N\in \IC^N$ be the object vector  and  $Y=(Y_i)=(u_k^{\rm s}(\ba_j)) \in \IC^{np}, i=(k-1)n+j, j=1,...,n, $ the data vector.

 After proper normalization, the data vector $Y$ can be written as (\ref{u1}) 
with the sensing matrix  $\bPhi$ being  the column-normalized
version of $[\Gp(\ba_l, \br_j)u^{\rm i}_k(\br_j)]$, i.e. 
 \beq
 \label{1.10}
 \phi_{i j} &=& {1\over \sqrt{np}} e^{\imath \om|x_j-\xi_l|^2/(2z_0)}e^{\imath \om|y_j-\eta_l|^2/(2z_0)}e^{\imath \theta_{kj}},\quad  i=(k-1)n+l.
 \eeq
 Here $m=np$ is the number of data.
   
 \commentout{ where  the error term $E$
includes the background noise and 
 the mismatch between the
exact Green function 
(\ref{green}) and the paraxial Green function (\ref{8-3}). 
Note that the latter part $E^p=(E^p_i)$ of the errors 
\[
E^p_i={4\pi z_0 e^{-\imath \om z_0}
\over \sqrt{n}} \sum_{j=1}^N \lt(G(\ba_i,\br_j)-\Gp(\ba_i,\br_j)\rt) u^{\rm i}(\br_j) X_j, \quad \forall i=1,...,n,
\]
depend explicitly on the unknown vector $X$. 
}

 Our first result is  a performance guarantee
 for the OST with random illumination in the diffraction-limited
 case satisfying the Rayleigh resolution criterion.
\begin{theorem}\label{thm1} Let 
\beq
\label{mesh}
N^2\leq {\delta\over 2} e^{K^2/2},\quad \delta, K>0. 
\eeq
Suppose
\beq
\label{27}
{np} \geq 40 K^4 \log{N}
\eeq
and 
\beq
\label{apert}
{A \ell \over \lambda  z_0}=1. 
\eeq
Then with probability at least  
\beq
\label{136}
1-2\delta-4t\sqrt{2\over \pi}-{4\over \sqrt{p}}-{4\over \sqrt{n}}-8N e^{-12 t^2\sqrt{N-1\over np}},\quad\forall t>0
\eeq
 OST with
the threshold (\ref{threshold})
can localize exactly $s$ objects satisfying (\ref{300})-(\ref{301}). 
\end{theorem}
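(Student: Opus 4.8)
The plan is to obtain Theorem~\ref{thm1} by verifying, for the random sensing matrix $\bPhi$ of~(\ref{1.10}), the hypotheses of the OST guarantee of Proposition~\ref{prop2}. Three of those hypotheses come for free: the sparsity bound~(\ref{300}) and the dynamic‑range bound~(\ref{301}) are assumed in the theorem, and by construction every entry of $\bPhi$ has modulus $(np)^{-1/2}$, so each column has exactly unit norm and one may take $\|X\|_2=1$ and use the threshold~(\ref{threshold}). Everything therefore reduces to showing that, with probability at least~(\ref{136}) over the phases $\{\theta_{kj}\}$ and the sensor positions $\{(\xi_l,\eta_l)\}$, $\bPhi$ satisfies the worst‑case coherence bound~(\ref{307}) and the average coherence bound~(\ref{308}); conditionally on that event Proposition~\ref{prop2} delivers exact localization, up to its intrinsic $O(1/N)$ failure probability which is of smaller order than the terms appearing in~(\ref{136}).

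The key structural fact is that, because the data index $i=(k-1)n+l$ decomposes into an illumination index $k$ and a sensor index $l$, the column inner products of $\bPhi$ \emph{factor}: expanding the paraxial phases in~(\ref{1.10}) gives
\[
\Phi_{j'}^*\Phi_j=\frac{1}{np}\,e^{\imath\om Q_{jj'}}\Bigl(\sum_{l=1}^{n}e^{-\imath\om(\xi_l(x_j-x_{j'})+\eta_l(y_j-y_{j'}))/z_0}\Bigr)\Bigl(\sum_{k=1}^{p}e^{\imath(\theta_{kj}-\theta_{kj'})}\Bigr)
\]
with $Q_{jj'}$ a deterministic phase. Under the diffraction‑limited choice~(\ref{apert}) together with the lattice~(\ref{pix}) and array~(\ref{arr}), each summand of the sensor sum is an \emph{exact} $\sqrt N$‑th root of unity with mean zero whenever $\br_j\neq\br_{j'}$, and each summand of the illumination sum is mean zero for $j\neq j'$; the two sums are independent of each other. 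Thus $np\,\Phi_{j'}^*\Phi_j$ is a product of two independent sums of centered unit‑modulus variables, of lengths $n$ and $p$. A Hoeffding/Azuma tail bound on each factor shows that, outside a sub‑Gaussian‑probability event, the factors are $O(K\sqrt n)$ and $O(K\sqrt p)$, hence $|\Phi_{j'}^*\Phi_j|=O(K^2/\sqrt{np})$; a union bound over the fewer than $N^2/2$ pairs, governed by the mesh condition~(\ref{mesh}), keeps the total failure probability at $\le 2\delta$, and the data requirement~(\ref{27}) sharpens the bound to $\mu(\bPhi)\le 1/\sqrt{10\log N}$, i.e.~(\ref{307}). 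This is precisely why~(\ref{27}) carries $K^4$: the coherence is controlled by a product of \emph{two} random walks, each costing a factor $K$.

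For the average coherence one writes $\sum_{j\neq j'}\Phi_{j'}^*\Phi_j=\Phi_{j'}^*\Sigma-1$ with $\Sigma:=\sum_{j=1}^{N}\Phi_j$. The point is that $\Sigma$ is a sum over all $N$ grid points of terms carrying the i.i.d.\ circular phases $e^{\imath\theta_{kj}}$, so it concentrates at scale $\sqrt N/\sqrt{np}$ rather than at the trivial scale $N/\sqrt{np}$; inserting this into $\Phi_{j'}^*\Sigma$ and running a further concentration step (conditioning on the sensor positions and on all phases other than the $j'$‑th) gives $|\sum_{j\neq j'}\Phi_{j'}^*\Phi_j|=O(\sqrt N/\sqrt{np})$, hence $\nu(\bPhi)=O(1/\sqrt{N\,np})$. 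Combined with the Welch‑type lower bound $\mu(\bPhi)\gtrsim 1/\sqrt{np}$ recalled in the introduction, this yields~(\ref{308}) in the underdetermined regime. The auxiliary parameter $t$, the sub‑exponential tail $8Ne^{-12t^2\sqrt{(N-1)/np}}$ (heavier because the estimated quantity is bilinear in the randomness), and the losses $4t\sqrt{2/\pi}+4/\sqrt p+4/\sqrt n$ in~(\ref{136}) all originate in this step — the $1/\sqrt p$ and $1/\sqrt n$ from first‑moment/Markov estimates on the two normalized random walks together with the union bound over the $N$ choices of row index $j'$.

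With~(\ref{307}) and~(\ref{308}) secured, Proposition~\ref{prop2} applies with the threshold~(\ref{threshold}) and the hypotheses~(\ref{300})--(\ref{301}), and a union bound over the two exceptional events produces~(\ref{136}). I expect the average‑coherence estimate to be the main obstacle: unlike the worst‑case coherence, which is a single transparent product of two random walks, $\nu(\bPhi)$ is a \emph{normalized} sum over $\sim N$ weakly dependent random‑walk‑sized terms, and the argument must recover the extra factor $\sqrt N$ of cancellation while spending only a polynomially small probability in the union bound over all $N$ row indices — this is what dictates the somewhat intricate form of~(\ref{136}) and the need for the free parameter $t$.
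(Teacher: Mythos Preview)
Your overall strategy---reduce to Proposition~\ref{prop2} by verifying (\ref{307}) and (\ref{308}) for the random sensing matrix---is exactly right, and your treatment of the worst-case coherence \emph{upper} bound matches the paper's Lemma~\ref{lem1}: the factored form of $\Phi_{j'}^*\Phi_j$, Hoeffding on each factor, union bound over $O(N^2)$ pairs controlled by~(\ref{mesh}), and~(\ref{27}) to close~(\ref{307}).

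The gap is in how you verify~(\ref{308}). That condition is two-sided: it needs both an \emph{upper} bound on $\nu(\bPhi)$ and a \emph{lower} bound on $\mu(\bPhi)$. You propose the deterministic Welch bound for the latter. But the specific form of~(\ref{136})---the free parameter $t$, the losses $4t\sqrt{2/\pi}$, $4/\sqrt{p}$, $4/\sqrt{n}$---does \emph{not} come from the average-coherence estimate or from any ``first-moment/Markov'' step on $\nu$. It comes from a \emph{probabilistic lower bound} on $\mu$: the paper fixes a single pair $(j,j')$, applies the Berry--Esseen theorem to each of the two centered random walks $S_p$ and $T_n$ in the factorization, and obtains
\[
\IP\!\left[\mu(\bPhi)\ge \frac{2t^2}{\sqrt{np}}\right]\ \ge\ \Bigl(1-2t\sqrt{\tfrac{2}{\pi}}-\tfrac{4}{\sqrt{p}}\Bigr)\Bigl(1-2t\sqrt{\tfrac{2}{\pi}}-\tfrac{4}{\sqrt{n}}\Bigr),
\]
which is exactly where the $4/\sqrt{p}$ and $4/\sqrt{n}$ terms (the Berry--Esseen errors for walks of lengths $p$ and $n$) and the $t\sqrt{2/\pi}$ terms (the Gaussian anti-concentration near zero) originate. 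This probabilistic lower bound is then matched against the upper bound $\nu(\bPhi)\le c/(np)$ (the paper's Lemma~\ref{lem2}, proved by a two-stage Hoeffding: first over the $p(N-1)$ independent pairs $(k,j)$ with sensors frozen, then over the $n$ sensors) by choosing $c=24t^2$, which produces the tail $8Ne^{-12t^2\sqrt{(N-1)/(np)}}$.

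Using the Welch bound instead would give a deterministic $\mu\gtrsim 1/\sqrt{np}$ and hence a probability estimate of the shape $1-2\delta-8Ne^{-c_0\sqrt{(N-1)/(np)}}$ with a \emph{fixed} constant $c_0$, no free parameter $t$, and none of the $4/\sqrt{p}$, $4/\sqrt{n}$ terms. That is a legitimate (and in some regimes cleaner) bound, but it is not~(\ref{136}); so your plan as written does not establish the theorem as stated, and your attribution of the $t$-dependent terms to the $\nu$ estimate is incorrect.
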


\begin{remark}
\label{rmk6}
The constants $\delta$ and $K$ in (\ref{mesh}) are controlling parameters. $\delta$ can be  adjusted to control
the lower bound (\ref{136})  for success probability
and then $K$ can be adjusted to control the 
number of grid points in the computation domain
and the number of data. 

For example, suppose $\delta=1\%$ is acceptable. 
Then (\ref{mesh}) with $K=10$ implies a computation domain
of about $0.1 e^{25}/\sqrt{2}$ grid points.

\end{remark}
\commentout{
\begin{remark}\label{rmk1}
For the lower bound (\ref{136})  to approach unity, it is necessary to 
let $N, p, n\to\infty$ (and $t,\delta\to 0$ accordingly)  in a suitable manner (e.g., $ N\gg np$). 

\end{remark}
}

\begin{proof}
The proof of Theorem \ref{thm1} relies on Proposition \ref{prop2}
and 
the following three lemmas.

\begin{lemma}\label{lem1}
Under  (\ref{mesh}),  the worst case coherence satisfies
\beq
\label{321}
\IP\lt\{\mu(\bPhi)\leq {a K\sqrt{2}\over \sqrt{p} } + {2 K^2\over \sqrt{np}}\rt\}\geq 1-2\delta 
\eeq
where $a$ is given by (\ref{135}).

In particular,  if
(\ref{apert}) holds 
then $a=0$ and (\ref{321}) becomes 
\beq
\label{30}
\IP\lt\{\mu(\bPhi)\leq {2} K^2/\sqrt{np}\rt\}\geq 1-2\delta. 
\eeq
\end{lemma}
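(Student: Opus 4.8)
The plan is to exploit an exact factorization of the off--diagonal Gram entries into a \emph{sensor} factor and a \emph{probe} factor that are stochastically independent. Fix $j\ne j'$ and write $i=(k-1)n+l$ as in (\ref{1.10}). From the identity $|x_j-\xi_l|^2-|x_{j'}-\xi_l|^2=x_j^2-x_{j'}^2-2\xi_l(x_j-x_{j'})$ and its analogue in $y$, the purely quadratic terms in $\xi_l,\eta_l$ contribute a deterministic unimodular factor $e^{\imath\psi_{jj'}}$ depending on neither $k$ nor $l$, while the remaining exponent of $\overline{\phi_{ij}}\phi_{ij'}$ is $\tfrac{\om}{z_0}[\xi_l(x_j-x_{j'})+\eta_l(y_j-y_{j'})]+(\theta_{kj'}-\theta_{kj})$. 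Summing over $i$ and factoring the double sum,
\[
\Phi_j^*\Phi_{j'}=\frac{e^{\imath\psi_{jj'}}}{np}\,\Big(\sum_{l=1}^n e^{\imath\alpha_l}\Big)\Big(\sum_{k=1}^p e^{\imath(\theta_{kj'}-\theta_{kj})}\Big),\qquad \alpha_l:=\frac{\om}{z_0}\big[\xi_l(x_j-x_{j'})+\eta_l(y_j-y_{j'})\big].
\]
Since $\{\xi_l,\eta_l\}_{l=1}^{n}$ and $\{\theta_{kj},\theta_{kj'}\}_{k=1}^{p}$ are independent families of i.i.d.\ variables, the sensor sum $S^{\mathrm{se}}:=\sum_l e^{\imath\alpha_l}$ and the probe sum $S^{\mathrm{pr}}:=\sum_k e^{\imath(\theta_{kj'}-\theta_{kj})}$ are independent, each a sum of i.i.d.\ unimodular terms, and $|\Phi_j^*\Phi_{j'}|=\tfrac1n|S^{\mathrm{se}}|\cdot\tfrac1p|S^{\mathrm{pr}}|$; so it suffices to control each factor separately.

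For each of $S^{\mathrm{se}},S^{\mathrm{pr}}$, call it $S$, I would write $S=\mathrm{Re}\,S+\imath\,\mathrm{Im}\,S$, use the inclusion $\{|S|>\sqrt2\,t\}\subseteq\{|\mathrm{Re}\,S|>t\}\cup\{|\mathrm{Im}\,S|>t\}$, and apply Hoeffding's inequality to the real and imaginary parts, which are sums of independent variables ranging over intervals of length $2$. For $S^{\mathrm{pr}}$: $\theta_{kj},\theta_{kj'}$ are independent and uniform on $[0,2\pi]$, so each summand is mean zero, and Hoeffding with $t=\sqrt p\,K$ bounds each of the two tails by $2e^{-K^2/2}$, whence $\IP\{\,|S^{\mathrm{pr}}|>\sqrt{2p}\,K\,\}\le 4e^{-K^2/2}$, i.e.\ $\tfrac1p|S^{\mathrm{pr}}|\le\sqrt2\,K/\sqrt p$ off an event of probability $\le4e^{-K^2/2}$. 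For $S^{\mathrm{se}}$: the summands are i.i.d.\ in $l$ with common mean $\mu=\IE e^{\imath\alpha_l}$, $|\mu|\le a$, where $a$ is the quantity of (\ref{135}), a uniform bound obtained from the finite geometric sums over the sensor--coordinate set (\ref{arr}). Subtracting $\mu$ does not change the lengths of the ranges of $\mathrm{Re}\,e^{\imath\alpha_l}$ and $\mathrm{Im}\,e^{\imath\alpha_l}$, so the bias costs nothing in the exponent, and the same computation gives $\IP\{\,|S^{\mathrm{se}}-n\mu|>\sqrt{2n}\,K\,\}\le4e^{-K^2/2}$, hence $\tfrac1n|S^{\mathrm{se}}|\le a+\sqrt2\,K/\sqrt n$ off an event of probability $\le4e^{-K^2/2}$. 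On the intersection of the two good events,
\[
|\Phi_j^*\Phi_{j'}|\le\Big(a+\frac{\sqrt2\,K}{\sqrt n}\Big)\frac{\sqrt2\,K}{\sqrt p}=\frac{a\sqrt2\,K}{\sqrt p}+\frac{2K^2}{\sqrt{np}}.
\]

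To finish, a union bound over the $N(N-1)/2<N^2/2$ pairs $j\ne j'$ and the two events per pair gives total failure probability at most $\tfrac{N^2}{2}\cdot2\cdot4e^{-K^2/2}=4N^2e^{-K^2/2}$, which is $\le2\delta$ precisely by the mesh condition (\ref{mesh}); this proves (\ref{321}). For the special case, when (\ref{apert}) holds one has $\om A\ell/z_0=2\pi$, so $\alpha_l=\frac{2\pi}{\sqrt N}(q_l m+r_l m')$ with $q_l,r_l$ uniform on $\{1,\dots,\sqrt N\}$, $m=(x_j-x_{j'})/\ell$, $m'=(y_j-y_{j'})/\ell$; by (\ref{pix}) the integers $m,m'$ have modulus $<\sqrt N$ and are not both zero, and $\frac1{\sqrt N}\sum_{q=1}^{\sqrt N}e^{2\pi\imath qm/\sqrt N}=0$ whenever $\sqrt N\nmid m$, so $\mu=0$, i.e.\ $a=0$, and (\ref{321}) collapses to (\ref{30}).

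The conceptual core is the exact separation of $\Phi_j^*\Phi_{j'}$ into stochastically independent sensor and probe factors; once that is in hand, the only delicate point is constant--chasing. One must combine the $\ell^2$ inclusion $\{|S|>\sqrt2\,t\}\subseteq\{|\mathrm{Re}\,S|>t\}\cup\{|\mathrm{Im}\,S|>t\}$ with the range--length--$2$ form of Hoeffding to land the exponent on $K^2/2$, so that the $\sim N^2$--term union bound is absorbed by (\ref{mesh}); and one must keep $S^{\mathrm{se}}$ uncentered in the final product so that the bias $a$ enters only additively rather than inflating the fluctuation term $2K^2/\sqrt{np}$.
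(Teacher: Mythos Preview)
Your proof is correct and follows essentially the same route as the paper's: factor the off-diagonal Gram entry $\Phi_{j}^*\Phi_{j'}$ into independent sensor and probe sums, apply Hoeffding (via the real/imaginary split) to each to get $\tfrac{1}{p}|S^{\mathrm{pr}}|\le K\sqrt{2/p}$ and $\tfrac{1}{n}|S^{\mathrm{se}}|\le a+K\sqrt{2/n}$ with failure $4e^{-K^{2}/2}$ apiece, multiply, and absorb the union bound over pairs into the mesh condition $N^{2}\le(\delta/2)e^{K^{2}/2}$. The only cosmetic difference is bookkeeping: the paper union-bounds each factor over all pairs separately (two events of probability $\le\delta$ each), whereas you union-bound jointly ($\tfrac{N^{2}}{2}\cdot 2\cdot 4e^{-K^{2}/2}\le 2\delta$); both land on $1-2\delta$. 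Your explicit verification that $a=0$ under (\ref{apert}) via the finite geometric sum over the discrete sensor grid (\ref{arr}) is exactly what the paper's ``simple calculation'' alludes to.
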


The proof of Lemma \ref{lem1} is given in Section \ref{sec:thm3}. 
The utility of estimate (\ref{321})  lies in the situation where both
the aperture and the sensor number are limited but
the number of probe waves is exceedingly large (see Remark \ref{rmk3}).  For the proof of Theorem \ref{thm1} we need the estimate (\ref{30}).

\begin{lemma}
\label{lem:lb}Under the assumption (\ref{apert}), 
\beq
\IP\lt[\mu(\bA) \geq  {2t_1t_2\over\sqrt{{np}}} \rt] \geq \lt(1-2t_1\sqrt{2\over \pi} -{4\over \sqrt{p}}\rt)\lt(1-2t_2\sqrt{2\over \pi} -{4\over \sqrt{n}}\rt). \label{32}
\eeq
\end{lemma}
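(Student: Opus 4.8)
The plan is to exhibit a single pair of distinct columns $\Phi_i,\Phi_j$ whose inner product is large with the stated probability, since $\mu(\bA)$ is the maximum over all such pairs and hence dominates any particular one. Under assumption (\ref{apert}) the aperture is tuned so that the deterministic paraxial phase factors at two neighboring grid points differ by an integer multiple of $2\pi$ in each coordinate; consequently, for a suitable choice of two horizontally (or vertically) adjacent lattice points $\br_i,\br_j$, the quadratic-phase contribution to $\Phi_j^*\Phi_i$ collapses and one is left with
\ben
\Phi_j^*\Phi_i=\frac{1}{np}\sum_{k=1}^p\sum_{l=1}^n e^{\imath(\theta_{kj}-\theta_{ki})}e^{\imath\om(\xi_l(x_i-x_j)+\eta_l(y_i-y_j))/z_0},
\een
which, after using (\ref{apert}) to turn the $\xi_l,\eta_l$-dependent exponentials into $\pm1$ or into terms summing over a full period, reduces to a product of a sum over the $p$ independent phase differences and a sum over the $n$ independent sensor positions. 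The first step, then, is to make this algebraic reduction precise and isolate the two independent sums — call them $S_p=\sum_k \zeta_k$ with $\zeta_k=e^{\imath(\theta_{kj}-\theta_{ki})}$ i.i.d.\ and circularly symmetric on the unit circle, and $S_n=\sum_l \chi_l$ with $\chi_l$ i.i.d.\ built from the uniform sensor coordinates — so that $|\Phi_j^*\Phi_i|=\frac{1}{np}|S_p|\,|S_n|$.

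The second step is a lower-tail bound for each of $|S_p|$ and $|S_n|$: I want to show $\IP(|S_p|\ge t_1\sqrt{2p/\pi}\,)\ge 1-2t_1\sqrt{2/\pi}-4/\sqrt p$, and likewise for $|S_n|$ with $t_2,n$. The natural route is a second-moment / anti-concentration argument. Each $S$ is a sum of i.i.d.\ mean-zero (after the reduction, circularly symmetric) unit-modulus complex variables, so $\IE|S_p|^2=p$ and, by a fourth-moment computation, $\IE|S_p|^4=2p^2-p\le 2p^2$; the Paley–Zygmund inequality then gives a constant-probability lower bound on $|S_p|^2$ of order $p$. To get the precise constants in (\ref{32}) I would instead compare with a Gaussian: the real and imaginary parts of $S_p/\sqrt{p}$ are asymptotically $\cN(0,1/2)$ each, so $|S_p|/\sqrt{p}$ is approximately Rayleigh, for which $\IP(|S_p|\le t_1\sqrt{2p/\pi})\approx 1-e^{-\pi t_1^2/4}\le \pi t_1^2/4$ is small for small $t_1$; the $2t_1\sqrt{2/\pi}$ term is the linear bound on this small-ball probability and the $4/\sqrt p$ term is the Berry–Esseen correction to the CLT for $S_p/\sqrt p$. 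I would cite or reproduce the quantitative CLT bound used elsewhere in the paper for exactly this kind of sum.

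The third step is simply to combine: $S_p$ depends only on the i.i.d.\ phases $\{\theta_{kj},\theta_{ki}\}$ and $S_n$ only on the i.i.d.\ sensor positions $\{\xi_l,\eta_l\}$, so the two events are independent, and
\ben
\IP\Big[|\Phi_j^*\Phi_i|\ge \frac{2t_1t_2}{\sqrt{np}}\Big]
=\IP\Big[\tfrac{1}{np}|S_p||S_n|\ge \tfrac{2t_1t_2}{\sqrt{np}}\Big]
\ge \IP\Big[|S_p|\ge t_1\sqrt{\tfrac{2p}{\pi}}\Big]\,\IP\Big[|S_n|\ge t_2\sqrt{\tfrac{2n}{\pi}}\Big],
\een
since $\sqrt{2p/\pi}\cdot\sqrt{2n/\pi}=\frac{2}{\pi}\sqrt{np}\ge\frac{2}{\,}\cdot\frac{\sqrt{np}}{?}$ — here one checks the numerical constant $2/\pi$ against the factor $2$ in the claim, which works because the desired threshold $2t_1t_2/\sqrt{np}$ is smaller than $(2/\pi)t_1t_2\sqrt{np}/(np)$ is not quite it; more carefully, I would choose the normalizations in the two small-ball statements so that their product is exactly the $2t_1t_2/\sqrt{np}$ appearing in (\ref{32}), absorbing the $\pi$'s into the definition of the Rayleigh threshold. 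Finally $\mu(\bA)\ge|\Phi_j^*\Phi_i|$ (columns are unit-norm) gives (\ref{32}).

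\textbf{Main obstacle.} The delicate point is the second step: getting the \emph{constants} right in the small-ball lower bounds, i.e.\ controlling $\IP(|S_p|\le \text{small})$ by a clean linear-in-$t_1$ expression plus an explicit $4/\sqrt p$ CLT error, rather than by an unspecified constant from Paley–Zygmund. This requires a quantitative (Berry–Esseen-type) central limit theorem for normalized sums of bounded i.i.d.\ complex variables together with the exact small-ball probability of the limiting Rayleigh law; matching the algebra of the $\xi_l,\eta_l$-sum to a genuinely i.i.d.\ structure under (\ref{apert}) (so that no residual correlations spoil the CLT) is the part that needs care. The reduction in step one and the independence in step three are routine by comparison.
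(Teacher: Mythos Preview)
Your approach is essentially the paper's: isolate one pair of columns so that $|\Phi_j^*\Phi_i|=\frac{1}{np}|S_p||T_n|$ with $S_p$ and $T_n$ independent sums of i.i.d.\ mean-zero unit-modulus variables, apply a Berry--Esseen bound to each to control the small-ball probability, and multiply the two complementary probabilities. The paper does exactly this in Section~\ref{sec5.2}.

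The one place where you make life harder than necessary is in going through the Rayleigh law for $|S_p|/\sqrt{p}$. The paper instead applies the one-dimensional Berry--Esseen theorem separately to the real and imaginary parts of $p^{-1/2}S_p$ (each has variance $1/2$ and absolute third moment $4/(3\pi)$, giving a Berry--Esseen error $\le 1/\sqrt{p}$), then bounds $|\Psi(t)-\Psi(-t)|\le t\sqrt{2/\pi}$ directly from the Gaussian density at zero. Summing the two coordinate bounds produces $\IP[p^{-1/2}|S_p|\le t\sqrt{2}]\le 2t\sqrt{2/\pi}+4/\sqrt{p}$, and the threshold product $t_1\sqrt{2p}\cdot t_2\sqrt{2n}=2t_1t_2\sqrt{np}$ falls out with no stray $\pi$. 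This coordinate-wise route avoids the normalization confusion you flagged in your third step and delivers the stated constants immediately; your Rayleigh/Paley--Zygmund detour would work but requires more bookkeeping to land on the same numbers. Also, your ``reduce the $\xi_l,\eta_l$-exponentials to $\pm1$'' step is unnecessary and not quite right: all that is used is that under (\ref{apert}) the summands in $T_n$ are i.i.d.\ with mean zero (cf.\ (\ref{20.2})), which is enough for Berry--Esseen.
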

 Lemma \ref{lem:lb} is an easy consequence of  the Berry-Esseen theorem and its proof is given in Section \ref{sec5.2}.  

\begin{lemma}\label{lem2} Let (\ref{apert}) hold true. 
Then for any $c>0$
\beq
\label{32'}
\IP\lt\{\nu(\bA)\leq 
{c\over {np}} \rt\}\geq 1-8Ne^{-{c\over 2}\sqrt{{N-1\over np}}}.
\eeq
\end{lemma}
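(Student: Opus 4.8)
The plan is to fix a column index $j'$, reduce the claim to a tail bound for the off-diagonal column sum $D_{j'}:=\sum_{j\neq j'}\Phi_{j'}^*\Phi_j$, and then union-bound over the $N$ choices of $j'$; this union bound supplies the factor $N$ in (\ref{32'}), since $\nu(\bA)\le c/(np)$ is exactly the event $\max_{j'}|D_{j'}|\le c(N-1)/(np)$. First I would write the inner products explicitly. Since the modulation factor $e^{\imath\theta_{kj}}$ in (\ref{1.10}) depends on the probe index $k$ and the object point $j$ but not on the sensor index $l$, while the paraxial factor depends on $l$ and $j$ but not on the phases, the $np$ summands of $\Phi_{j'}^*\Phi_j$ factorize and
\[
\Phi_{j'}^*\Phi_j \;=\; \frac1{np}\Bigl(\sum_{k=1}^{p}e^{\imath(\theta_{kj}-\theta_{kj'})}\Bigr)\Bigl(\sum_{l=1}^{n}g_l(j,j')\Bigr)\;=:\;\frac1{np}\,U_j\,h_{jj'},
\]
where each $g_l(j,j')$ has unit modulus and depends only on the $l$-th sensor position (and on $j,j'$), $U_j$ depends only on the phases, $h_{jj'}$ only on the sensor positions, and the two factors are independent. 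Thus $D_{j'}=\tfrac1{np}\sum_{j\neq j'}U_jh_{jj'}$, and it is enough to bound $\IP\{|\sum_{j\neq j'}U_jh_{jj'}|>c(N-1)\}$ for a fixed $j'$.

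Next I would extract the cancellation from the aperture condition (\ref{apert}). Writing the sensor coordinates as $\xi_l=q_lA/\sqrt N$, $\eta_l=\widetilde q_lA/\sqrt N$ with $q_l,\widetilde q_l$ i.i.d. uniform on $\{1,\dots,\sqrt N\}$ and the grid as in (\ref{pix}), the identity $A\ell/(\lambda z_0)=1$ turns the $l$-dependent part of the phase into a root of unity, $g_l(j,j')=C_{jj'}e^{-2\pi\imath(q_la_j+\widetilde q_lb_j)/\sqrt N}$, with $C_{jj'}$ a deterministic unit-modulus constant and $(a_j,b_j)\in\{-(\sqrt N-1),\dots,\sqrt N-1\}^2\setminus\{(0,0)\}$ when $j\ne j'$. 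Summing the geometric series gives $\IE g_l(j,j')=0$ for $j\ne j'$, whence $\IE h_{jj'}=0$, $\IE|h_{jj'}|^2=n$; likewise $\IE U_j=0$ (even conditionally on $\{\theta_{kj'}\}_k$) and $\IE|U_j|^2=p$; and, since distinct columns carry independent modulation phases, the cross terms vanish and $\IE|\sum_{j\ne j'}U_jh_{jj'}|^2=(N-1)np$ — the natural scale against which the threshold $c(N-1)$ is to be compared, the ratio being $\sim c\sqrt{(N-1)/(np)}$.

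Then I would get the concentration by conditioning to restore independence. Conditioning on the sensor positions and on the block $\{\theta_{kj'}\}_{k=1}^p$, the $h_{jj'}$ are constants and $U_jh_{jj'}=h_{jj'}\sum_k v_{kj}$ with $v_{kj}:=e^{\imath(\theta_{kj}-\theta_{kj'})}$ mean-zero, of modulus one, and independent over all $(j,k)$ with $j\ne j'$; hence $\sum_{j\ne j'}U_jh_{jj'}$ is, conditionally, a sum of $(N-1)p$ independent mean-zero complex variables, the $(j,k)$-th of modulus $|h_{jj'}|\le n$. A Bernstein-type bound applied to its real and imaginary parts — after a preliminary high-probability trimming of the ``range'' via $\max_j|U_j|\lesssim\sqrt{p\log N}$ (a sub-Gaussian estimate for each $U_j$ followed by a union bound over the $N$ grid points) and using $\sum_{j\ne j'}|h_{jj'}|^2\le(N-1)n^2$ for the variance proxy — produces, after integrating out the conditioning and union-bounding over $j'$, a tail of sub-exponential type $8N\exp(-\mathrm{const}\cdot c\sqrt{(N-1)/(np)})$, at least as strong as (\ref{32'}) once constants are tracked. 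A cleaner route to the stated constant is the moment method: estimate $\IE|\sum_{j\ne j'}U_jh_{jj'}|^{2q}$, using the independence of the modulation phases across columns together with the geometric-sum cancellation in the sensor index to annihilate all but $O(q!)$ of the pairings in the $2q$-fold expansion, obtain a bound of order $(2q)!\,((N-1)np)^q$, and optimize $q$ in Markov's inequality.

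The crux is the calibration in this last step. Each summand $U_jh_{jj'}$ is a product of two sub-Gaussian factors — a sum of $p$ unit modulators times a sum of $n$ unit modulators — so the sum of $N-1$ of them is sub-exponential, not sub-Gaussian, at the scale $\sqrt{(N-1)np}$; making the exponent come out proportional to $\sqrt{(N-1)/(np)}$ is exactly where the conditioning must be balanced against the truncation of $\max_j|U_j|$ (or, in the moment route, where the combinatorics of the surviving pairings must be counted carefully). By contrast the factorization, the root-of-unity cancellation, and the deterministic bounds on $|h_{jj'}|$ are routine.
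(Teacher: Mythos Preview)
Your factorization $\Phi_{j'}^*\Phi_j=\frac1{np}U_jh_{jj'}$ and the root-of-unity cancellation from (\ref{apert}) are correct, and the union bound over $j'$ is the right outer shell. The gap is in the concentration step: the single-stage Bernstein you describe does \emph{not} produce the exponent $\tfrac{c}{2}\sqrt{(N-1)/(np)}$. Conditioning on the sensors and on $\{\theta_{kj'}\}_k$, the sum $\sum_{j\ne j',k}h_{jj'}v_{kj}$ has conditional variance proxy $p\sum_{j\ne j'}|h_{jj'}|^2\le p(N-1)n^2$ and range $\max_j|h_{jj'}|\le n$, so Bernstein at level $t=c(N-1)$ yields an exponent of order $\min\{c^2(N-1)/(pn^2),\,c(N-1)/n\}$ --- neither branch equals $c\sqrt{(N-1)/(np)}$, and a short calculation (maximize $\tfrac{c}{2}\sqrt{(N-1)/(np)}-c^2(N-1)/(Cpn^2)$ over $c$) shows the resulting bound fails to imply (\ref{32'}) once $n$ is moderately large. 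Your proposed ``preliminary trimming of $\max_j|U_j|$'' is also misplaced: in the conditioning you set up, the $U_j$'s are precisely the random variables you are summing, so they cannot be trimmed beforehand; if instead you condition on the phases so that the $U_j$'s are frozen, then the $h_{jj'}$'s are random but \emph{not} independent across $j$ (they all depend on the same $n$ sensor draws), and the argument stalls.

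What the paper does is a genuine two-stage Hoeffding that exploits \emph{both} sources of randomness, and the square-root exponent is an artifact of balancing the two stages. Write $Z_{j'l}=\frac1{N-1}\sum_{k}\sum_{j\ne j'}\phi^*_{ij'}\phi_{ij}$ for $i=(k-1)n+l$, so that $(N-1)^{-1}D_{j'}=\sum_{l=1}^n Z_{j'l}$. Stage~1 fixes a single sensor position (there are only $N$ possibilities in $\cD^2$, which is where the discreteness of (\ref{arr}) enters) and applies Hoeffding over the $p(N-1)$ independent phase variables to get $|Z_{j'l}|\le c_1(n\sqrt{p(N-1)})^{-1}$ outside an event of probability $4e^{-c_1^2}$; a union over the $N$ possible sensor positions defines a phase event $\cA$ with $\IP(\cA^c)\le 4Ne^{-c_1^2}$. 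Stage~2 works on $\cA$: the $n$ sensor draws are i.i.d., each $Z_{j'l}$ is now bounded by $c_1(n\sqrt{p(N-1)})^{-1}$, and Hoeffding over $l$ gives $|\sum_l Z_{j'l}|\le c_2/\sqrt{np(N-1)}$ outside probability $4e^{-c_2^2/(2c_1)^2}$. The choice $c_2=2c_1^2$ with $c_2=c\sqrt{(N-1)/(np)}$ equalizes the two exponents at $\tfrac{c}{2}\sqrt{(N-1)/(np)}$, and the aperture condition (\ref{apert}) is used only at the very end to kill the centering $\IE_\cA Z_{j'l}$. The missing ingredient in your plan is this second Hoeffding over the random sensor positions, together with the finite union over $\cD^2$ that makes the intermediate event $\cA$ usable.
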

The proof of Lemma \ref{lem2} is given in Section \ref{sec5}. 

First of all, by the upper bound (\ref{321}) for the worst case coherence  and setting  $c_1=2K^2$ in (\ref{307})  then the first
inequality of
(\ref{307}) holds with probability at least $1-2\delta$.
The second inequality of (\ref{307}) follows from  (\ref{mesh}) and (\ref{27})  and holds with probability at least $1-2\delta$. 

Second, the lower bound (\ref{32}) for the worst case coherence, with $t_1=t_2=t$,  
and the upper bound (\ref{32'}),  with  $c=24t^2$,  for the average coherence 
imply that (\ref{308}) holds with probability at least
\[
1-4t\sqrt{2\over \pi}-{4\over \sqrt{p}}-{4\over \sqrt{n}}-8N e^{-12 t^2\sqrt{N-1\over np}}.
\]
This completes the proof of Theorem \ref{thm1}.
\end{proof}

Our second result is a performance guarantee for the
Lasso with random illumination.  
\begin{theorem}Let (\ref{mesh}) hold and \label{thm2}
suppose
\beq\label{139}
{aK\sqrt{2}\over \sqrt{p}}+{2K^2\over \sqrt{np}}\leq {a_0\over \log{N}}
\eeq
where
\beq
\label{135}
a=\max_{j\neq j'}\lt|\IE \lt(e^{\imath \xi_l\om(x_{j'}-x_j)/z_0}\rt)
\IE\lt(e^{\imath \eta_l\om(y_{j'}-y_j)/z_0}\rt)\rt|.  
\eeq
Assume that the $s$ objects are real-valued and 
satisfy (\ref{305}) and 
\beq
\label{138}  s\leq {c_0np\over 2\log{N}}. 
\eeq
Then the Lasso estimate $\hat X$ with $\gamma=2\sqrt{2\log{N}}$ has the same support as $X$ with probability 
at least 
\beq
\label{137}
&&1-2\delta- \rho n(n-1){\pi \over 2}\sqrt{np-1\over N}-2n^2p(p-1) e^{-{N\over (np-1)^2}}\\
\nn&& \quad -2N^{-1}((2\pi \log N)^{-1/2}+sN^{-1})-\cO(N^{-2\log 2})). 
\eeq
 
\end{theorem}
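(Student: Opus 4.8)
The strategy mirrors the proof of Theorem~\ref{thm1}, with the Lasso guarantee Proposition~\ref{prop1} replacing the OST guarantee Proposition~\ref{prop2}. The plan is to show that the two structural hypotheses of Proposition~\ref{prop1}---the coherence bound (\ref{309}) and the sparsity/operator-norm bound (\ref{303})---are satisfied by the random sensing matrix $\bPhi$ of (\ref{1.10}) on an event whose probability is at least $1-2\delta-\rho\, n(n-1)\tfrac{\pi}{2}\sqrt{(np-1)/N}-2n^2p(p-1)e^{-N/(np-1)^2}$, and then to apply Proposition~\ref{prop1} conditionally on $\bPhi$.

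First, I would control the worst-case coherence. By Lemma~\ref{lem1}, under (\ref{mesh}) one has $\mu(\bPhi)\leq aK\sqrt{2}/\sqrt{p}+2K^2/\sqrt{np}$ with probability at least $1-2\delta$, where $a$ is the quantity (\ref{135}). Unlike in Theorem~\ref{thm1}, the aperture condition (\ref{apert}) is not imposed here, so the term proportional to $a$ (which would vanish under (\ref{apert})) must be retained; the standing assumption (\ref{139}) is exactly what pushes the Lemma~\ref{lem1} bound below $a_0/\log N$, so that the coherence property (\ref{309}) holds on an event $\cE_1$ with $\IP(\cE_1)\geq 1-2\delta$. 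This is also where the superresolution mechanism enters: even when (\ref{apert}) fails badly and $a$ is of order unity, taking the number of probes $p$ large makes $aK\sqrt{2}/\sqrt{p}$ negligible, so (\ref{139}) can still be met.

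Second, I would invoke the operator-norm estimate of Section~\ref{sec:thm6}, namely $\|\bPhi\|_2^2=\cO(N/(np))$---say $\|\bPhi\|_2^2\leq 2N/(np)$---which holds on an event $\cE_2$ with $\IP(\cE_2)\geq 1-\rho\, n(n-1)\tfrac{\pi}{2}\sqrt{(np-1)/N}-2n^2p(p-1)e^{-N/(np-1)^2}$. On $\cE_2$ the sparsity requirement (\ref{303}), $s\leq c_0N/(\|\bPhi\|_2^2\log N)$, is implied by $s\leq c_0np/(2\log N)$, which is precisely the hypothesis (\ref{138}). The content of the Section~\ref{sec:thm6} estimate is concentration of the Gram matrix $\bPhi\bPhi^*$ about its mean $\tfrac{N}{np}\bI$: its diagonal is identically $N/(np)$; the within-probe off-diagonal blocks, which depend only on the sensor positions, are handled using the lattice structure and the near-distinctness of the random sensor locations, producing the Berry--Esseen type term $\rho\, n(n-1)\tfrac{\pi}{2}\sqrt{(np-1)/N}$ by a union bound over the $n(n-1)$ sensor pairs; and the cross-probe off-diagonal entries are each a sum of $N$ independent, mean-zero, unit-modulus terms (the independent phases $e^{\imath\theta_{kj}}$), controlled by Hoeffding's inequality and a union bound over the $\sim n^2p(p-1)$ such pairs, producing the term $2n^2p(p-1)e^{-N/(np-1)^2}$.

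Finally, I would assemble the pieces. On $\cE_1\cap\cE_2$ the matrix $\bPhi$ satisfies both (\ref{309}) and (\ref{303}), and $\IP(\cE_1\cap\cE_2)\geq 1-2\delta-\rho\, n(n-1)\tfrac{\pi}{2}\sqrt{(np-1)/N}-2n^2p(p-1)e^{-N/(np-1)^2}$ by the union bound. Conditioning on any realization of $\bPhi$ in this event, and using that $X$ is drawn from the generic $s$-sparse real-valued ensemble, that $E\sim\mathrm{CN}(0,\sigma^2\bI)$ independently of $\bPhi$, and the signal-strength hypothesis (\ref{305}), Proposition~\ref{prop1} applies with $\gamma=2\sqrt{2\log N}$ and gives $\hbox{\rm supp}(\hat X)=\hbox{\rm supp}(X)$ (in fact sign recovery) with conditional probability at least $1-2N^{-1}((2\pi\log N)^{-1/2}+sN^{-1})-\cO(N^{-2\log 2})$. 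A further union bound over the randomness of $\bPhi$ and that of $(X,E)$ then yields (\ref{137}). The one genuinely hard ingredient is the operator-norm bound of Section~\ref{sec:thm6}: everything else is bookkeeping built on the already-proved Lemma~\ref{lem1} and on Proposition~\ref{prop1}, and the only subtlety in the present argument is to keep the several independent sources of randomness---the phases $\theta_{kj}$, the sensor positions $(\xi_l,\eta_l)$, the object support, and the noise $E$---properly separated when combining their failure probabilities.
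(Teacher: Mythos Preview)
Your proposal is correct and follows exactly the paper's own proof: use Lemma~\ref{lem1} with hypothesis (\ref{139}) to secure the coherence condition (\ref{309}) on an event of probability $\geq 1-2\delta$, use the operator-norm bound of Lemma~\ref{lem3} (Section~\ref{sec:thm6}) together with (\ref{138}) to secure (\ref{303}), and then apply Proposition~\ref{prop1} and a union bound. One small inaccuracy in your sketch of Lemma~\ref{lem3}: the within-probe ($k=k'$) contribution $\rho\, n(n-1)\tfrac{\pi}{2}\sqrt{(np-1)/N}$ is not a Berry--Esseen term but comes from summing a geometric series in the grid index and bounding the probability that two sensor coordinates are nearly congruent modulo $\lambda z_0/\ell$; this does not affect your argument, since you correctly defer the proof of the operator-norm bound to Section~\ref{sec:thm6}.
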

\begin{remark}
\label{rmk1}

While it
requires  that $N\gg np$
for the bound (\ref{136}) to approach unity, 
it demands  a much stronger assumption  $N\gg \max\{pn^5, p^2n^2\}$
for the bound
(\ref{137}) to behave the same way. 
Numerical evidences
indicate the latter to be a pessimistic estimate. 

For the special case of single  sensor $n=1$, the  probability lower bound
(\ref{137}) is substantially improved and
requires $N\gg p^2$ to approach unity.
On the other hand, for (\ref{136}) to approach unity,
it is necessary that $n\to \infty$ (hence $n=1$ is not an option). 

\end{remark}
\begin{remark}
\label{rmk3}
The superresolution effect can occur when the number $p$  of random probes is large. Consider, for example, the case of  $n=1$ and hence
the aperture $A$ is essentially zero. Since $a\leq 1$, the condition
\[
{K\sqrt{2}+2K^2\over \sqrt{p}} \leq {a_0\over \log{N}}
\]
and
\[
s\leq {c_0p\over 2\log{N}}
\]
implies that the Lasso  with $\gamma=2\sqrt{2\log{N}}$ 
recovers exactly the support of $s$ objects with
probability at least that given by (\ref{137}). 

This superresolution effect should be compared to
that with  deterministic near-field illumination \cite{subwave}. 
\end{remark}

\begin{proof}
The proof of Theorem \ref{thm2} uses
Proposition \ref{prop1}, 
Lemma \ref{lem1} and the following operator-norm bound.
 
\begin{lemma}\label{lem3}
We have 
\beq
\label{202}
\IP\lt\{\|\bA\|^2_2< {2N\over np}\rt\}\geq 1-
\rho n(n-1){\pi \sqrt{np-1}\over 2\sqrt{N}}-2n^2p(p-1) e^{-{N\over (np-1)^2}}. 
\eeq

\end{lemma}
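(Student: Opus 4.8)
The plan is to estimate $\|\bPhi\|_2^2=\|\bPhi\bPhi^*\|_2$ through the $np\times np$ Gram matrix $\bPhi\bPhi^*$ of the rows of $\bPhi$ (the natural choice since $np\ll N$ here). Write $\sqrt{np}\,\phi_{(k,l),j}=g_{lj}e^{\imath\theta_{kj}}$, where $g_{lj}=e^{\imath\om|x_j-\xi_l|^2/(2z_0)}e^{\imath\om|y_j-\eta_l|^2/(2z_0)}$ is the unit-modulus paraxial kernel of (\ref{1.10}), $k=1,\dots,p$ indexes the probes and $l=1,\dots,n$ the sensors. Since every entry of $\bPhi$ has modulus $(np)^{-1/2}$, the diagonal of $\bPhi\bPhi^*$ equals $N/(np)$ exactly, so $\bPhi\bPhi^*=\tfrac{N}{np}\bI+\bH$ with $\bH$ Hermitian and zero on the diagonal; by Gershgorin's theorem together with $\bPhi\bPhi^*\succeq 0$,
\[
\|\bPhi\|_2^2=\lambda_{\max}(\bPhi\bPhi^*)\le \frac{N}{np}+\max_i\sum_{i'\neq i}|\bH_{ii'}|.
\]
Hence it suffices to show $\max_i\sum_{i'\neq i}|\bH_{ii'}|<N/(np)$ off an event whose probability is bounded by the two subtracted terms in (\ref{202}).

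I would split the off-diagonal entries of $\bH$ according to whether the two rows share a probe. For a same-probe pair ($k=k'$, $l\neq l'$) the random phases $e^{\imath\theta_{kj}}$ cancel, and expanding the quadratic phases and summing the resulting geometric series over the object lattice (\ref{pix}) leaves $\bH_{(k,l),(k,l')}=\tfrac1{np}e^{\imath\varphi_{ll'}}S_\xi S_\eta$, a product of two length-$\sqrt N$ Dirichlet-type sums in the sensor-position differences; in particular $|\bH_{(k,l),(k,l')}|\le \tfrac1{np}|\sin(\om\ell(\xi_l-\xi_{l'})/(2z_0))|^{-1}|\sin(\om\ell(\eta_l-\eta_{l'})/(2z_0))|^{-1}$. (Under (\ref{apert}) these vanish identically for distinct sensors, but Lemma \ref{lem3} feeds Theorem \ref{thm2} for general aperture, so I cannot rely on that.) For a cross-probe pair ($k\neq k'$) the entry is $\tfrac1{np}\sum_{j=1}^N g_{lj}\bar g_{l'j}e^{\imath(\theta_{kj}-\theta_{k'j})}$, a normalized sum of $N$ independent, mean-zero, unit-modulus variables, the $g$-factors being deterministic constants of modulus one (so the tail bounds below hold uniformly over the sensor placement).

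The cross-probe family is routine: a Hoeffding-type tail bound for bounded independent complex sums, plus a union bound over the at most $n^2p(p-1)$ such entries, shows that with probability at least $1-2n^2p(p-1)e^{-N/(np-1)^2}$ every cross-probe entry is $\le N/(np(np-1))$. The same-probe family is the real work. The Dirichlet sums $S_\xi,S_\eta$ blow up, up to size $\sqrt N$, exactly when two sensors nearly coincide, so one needs both sines bounded below by $\sqrt{(np-1)/N}$ to force $|\bH_{(k,l),(k,l')}|\le N/(np(np-1))$; since $\xi_l-\xi_{l'}$ and $\eta_l-\eta_{l'}$ are differences of i.i.d.\ uniform lattice variables, a Berry--Esseen estimate (as in Lemma \ref{lem:lb}) comparing their law to the continuous uniform bounds, for each fixed pair, the probability that either sine falls below that level by $\tfrac{\rho\pi}{2}\sqrt{(np-1)/N}$, and a union bound over the $\binom{n}{2}$ pairs produces the term $\rho\,n(n-1)\tfrac{\pi\sqrt{np-1}}{2\sqrt N}$.

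On the intersection of the two good events, each row of $\bH$ has at most $n-1$ same-probe and at most $(p-1)n$ cross-probe off-diagonal entries, so
\[
\sum_{i'\neq i}|\bH_{ii'}|\le \bigl((n-1)+(p-1)n\bigr)\frac{N}{np(np-1)}=\frac{N}{np},
\]
and the inequality is strict since the tail bounds can be taken with a little room to spare; hence $\|\bPhi\|_2^2<2N/(np)$, which is exactly what makes hypothesis (\ref{303}) of Proposition \ref{prop1} reduce to (\ref{138}). I expect the main obstacle to be the same-probe estimate: the two Dirichlet kernels interact in two dimensions on a discrete grid, so extracting the sharp $\sqrt{(np-1)/N}$ rate with the explicit Berry--Esseen constant $\rho$ requires care, whereas the cross-probe bound and the counting are easy by comparison.
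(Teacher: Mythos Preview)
Your strategy is exactly the paper's: pass to the row Gram matrix $\bPhi\bPhi^*=\tfrac{N}{np}\bI+\bH$, apply Gershgorin (equivalently, bound $\mu(\bPhi^*)<1/(np-1)$), and split the off-diagonal entries into the cross-probe case $k\neq k'$ (Hoeffding on a sum of $N$ independent unit-modulus terms, then union over at most $n^2p(p-1)$ entries, producing the $2n^2p(p-1)e^{-N/(np-1)^2}$ term) and the same-probe case $k=k'$, $l\neq l'$ (geometric series, product of two Dirichlet kernels in the sensor differences). The final counting $(n-1)+n(p-1)=np-1$ is exactly what closes the argument.

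The one place to correct is the same-probe small-ball estimate. Berry--Esseen is the wrong tool: it compares a normalized sum of \emph{many} i.i.d.\ summands to a Gaussian, whereas $\xi_l-\xi_{l'}$ is a difference of only two variables, so the Berry--Esseen error term would dominate. What the paper actually uses (and what you should use) is the elementary fact that if $\xi_l,\xi_{l'}$ are i.i.d.\ uniform on $\cD$, the law of $\ell(\xi_l-\xi_{l'})/(\lambda z_0)$ is a symmetric triangular distribution on $[-\rho^{-1},\rho^{-1}]$ with density bounded by $\rho$, where $\rho=\lambda z_0/(A\ell)$ is the aperture ratio---\emph{not} the Berry--Esseen third-moment constant (the paper overloads the symbol). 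From this, for each unordered pair $(l,l')$ the probability that the distance of $\ell(\xi_l-\xi_{l'})/(\lambda z_0)$ (or of the $\eta$-analogue) to $\IZ$ is at most $\beta$ is bounded by $2\rho\beta$; choosing $\beta=\tfrac{\pi}{2}\sqrt{(np-1)/N}$ and taking a union bound over the $n(n-1)/2$ pairs yields precisely $\rho\,n(n-1)\tfrac{\pi}{2}\sqrt{(np-1)/N}$. With this replacement your plan goes through verbatim, and the ``main obstacle'' you anticipate is in fact the easy part.
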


On one hand,  Lemma \ref{lem1} and (\ref{139}) imply that
(\ref{309}) holds with probability at least
$1-2\delta$. 

On the other hand, Lemma \ref{lem3} and  (\ref{138}) imply that
(\ref{303}) holds with probability at least given by the right hand side of  (\ref{202}).

Combining the two and using Proposition \ref{prop1} we obtain 
the desired statement of Theorem \ref{thm2}. 

\end{proof}
To further demonstrate the advantage of random illumination,
let us consider the imaging  set-up of {\em multistatic responses} (MR) which  consists of an array of  $n$ fixed   transceivers which
 are both sources and sensors (i.e. transceivers).   One by one, each transceiver of the array emits  an impulse  and the entire array of transceivers 
records the echo. Each transmitter-receiver pair gives rise to
a datum and there are altogether $n^2$ data forming a
data matrix called the multistatic response matrix. 
By the reciprocity of the wave equation, the MR matrix
is symmetric and hence has at most $n(n+1)/2$ degrees of freedom. 

Recalling  the coherence and operator norm bounds established
in \cite{crs-pt} and using Proposition \ref{prop1} as in the proof
of Theorem \ref{thm2} (below), we have
the following result \cite{crs-pt} analogous to Theorem \ref{thm2}.

\begin{proposition}
\label{prop4}
Let  the locations of the $n$ transceivers be  i.i.d. uniform
random variables in $[0,A]^2$. Let (\ref{27}) and (\ref{apert}) hold true. 

Suppose
\[
n\geq {K^2\log{N} \over a_0}
\]
and that the $s$ real-valued objects satisfy (\ref{305}) and
\beq
\label{162}
s\leq {c_0n(n+1)\over 4\log{N}}.
\eeq
Then the Lasso estimate $\hat X$ with $\gamma=2\sqrt{2\log{N}}$ has the same support as $X$  with probability at least 
\beq
\label{161}
1-2\sqrt{2\delta}-{\rho n^{5/2}(n+1)^{5/2}\over \pi 2^{5/2} N^{1/2}} -2N^{-1}((2\pi \log N)^{-1/2}+sN^{-1})-\cO(N^{-2\log 2})).
\eeq
\end{proposition}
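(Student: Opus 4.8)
The plan is to run the same three–step argument used for Theorem~\ref{thm2}, with the multistatic response (MR) sensing matrix in place of~(\ref{1.10}): verify the coherence hypothesis~(\ref{309}) and the sparsity hypothesis~(\ref{303}) of Proposition~\ref{prop1}, observe that the signal–strength hypothesis~(\ref{305}) is assumed outright, apply Proposition~\ref{prop1} with $\gamma=2\sqrt{2\log N}$, and finally collect the three failure probabilities by a union bound. The only inputs not already in hand are the worst–case coherence and operator–norm estimates for the MR matrix; both are the analogues of Lemmas~\ref{lem1} and~\ref{lem3} and are taken from \cite{crs-pt}. Concretely, writing the $(k,l)$ datum as $\sum_j\tau_j\,\Gp(\ba_k,\br_j)\Gp(\ba_l,\br_j)$ under the Born and paraxial approximations, the MR matrix has the ``squared'' structure $\phi_{(k,l),j}\propto e^{\imath\om(|x_j-\xi_k|^2+|y_j-\eta_k|^2+|x_j-\xi_l|^2+|y_j-\eta_l|^2)/(2z_0)}$, with only $n(n+1)/2$ distinct rows by reciprocity.

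\textbf{Step 1 (coherence).} First I would invoke the MR coherence bound from \cite{crs-pt}. After cancelling the deterministic $j,j'$–phase, $\bPhi_{j'}^*\bPhi_j$ is, up to the reciprocity symmetrization, built from the square of the single–transceiver average $\frac1n\sum_k e^{\imath\om\xi_k(x_{j'}-x_j)/z_0}e^{\imath\om\eta_k(y_{j'}-y_j)/z_0}$, and since (\ref{apert}) makes the characteristic functions in~(\ref{135}) vanish (so that average has mean zero for $j\neq j'$), a Hoeffding–type estimate for the average together with the mesh condition~(\ref{mesh}) controlling the union over the $\le N^2$ pairs $(j,j')$ yields $\mu(\bPhi)\le c\,K^2/n$ with probability at least $1-2\sqrt{2\delta}$, the square root on $\delta$ being an artifact of passing through the square of a single–transceiver average. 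The hypothesis $n\ge K^2\log N/a_0$ then upgrades this to~(\ref{309}).

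\textbf{Step 2 (operator norm and sparsity).} Next I would use the operator–norm bound from \cite{crs-pt}, the analogue of Lemma~\ref{lem3}: $\|\bPhi\|_2^2=\cO(N/n^2)$, more precisely $\|\bPhi\|_2^2<4N/(n(n+1))$ with probability at least $1-\rho n^{5/2}(n+1)^{5/2}/(\pi 2^{5/2}N^{1/2})$, the $n^{5/2}$ powers coming from the pairwise–dependent rows (reciprocity caps the effective data count at $n(n+1)/2$). Substituting into~(\ref{303}) shows that~(\ref{303}) is implied by~(\ref{162}) with the stated $c_0$. With~(\ref{305}) assumed, Proposition~\ref{prop1} then applies and gives $\supp(\hat X)=\supp(X)$ together with matching signs on the complement of the union of the two bad events above and the intrinsic failure set of Proposition~\ref{prop1} (probability $2N^{-1}((2\pi\log N)^{-1/2}+sN^{-1})+\cO(N^{-2\log 2})$). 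A union bound yields~(\ref{161}).

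\textbf{Main obstacle.} The delicate step is the operator–norm estimate for this quartic–phase matrix: its rows are not independent --- two rows sharing a transceiver are correlated --- so the usual $\varepsilon$–net plus scalar–concentration argument must be applied to the off–diagonal Gram blocks with the pairwise dependence handled explicitly, and it is precisely there that the lossy $n^{5/2}$ scaling, and hence the hidden requirement $N\gg n^{5}$ for~(\ref{161}) to approach unity, enters. By contrast, once the vanishing of the characteristic functions in~(\ref{135}) under~(\ref{apert}) is recorded, Step~1 is essentially a squared version of Lemma~\ref{lem1}.
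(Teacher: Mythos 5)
Your proposal follows exactly the route the paper takes: it cites the worst-case coherence and operator-norm bounds for the MR matrix from \cite{crs-pt}, uses them (with the hypotheses $n\geq K^2\log N/a_0$ and (\ref{162})) to verify conditions (\ref{309}) and (\ref{303}) of Proposition \ref{prop1}, and then combines the failure probabilities by a union bound, which is precisely the argument the paper sketches when it says the result follows ``as in the proof of Theorem \ref{thm2}.'' The extra details you supply about the quartic-phase structure, the origin of the $\sqrt{2\delta}$ term, and the $n^{5/2}$ scaling are plausible reconstructions of the cited bounds and do not change the fact that the approach coincides with the paper's.
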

\begin{remark}
\label{rmk2} The main drawback of the lower bound (\ref{161}) lies in 
the third term which requires $N\gg n^{10}$ to diminish.

More generally, one can consider the case of $p$ transmitters
and $n$ receivers, all randomly and independently distributed 
in $[0,A]^2$. Then an extension of  the bound (\ref{161}), which is omitted here, requires
$N\gg n^5 p^5$ (cf. \cite{cis-simo}). 

\end{remark}

From dimension count, a fair comparison with Theorem \ref{thm2} would
be to set $p=(n+1)/2$ and match their degrees of freedom,
i.e. $n(n+1)/2$.   However, Proposition \ref{prop4} does
not guarantee superresolution  when (\ref{apert}) is violated
preventing the worst case coherence from being sufficiently small due to the deterministic nature of the illumination.
Also, the probability lower bound (\ref{161}) has
a less favorable scaling behavior ($N\gg n^{10}$)  than (\ref{137}) for  $p=(n+1)/2$ ($N\gg n^6$, cf. Remark \ref{rmk1}). Indeed, the numerical simulations
show the recovery with random illumination has
a higher success rate than the MR recovery
(Figures \ref{fig1} and \ref{fig1'}). 

\commentout{
\begin{proposition}\cite{BCJ}
\label{prop:bcj}
Suppose 
\beq
\label{cp1}
\mu(\bPhi)&=&{C_1\over \sqrt{np}}\leq {1\over \sqrt{10\log{m}}}\\
\nu(\bPhi)&\leq &{12 \mu(\bPhi)\over \sqrt{np}}\label{cp2}
\eeq
where $C_1>0$ may depend on $\log m$. Assume $\|X\|_2=1$. 
Then the time reversal scheme with the threshold
\[
\rho=4\max{\{12 \mu \sqrt{2\log m}, \sigma\sqrt{\log m}\}} 
\]
satisfies 
\[
\IP\lt[ \hat S\neq \supp (X)\rt]\leq {9\over N}
\]
if
\[
np> s\cdot \max\lt[2\log m, {64\log m\over 
\hbox{\rm SNR}_{\rm min}}, {18432 C_1^2 \log m\over \hbox{\rm MAR}}\rt]
\]
where
\beq
\hbox{\rm MAR}&=&s\cdot\min_{X_i\neq 0} |X_i|^2\\
\hbox{\rm SNR}_{\rm min}&=& {\hbox{\rm MAR}\over 
\sigma^2}. 
\eeq
\end{proposition}
Since $\|X\|_2=1$, $\hbox{\rm MAR}$ has the meaning of
minimum-to-average ratio. 
}

  \section{Sparse extended objects}\label{sec:ext}
  
     \begin{figure}[t]
\begin{center}
\includegraphics[width=0.8\textwidth]{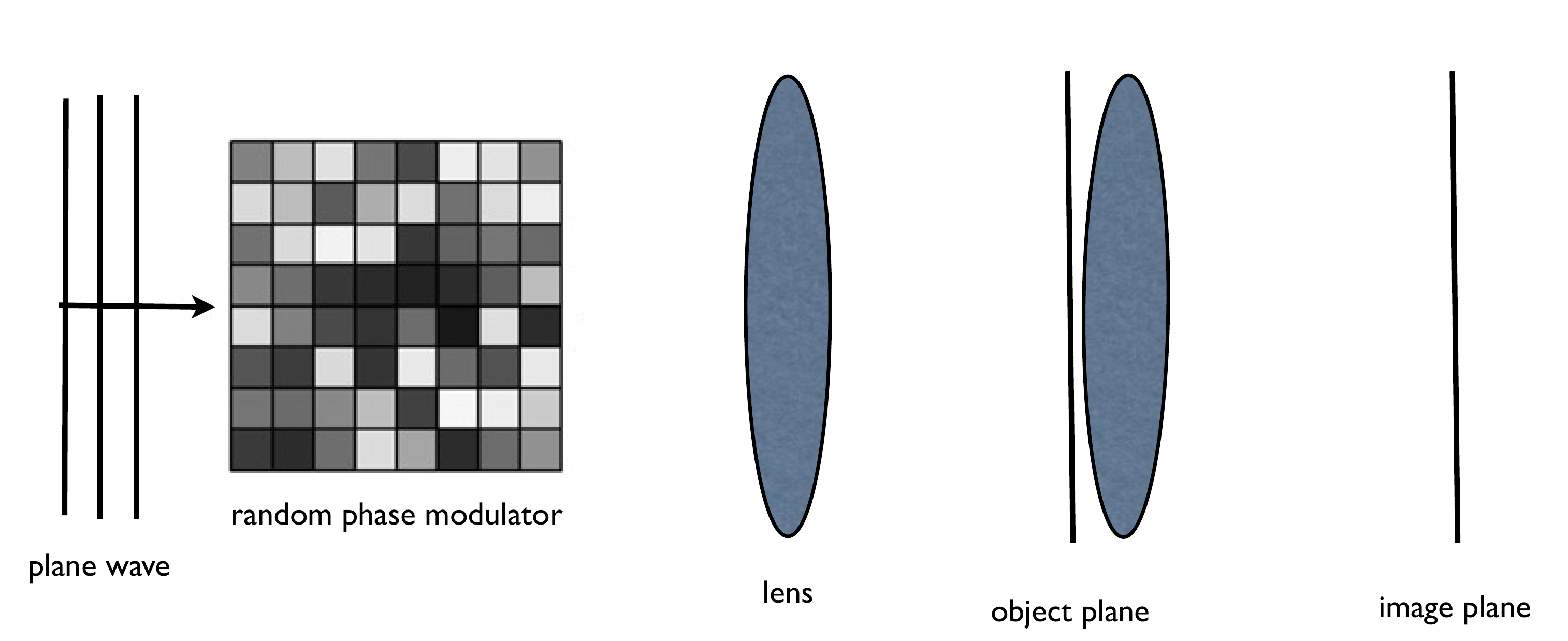}
\end{center}
\caption{The imaging geometry for extended objects}
\label{geom2}
\end{figure}

We extend the above results to the case of sparse extended objects here (Figure \ref{geom2}).  

We pixelate  the sparse extended  object with $N$ pixels $\square_j, j=1,...,N$  of size $\ell$ 
to create a  piecewise
constant approximation of the object. The centers of the  pixels
are identified as $\cL$ given in (\ref{pix}). Let $O(\br)$ be a the original  object and $O_\ell$ its $\ell$-discretization, i.e.
\[
O_\ell=\sum_{j=1}^N \II_{\square_j} O(\br_j)
\]
where $\II_{\square_j}$ is the indicator function of the pixel $\square_j$. 
We
reconstruct the discrete approximation $O_\ell$ by determining  the object function restricted to $\cL$, denoted
still by $X=(O(\br_j))$, by compressed sensing techniques.

Under the random illumination $u^{\rm i}_k$, pixel $\square_j$ now produces
a signal at the sensor $\ba_l$ of the form
\beqn
O(\br_j) \int_{\square_j} \Gp(\br, \ba_l) e^{-\imath \om x^2/(2z_0)}e^{-\imath \om y^2/(2z_0)}  u^{\rm i}_k(x,y) dx dy
\eeqn
where the quadratic phase factors are due to
the presence of a parabolic lens immediately after
the object plane (Figure \ref{geom2}). This lens is introduced here to simplify our analysis. In practice, 
the lens is not needed and should have
a negligible effect on performance.

As for the case of point objects we   assume
that as a result of the RPM $u^{\rm i}_k$ takes a constant value  $e^{\imath \theta_{kj}}$
in pixel $\square_j$ and that $\theta_{kj}$ are i.i.d. random
variables in $[0, 2\pi]$ as a result of random phase modulation.  

The total signal produced by $O_\ell$ and detected at sensor $\ba_l$ is
\beqn
&&\sum_{j} O(\br_j) e^{\imath \theta_{kj}}\int_{\square_j} \Gp(\br, \ba_l) e^{-\imath \om x^2/(2z_0)}e^{-\imath \om y^2/(2z_0)} dx dy\\
&=&\sum_{j} O(\br_j)  e^{\imath \theta_{kj}} e^{\imath \om \xi_l^2/(2z_0)}e^{\imath \om\eta_l^2/(2z_0)} e^{-\imath \om \xi_lx_j/z_0}e^{-\imath \om \eta_ly_j/z_0}   
\int_{\square}  e^{-\imath \om \xi_lx/z_0}e^{-\imath \om \eta_ly/z_0}  dx dy,
\eeqn
plus an error term $E_{kl}$ which includes the discretization error and
external noise where $\square$ denotes the square
of size $\ell$ centered at the origin. Since
\beq
\label{260}
\int_{\square}  e^{-\imath \om \xi_lx/z_0}e^{-\imath \om \eta_ly/z_0}  dx dy=
{2z_0\over \om \xi_l \ell} \sin{\lt(\om\xi_l\ell \over 2 z_0\rt)}
{2z_0\over \om \eta_l \ell} \sin{\lt(\om\eta_l\ell \over 2 z_0\rt)}\equiv g(\ba_l)
\eeq
independent of the pixel index, we can normalize the data by dividing  the
signal at sensor $l$ by this number as long as
\beq
\label{261}
{\xi_l\ell \over \lambda z_0},\,\,{\eta_l\ell \over \lambda z_0}< 1,\quad
\forall l=1,\ldots, n.
\eeq

Dividing  the data further by the phase factors
$ e^{\imath \om \xi_l^2/(2z_0)}e^{\imath \om\eta_l^2/(2z_0)}$ and $\sqrt{np}$,  
we write the signal model as (\ref{u1}) with the sensing matrix element 
\beq
\label{ext1}
\phi_{ij}
&=& {1\over \sqrt{np}} e^{\imath \theta_{kj}} e^{-\imath \om \xi_lx_j/z_0}e^{-\imath \om \eta_ly_j/z_0},\quad
i=(k-1) n+l. 
\eeq

The difference between the signals produced by $O$ and its discretization $O_\ell$ is the discretization error $E_{\rm disc}$. 
How small must $\ell$ be in order
for the $\ell_2$-norm of the discretization error  $E_{\rm disc}$ be
less than, say, $\ep$ after rewriting the signal model as (\ref{u1})? 
This can be estimated as follows. 

First, by the inequality $\|E_{\rm disc}\|_2\leq \|E_{\rm disc}\|_\infty \sqrt{np}$
it suffices to show $\|E_{\rm disc}\|_\infty \leq \ep/\sqrt{np} $.

Since  
\[
u^{\rm i}_k(\br)=\sum_{j=1}^N e^{\imath \theta_{kj}} \II_{\square_j}(\br)
\]
 is the illumination field, the uncontaminated signal detected by sensor $\ba_l$ in the absence of external noise in the signal model
 (\ref{u1})  is
\beq
\label{271}
 (\cF O)_i ={1\over  g(\ba_l)} \sum_{j=1}^N e^{\imath \theta_{kj}}   \int_{\square_j} O(x,y) e^{-\imath \om \xi_lx_j/z_0}e^{-\imath \om \eta_ly_j/z_0}  dx dy, 
 \eeq
 for $i=(k-1)n+l.$
 On the other hand we have
 \[
 (\cF O_\ell)_i ={1\over  g(\ba_l)} \sum_{j=1}^N e^{\imath \theta_{kj}}  O(\xi_j,y_j)  \int_{\square_j}e^{-\imath \om \xi_lx_j/z_0}e^{-\imath \om \eta_ly_j/z_0} dx dy
 \]
 for $ i=(k-1)n+l. $
By definition 
  \[
 E_{\rm disc}= \cF O -\cF O_{\ell}\in \IC^{pn}
 \]
 and hence 
 \beq
 \label{19}
 \|E_{\rm disc}\|_\infty&\leq& {\|O-O_{\ell}\|_{L^1}\over  \min_{l}| g(\ba_l)|}
\eeq
where $\|\cdot\|_{L^1}$ denotes
\[
\|f\|_{L^1}=\int |f(x,y)| dx dy, 
\]
i.e.  the norm of  the  function space $L^1$. 
Therefore we have the following statement.
\begin{lemma}
\label{lemm5}
If  
 \beq
 \label{270}
 {\|O-O_{\ell}\|_{L^1}}\leq {\ep\over \sqrt{np}}  \min_l |g(\ba_l)|
 \eeq
 then
\[
\|E_{\rm disc}\|_2 \leq \ep.
\]
\end{lemma}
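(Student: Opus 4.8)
The plan is to unwind the definition of $E_{\rm disc}$ componentwise and reduce the $\ell^2$-bound to an $\ell^\infty$-bound via the crude inequality $\|E_{\rm disc}\|_2\leq \sqrt{np}\,\|E_{\rm disc}\|_\infty$, since $E_{\rm disc}\in\IC^{np}$. This is the first step, and it is what makes the clean hypothesis (\ref{270}) suffice: it is enough to show $\|E_{\rm disc}\|_\infty\leq \ep/\sqrt{np}$.

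Next I would compare the two expressions (\ref{271}) and its companion for $\cF O_\ell$ entrywise. Fix $i=(k-1)n+l$. Subtracting, the phase factors $e^{\imath\theta_{kj}}$ and the oscillatory exponentials $e^{-\imath\om\xi_l x_j/z_0}e^{-\imath\om\eta_l y_j/z_0}$ pull out of the integral over $\square_j$ (they depend only on the pixel center $\br_j=(x_j,y_j)$, not on the integration variables), so
\ben
(E_{\rm disc})_i={1\over g(\ba_l)}\sum_{j=1}^N e^{\imath\theta_{kj}}e^{-\imath\om\xi_l x_j/z_0}e^{-\imath\om\eta_l y_j/z_0}\int_{\square_j}\bigl(O(x,y)-O(x_j,y_j)\bigr)\,dx\,dy.
\een
Taking absolute values, each $|e^{\imath\theta_{kj}}|=1$ and each oscillatory factor has modulus $1$, and the integrals over the disjoint pixels $\square_j$ combine to give $\sum_j\int_{\square_j}|O(x,y)-O(x_j,y_j)|\,dx\,dy$. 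The key observation is that this sum is exactly $\|O-O_\ell\|_{L^1}$, because $O_\ell=\sum_j\II_{\square_j}O(\br_j)$ is precisely the function that equals the constant $O(x_j,y_j)$ on each $\square_j$; hence on $\square_j$ the integrand of $\|O-O_\ell\|_{L^1}$ is $|O(x,y)-O(x_j,y_j)|$, and the pixels tile the object support. Dividing by $|g(\ba_l)|$ and bounding $|g(\ba_l)|\geq\min_l|g(\ba_l)|$ (which is positive under (\ref{261}), so the division in the normalization is legitimate) yields (\ref{19}), namely $\|E_{\rm disc}\|_\infty\leq\|O-O_\ell\|_{L^1}/\min_l|g(\ba_l)|$.

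The conclusion is then immediate: under (\ref{270}) we get $\|E_{\rm disc}\|_\infty\leq\ep/\sqrt{np}$, and combining with the first step gives $\|E_{\rm disc}\|_2\leq\sqrt{np}\cdot\ep/\sqrt{np}=\ep$. There is no real obstacle here; the statement is essentially a bookkeeping consequence of the definitions, and the only points requiring minor care are (a) verifying that all unimodular factors genuinely pull out of the pixel integrals, and (b) noting that (\ref{261}) guarantees $g(\ba_l)\neq 0$ so that the normalization used to arrive at (\ref{ext1}) — and hence this whole error analysis — makes sense. If one wanted a sharper estimate one could replace the $\sqrt{np}$ loss in the first step by a direct $\ell^2$ computation, but since the hypothesis (\ref{270}) is stated in terms of $\sqrt{np}$ already, the crude bound is exactly matched and nothing is lost.
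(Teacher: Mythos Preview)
Your proof is correct and follows essentially the same route as the paper: reduce the $\ell^2$-bound to an $\ell^\infty$-bound via $\|E_{\rm disc}\|_2\leq\sqrt{np}\,\|E_{\rm disc}\|_\infty$, then derive (\ref{19}) by bounding each entry of $E_{\rm disc}=\cF O-\cF O_\ell$ using that all phase factors are unimodular and that the pixelwise integrals sum to $\|O-O_\ell\|_{L^1}$. The paper presents this argument in the paragraphs immediately preceding the lemma, and your write-up simply spells out the triangle-inequality and pixel-tiling steps a bit more explicitly.
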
 
 \begin{remark}
 The presence of the factor $(np)^{-1/2}$ in (\ref{270}) is due to
 the transition from  $L^1$ function space norm to the discrete $\ell_2$-norm. 
 \end{remark}

Since the sensing matrix  (\ref{1.10}) for the point objects can be written as 
\[
 \bD_1 \bA \bD_2
 \]
 where $\bA$ is as (\ref{ext1}) and 
 \beqn 
 \bD_1&=&\hbox{diag} (e^{\imath \om \xi_l^2/(2z_0)} e^{\imath \om \eta_l^2/(2z_0)})\\
 \bD_2&=&\hbox{diag} (e^{\imath \om x_j^2/(2z_0)} e^{\imath \om y_j^2/(2z_0)})
 \eeqn
 are diagonal, unitary matrices. All the preceding results, including 
 Theorems \ref{thm1}
and \ref{thm2},  can be proved for the sensing matrix (\ref{ext1}) 
by minor modification of the previous arguments.

However, the object vector $X=(O(\br_j))$ of an extended object generally does not fall into the category of {\em random point} objects assumed
in either Proposition \ref{prop2} or \ref{prop1} since
by definition the discrete approximation of an extended object must cluster in aggregates and its amplitude typically changes continuously.  So we take an alternative approach below
by resorting to 
the minimization  principle (\ref{269}) of BPDN.

The RIC for a structured sensing matrix such as (\ref{ext1}) is
difficult to estimate directly except for the case of
single shot ($p=1$) and the case of one sensor ($n=1$). 
For the one-sensor case, (\ref{ext1}) with $(\xi_l,\eta_l)=(0,0)$ is the complex-value version 
of the random i.i.d. Bernoulii matrix:
\beq
\phi_{kj}={1\over \sqrt{p}} e^{\imath\theta_{kj}},
\eeq
whose RIC can be 
easily estimated by the same argument given in \cite{BDD}. 
The single sensor imaging set-up resembles that of
Rice's single-pixel camera \cite{Rice} which employs 
a discrete random screen instead of a random phase
modulator.    

For the single-shot case, the sensing matrix (\ref{ext1}) is equivalent to the random partial Fourier matrix, modulo
an unitary diagonal matrix, and the standard RIP estimate \cite{Rau}
requires the Rayleigh criterion (\ref{apert}) to be met which guarantees (\ref{261}) with probability one. However,
there exists  a small probability of $\ba_l=(\xi_l,\eta_l)$ falling near the boundary of the aperture and hence a small value of $|g(\ba_l)|$. Normalizing the data by $|g(\ba_l)|$  then carries 
a small risk of magnifying the errors.

For the general set-up with  multiple shots and sensors,  we  use the
mutual coherence to bound the RIC trivially as follows.

\begin{proposition}\label{prop40} For any $s\in \IN$ we have
\[
\delta_s\leq \mu(\bA) (s-1).
\]
\end{proposition}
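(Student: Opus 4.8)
The plan is to bound the restricted isometry constant $\delta_s$ by comparing the quadratic form $\|\bA Z\|_2^2$ to $\|Z\|_2^2$ on $s$-sparse vectors, using the Gershgorin-type estimate that the coherence controls the off-diagonal entries of the Gram matrix. First I would fix an arbitrary $s$-sparse vector $Z\in\IC^N$ with support $\cS$, $|\cS|\le s$, and write $\bA_\cS$ for the submatrix of $\bA$ consisting of the columns indexed by $\cS$. Since $\bA$ has unit-norm columns, the Gram matrix $\bG=\bA_\cS^*\bA_\cS$ has all diagonal entries equal to $1$, and every off-diagonal entry $\bG_{ij}=\Phi_i^*\Phi_j$ satisfies $|\bG_{ij}|\le\mu(\bA)$ by the definition of worst-case coherence. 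Hence $\bG=\bI+\bH$ where $\bH$ is Hermitian with zero diagonal and entries bounded by $\mu(\bA)$ in modulus.

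Next I would estimate the operator norm of $\bH$. Each row of $\bH$ has at most $s-1$ nonzero entries, each of size at most $\mu(\bA)$, so by Gershgorin's circle theorem (or simply $\|\bH\|_2\le\|\bH\|_\infty$ for Hermitian matrices, where $\|\bH\|_\infty$ is the maximum absolute row sum) we get $\|\bH\|_2\le(s-1)\mu(\bA)$. Consequently the eigenvalues of $\bG$ lie in the interval $[1-(s-1)\mu(\bA),\,1+(s-1)\mu(\bA)]$, which gives
\beqn
(1-(s-1)\mu(\bA))\|Z\|_2^2\le Z_\cS^*\bG Z_\cS=\|\bA Z\|_2^2\le(1+(s-1)\mu(\bA))\|Z\|_2^2
\eeqn
for every $s$-sparse $Z$. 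By the definition of $\delta_s$ as the smallest constant for which the RIP inequality (\ref{rip}) holds for all $s$-sparse vectors, this forces $\delta_s\le(s-1)\mu(\bA)$, which is the claim.

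There is essentially no hard step here; the estimate is deliberately crude and is exactly the standard coherence-to-RIC bound. The only point requiring a word of care is the direction of the inequality in the definition of $\delta_s$: because $\delta_s$ is defined as the \emph{smallest} nonnegative number making (\ref{rip}) hold, one must check that $(s-1)\mu(\bA)$ is an admissible choice (which the computation above shows) rather than that it equals $\delta_s$. If $(s-1)\mu(\bA)\ge 1$ the bound is vacuous, but it still holds trivially since $\delta_s\le 1$ is not assumed — in that regime one simply notes the stated inequality $\delta_s\le\mu(\bA)(s-1)$ imposes no constraint. So the whole proof is the three lines of Gershgorin estimation above.
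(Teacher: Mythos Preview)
Your argument is correct and is precisely the standard Gershgorin estimate; the paper itself does not write out a proof but simply states the bound as ``trivial,'' so your three-line derivation is exactly what is intended.
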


Combining Lemma \ref{lem1},  Propositions  \ref{prop3} and  \ref{prop40} we obtain the following result.
\begin{theorem}
\label{thm3}

 Under  (\ref{mesh}),   the RIC bound (\ref{266}) holds true 
with probability at least $1-2\delta$ 
for the sensing matrix (\ref{ext1}) and sparsity up to
\beq
\label{321'}
s< {1\over 2} +\lt({1\over \sqrt{2}}-{1\over 2}\rt)\lt(  {a K\sqrt{2}\over \sqrt{p} } + {2 K^2\over \sqrt{np}}\rt)^{-1}
\eeq
where 
\[
a=\max_{j\neq j'}\lt|\IE \lt(e^{\imath \xi_l\om(x_{j'}-x_j)/z_0}\rt)
\IE\lt(e^{\imath \eta_l\om(y_{j'}-y_j)/z_0}\rt)\rt| 
\]
c.f. (\ref{135}). 

Furthermore, suppose the total error in the data is $E=E_{\rm disc}+E_{\rm ext}$ where
 $E_{\rm disc}$ and  $E_{\rm ext}$ are, respectively, the discretization error and 
the external noise.  Then the reconstruction  $\hat X$ by BPDN satisfies the error bound
\beq
\label{280}
\|\hat X -X\|_2 \leq C_1 s^{-1/2} \|X-X^{(s)}\|_1+C_2\lt( \|E_{\rm disc}\|_2+\|E_{\rm ext}\|_2\rt)
\eeq
for all $s$ satisfying (\ref{321'}). 

\end{theorem}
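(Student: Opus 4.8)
The plan is to chain together three facts already established: the worst-case coherence bound of Lemma~\ref{lem1}, the crude coherence-to-RIC estimate of Proposition~\ref{prop40}, and the BPDN error guarantee of Proposition~\ref{prop3}. The first step is to observe that Lemma~\ref{lem1}, although stated for the point-object matrix (\ref{1.10}), applies verbatim to the extended-object matrix (\ref{ext1}). Indeed, as noted in the text, (\ref{1.10}) factors as $\bD_1\bA\bD_2$ with $\bA$ as in (\ref{ext1}) and $\bD_1,\bD_2$ unitary diagonal; conjugation by unitary diagonal matrices multiplies each column of the matrix and each entry of its Gram matrix by a unit scalar, hence leaves the column norms and all the quantities $|\Phi_j^*\Phi_i|$ invariant, and therefore leaves $\mu$ invariant. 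Thus under (\ref{mesh}) the event $\cA:=\{\mu(\bA)\le aK\sqrt2/\sqrt p + 2K^2/\sqrt{np}\}$ has probability at least $1-2\delta$, with $a$ as in (\ref{135}).

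The second step is purely deterministic and takes place on $\cA$. By Proposition~\ref{prop40}, $\delta_{2s}\le (2s-1)\,\mu(\bA)$. If $s$ satisfies (\ref{321'}), then since $1/\mu(\bA)\ge (aK\sqrt2/\sqrt p + 2K^2/\sqrt{np})^{-1}$ on $\cA$, we obtain $s < \tfrac12 + (\tfrac1{\sqrt2}-\tfrac12)/\mu(\bA)$, equivalently $(2s-1)\mu(\bA) < 2(\tfrac1{\sqrt2}-\tfrac12)=\sqrt2-1$, so $\delta_{2s}<\sqrt2-1$; that is, the RIC condition (\ref{266}) holds on $\cA$. This is the first assertion of the theorem.

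For the error bound, write the total noise as $E=E_{\rm disc}+E_{\rm ext}$ so that $\|E\|_2\le \ep:=\|E_{\rm disc}\|_2+\|E_{\rm ext}\|_2$ by the triangle inequality, and apply Proposition~\ref{prop3} with this $\ep$: on $\cA$ the BPDN minimizer $\hat X$ is unique and obeys $\|\hat X-X\|_2\le C_1 s^{-1/2}\|X-X^{(s)}\|_1+C_2\ep$, which is precisely (\ref{280}). Everything after conditioning on $\cA$ is deterministic, so the entire statement holds with probability at least $1-2\delta$.

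I do not expect a genuine obstacle here; the only points needing care are (i) verifying that the coherence is invariant under the unitary diagonal conjugation, so that Lemma~\ref{lem1} is legitimately imported for (\ref{ext1}), and (ii) keeping straight that the ``$2s$'' entering the RIC in Proposition~\ref{prop3} via $\delta_{2s}$ is what produces the ``$s$'' (rather than ``$2s$'') in the admissible-sparsity bound (\ref{321'}).
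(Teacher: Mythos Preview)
Your proposal is correct and follows exactly the route the paper indicates: the paper simply states that Theorem~\ref{thm3} is obtained by ``combining Lemma~\ref{lem1}, Propositions~\ref{prop3} and~\ref{prop40}'', and you have filled in precisely those details, including the observation (also made in the text just before the theorem) that the unitary diagonal conjugation relating (\ref{1.10}) and (\ref{ext1}) leaves the coherence unchanged. Your algebra converting (\ref{321'}) into $(2s-1)\mu(\bA)<\sqrt{2}-1$ via Proposition~\ref{prop40} is exactly what is needed.
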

\begin{remark}
Since BPDN does not guarantee exact localization,  an appropriate
metric for resolution can be formulated in terms of  the smallest pixel size $\ell_{\rm min}$ and largest sparsity $s$ such that (\ref{280}) holds true with both the discretization error $E_{\rm disc}$ and $s^{-1/2}\|X-X^{(s)}\|_1$ being reasonably small. 

The right definition of ``small errors", however, is problem specific.
The discrete norms ($\ell_1$- or $\ell_2$- norm)
tend to go up simply because the effective sparsity  increases. 
Hence the right metric of reconstruction  error should be properly
normalized by the size of the object. For example, consider the
special case when $X$ is
$s$-sparse. Then we can rewrite (\ref{280}) as 
\beq
\label{280'}
s^{-1/2} \|\hat X -X\|_2 \leq C_2\|E_{\rm ext}\|_2s^{-1/2}+C_2\|E_{\rm disc}\|_2 s^{-1/2}
\eeq
whose left hand side is a measure of  the reconstruction  error {\em per 
pixel of size $\ell$}.  

Below the diffraction limit $A\ell/(\lambda z_0)<1$ ($a\neq 0$), 
one can reduce the discretization error by reducing the pixel size
according to Lemma \ref{lemm5}. On the other hand, the sparsity $s$ increases in proportion to $\ell^{-2}$ for a two-dimensional extended object. To satisfy (\ref{321'}) the smallest admissible  pixel size $\ell_{\rm min}$ is bounded from below roughly by 
\beq
\label{280''}
\ell_{\rm min} \stackrel{>}{ \sim} {a^{1/2} p^{-1/4}}
\eeq
meaning that  the minimum super-resolved scale decreases at least as fast
as the negative quarter power of the number of random illuminations. 

For the diffraction-limited case $a=0$, we have instead
\[
\ell_{\rm min} \stackrel{>}{ \sim} {n^{-1/4} p^{-1/4}}
\]
which is more favorable  than (\ref{280''}) for $n\gg 1$. However,
 the discretization error  bound (Lemma \ref{lemm5}) is less
useful in this case. 
\end{remark}
\commentout{
\begin{remark}
For the estimate  (\ref{270}) to be useful, the ratio $A\ell/(z_0\lambda)$ must be less than 1, rendering $a
\neq 0.$ In this case, for $n, p\gg 1$ the dominant term
on the right hand side of (\ref{321'}) is
\[
\lt({1\over {2}}-{1\over 2\sqrt{2}}\rt){\sqrt{p}\over aK},
\]
i.e. the sparsity $s=\cO(\sqrt{p})$. 

\end{remark}
}

\section{Worst-case  coherence bound}
\label{sec:thm3}

\subsection{Proof of Lemma \ref{lem1}: upper bound}
\begin{proof}
Summing over $\ba_l, l=1,...,n$ we obtain 
\beq
\label{1.20}
\sum_{k=1}^p\sum_{l=1}^n \phi^*_{i{j'}}\phi_{ij}
&=&e^{\imath \om(x_{j}^2+y_{j}^2-x_{j'}^2-y_{j'}^2)/(2z_0)}{1\over np}  \sum_{k=1}^pe^{\imath (\theta_{kj}-\theta_{kj'})}\sum_{l=1}^n e^{\imath \xi_l\om(x_{j'}-x_j)/z_0}
e^{\imath \eta_l\om(y_{j'}-y_j)/z_0}.
\eeq
We shall estimate the two summations separately. 

First consider 
the  summation over random illuminations $k=1,...,p$. 
Define the random variables $A_l, B_l, l=1,...,n$, as 
\beq
\label{9.1'}
A_l&=&\cos{\lt[\theta_{kj}-\theta_{kj'}\rt]}\\
B_l&=&\sin{\lt[\theta_{kj}-\theta_{kj'}\rt]}\label{9.2'}
\eeq
and
let
\beq
\label{9}
S_p=\sum_{l=1}^p (A_l+iB_l).
\eeq

To estimate $S_p$, we recall the Hoeffding inequality
\cite{Hoe}.

\begin{proposition}
Let $A_1+iB_1,  ..., A_p+iB_p$ be independent random variables. Assume
that $A_l, B_l \in [a_l, b_l], l=1,...,p$ almost surely.
Then we have 
\beq
\label{hoeff}
\IP\lt[\lt|S_p -\IE S_p\rt|\geq  pt\rt]
\leq 4\exp{\lt[-{p^2 t^2\over \sum_{l=1}^p (b_l-a_l)^2}\rt]}
\eeq
for all positive values of $t$. 
\end{proposition}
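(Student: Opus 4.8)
For completeness I sketch why (\ref{hoeff}) holds. The plan is to reduce the complex-valued statement to the classical real-valued Hoeffding inequality, applied once to the real parts and once to the imaginary parts. First I would recall the real scalar version: if $C_1,\dots,C_p$ are independent with $C_l\in[a_l,b_l]$ almost surely and $T_p=\sum_{l=1}^p C_l$, then for every $u>0$
\[
\IP\big[|T_p-\IE T_p|\geq u\big]\leq 2\exp\left(-\frac{2u^2}{\sum_{l=1}^p(b_l-a_l)^2}\right).
\]
This is the usual Chernoff argument: bound $\IP[T_p-\IE T_p\geq u]$ by $e^{-\lambda u}\prod_{l=1}^p\IE[e^{\lambda(C_l-\IE C_l)}]$ for $\lambda>0$, insert Hoeffding's lemma $\IE[e^{\lambda(C_l-\IE C_l)}]\leq e^{\lambda^2(b_l-a_l)^2/8}$ for bounded mean-zero random variables, optimize over $\lambda$, and symmetrize by repeating the estimate for $-C_l$. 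Since this is precisely the content of \cite{Hoe}, I would invoke it rather than reproduce it.

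The complex case then follows in a few lines. Writing $S_p-\IE S_p=\mathrm{Re}(S_p-\IE S_p)+i\,\mathrm{Im}(S_p-\IE S_p)$ and noting $\mathrm{Re}\,S_p=\sum_{l=1}^p A_l$ and $\mathrm{Im}\,S_p=\sum_{l=1}^p B_l$, the event $\{|S_p-\IE S_p|\geq pt\}$ forces at least one of the real or imaginary deviations to have modulus at least $pt/\sqrt2$, so by the union bound
\[
\IP\big[|S_p-\IE S_p|\geq pt\big]\leq \IP\big[|\mathrm{Re}(S_p-\IE S_p)|\geq pt/\sqrt2\big]+\IP\big[|\mathrm{Im}(S_p-\IE S_p)|\geq pt/\sqrt2\big].
\]
Both $(A_l)_{l=1}^p$ and $(B_l)_{l=1}^p$ are independent families with values in the same intervals $[a_l,b_l]$, so applying the real inequality with $u=pt/\sqrt2$ bounds each term by $2\exp(-p^2t^2/\sum_{l=1}^p(b_l-a_l)^2)$, using $2(pt/\sqrt2)^2=p^2t^2$. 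Adding the two contributions yields the factor $4$ and the claimed bound (\ref{hoeff}).

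I do not expect a genuine obstacle here; the only nontrivial analytic ingredient is Hoeffding's lemma, which is classical. The two points that need a moment of care are that the real and imaginary parts be controlled by the \emph{same} intervals so the two exponents coincide and combine cleanly — which is exactly the hypothesis $A_l,B_l\in[a_l,b_l]$ — and that the $\sqrt2$ lost when splitting the modulus into real and imaginary parts be tracked correctly, which it is, since it is compensated by choosing the threshold $u=pt/\sqrt2$ in the real estimate.
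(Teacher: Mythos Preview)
Your argument is correct. The paper itself does not prove this proposition: it simply records it as the Hoeffding inequality and cites \cite{Hoe}, then applies it. Your sketch supplies exactly the standard derivation one would give --- split $S_p-\IE S_p$ into real and imaginary parts, note that $|z|\geq pt$ forces $|\mathrm{Re}\,z|\geq pt/\sqrt{2}$ or $|\mathrm{Im}\,z|\geq pt/\sqrt{2}$, apply the scalar bound from \cite{Hoe} to each, and add --- and the bookkeeping with the $\sqrt{2}$ and the constant $4$ is right.

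One small point you use implicitly and might state in a clause: the hypothesis is that the complex variables $A_l+iB_l$ are independent across $l$, and you need the real parts $(A_l)_l$ to be mutually independent (likewise the imaginary parts). This follows because $A_l=\mathrm{Re}(A_l+iB_l)$ is a measurable function of the $l$-th complex variable, so independence is inherited coordinatewise; no joint independence between $A_l$ and $B_l$ is required.
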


We apply the Hoeffding inequality to $S_p$ with
$a_l=-1, b_l=1, l=1,...,p$ 
and
\[
t=K\sqrt{2\over p},\quad K>0
\]
to obtain 
\beq
\IP\lt[p^{-1}\lt| \sum_{k=1}^p
e^{\imath (\theta_{kj}-\theta_{kj'}) }\rt|\geq {K \sqrt{2\over p}}\rt]\leq 4e^{-K^2/2}.\label{sum2'}
\eeq

Note the dependence of  $S_p$ on $
\theta_{kj}-\theta_{kj'}$  and  the symmetry:  $|S_p
(\theta_{kj'}-\theta_{kj})|
=|S_p (\theta_{kj'}-\theta_{kj})|$. As a consequence, 
there may be $N(N-1)/2$ different values of $S_p$.
By union bound with (\ref{sum2'}), we obtain 
\beq
\IP\lt[p^{-1}\max_{j\neq j'} \lt| \sum_{k=1}^p
e^{\imath (\theta_{kj}-\theta_{kj'}) }\rt|\geq {K \sqrt{2\over p}}\rt]\leq 2N(N-1)e^{-K^2/2}\leq \delta\label{sum2}
\eeq
by (\ref{mesh}). 

Next consider the summation, denoted by $T_n$,  over the sensor locations  $l=1,...,n$ in (\ref{1.20}):
\[
T_n=\sum_{l=1}^n e^{\imath \xi_l\om(x_{j'}-x_j)/z_0}
e^{\imath \eta_l\om(y_{j'}-y_j)/z_0}.
\]
\commentout{ 
Define the random variables $A_l, B_l, l=1,...,n$, as 
\beq
\label{9.1}
A_l&=&\cos{\lt[(\xi_l(x_{j'}-x_j)+\eta_l(y_{j'}-y_j))\om/z_0\rt]}\\
B_l&=&\sin{\lt[(\xi_l(x_{j'}-x_j)+\eta_l(y_{j'}-y_j))\om/z_0\rt]}\label{9.2}
\eeq
and $S_n$ as in (\ref{9}).  

Note the dependence of  $S_n$ on $
(x_{j'}-x_j, y_{j'}-y_j)$  and  the symmetry:  $S_n
(x_{j'}-x_j, y_{j'}-y_j)
=S_n (x_j-x_{j'}, y_j-y_{j'})$. 
Furthermore, a moment of reflection reveals that thanks
to the square symmetry of the lattice 
there
are at most $N-1$ different values $|S_n|$ 
among the $N(N-1)/2$  pairs  of points. 

We apply the Hoeffding inequality to $S_n$ with
$a_l=-1, b_l=1, l=1,...,n$ 
and
\[
t=K\sqrt{2\over n},\quad K>0. 
\]
}
By the same argument  we obtain
\beqn
\IP\lt[\max_{j'\neq j} n^{-1}\lt|T_n -\IE T_n\rt|\geq K\sqrt{2\over n}\rt]
&\leq& 2N(N-1)e^{-{K^2/2}}
\eeqn
and hence 
\beq
\IP\lt[\max_{j'\neq j} {1\over n}\lt|T_n\rt|\geq a+ K\sqrt{2\over n}\rt]
&\leq& \delta,\quad \label{10.1} a=\max_{j\neq j'} {1\over n} |\IE T_n|
\eeq
by (\ref{mesh}).

 By the mutual  independence
of $\xi_l$ and $\eta_l$ we have
\beqn
a=\max_{j\neq j'}{1\over n}\lt|\IE T_n\rt|&=&\max_{j\neq j'}
{1\over n}\lt|\sum_{l=1}^n \IE \lt(e^{\imath \xi_l\om(x_{j'}-x_j)/z_0}\rt)
\IE\lt(e^{\imath \eta_l\om(y_{j'}-y_j)/z_0}\rt)\rt|\\
&=&\max_{j\neq j'} \lt|\IE \lt(e^{\imath \xi_l\om(x_{j'}-x_j)/z_0}\rt)
\IE\lt(e^{\imath \eta_l\om(y_{j'}-y_j)/z_0}\rt)\rt|
\eeqn
since $\xi_l, \eta_l, l=1,...,n$ are independently identically
distributed. 

Combining (\ref{10.1}) and  (\ref{sum2}) and noting
the independence of these two events, we obtain 
\[
\mu(\bPhi)\leq {a K\sqrt{2}\over \sqrt{p} } + {2 K^2\over \sqrt{np}}
\]
with probability at least $1-2\delta$. 

Simple calculation with the uniform distribution on the set $\cD$ given in (\ref{arr})  yields
\beq
\lt|\IE \lt(e^{\imath \xi_l\om(x_{j'}-x_j)/z_0}\rt)
\IE\lt(e^{\imath \eta_l\om(y_{j'}-y_j)/z_0}\rt)\rt| 
&=&0,\quad j'\neq j \label{20.2}
\eeq 
if (\ref{apert}) holds. 
In this case, 
\[
\mu(\bA)\leq {2}K^2/\sqrt{np}
\]
with probability $1-2\delta$.

\end{proof}

\subsection{Proof of Lemma \ref{lem:lb}: Lower bound}\label{sec5.2}
\begin{proof}
The Berry-Esseen theorem \cite{Fel}  states that the distribution of the sum  of $m$ independent and 
identically distributed zero-mean random variables 
 normalized by its standard deviation, 
differs from the unit Gaussian distribution by at most 
$C \rho/(\sigma^2\sqrt{m})$, where $\sigma^2$ and $\rho$ are respectively the 
variance and the absolute third moment of the parent distribution, and $C$ is a distribution-independent 
absolute constant which is not greater than 0.7655
\cite{Sen}.

We shall apply the Berry-Esseen theorem to 
the two summations, denoted by $S_p$ and $T_n$ respectively,  on the right hand side of (\ref{1.20}).

The complex-valued random variables involved can be
treated as $\IR^2$-valued random variables. Under
(\ref{apert}) the variance of these random variables
is $1/2$ and the absolute third moment is $4/(3\pi)$. 

Let $F_1, F_2$ be the cumulative distributions of the
real and imaginary parts  of $p^{-1/2}S_p$ and
$G_1, G_2$ the cumulative distributions of the
real and imaginary parts of $n^{-1/2} T_n$. Let $\Psi$ be
the cumulative distribution of the standard normal
random variable. We have
by the Berry-Esseen theorem
\beq
\sup_{t} |F_i(t)-\Psi(t)|&\leq& {C8\sqrt{2}\over 3\pi\sqrt{p}},\quad i=1,2\label{40}\\
\sup_{t} |G_i(t)-\Psi(t)|&\leq& {C8\sqrt{2}\over 3\pi\sqrt{n}},\quad i=1,2.\label{41}
\eeq
Since $C\leq 0.7655$, we can replace the right hand side of (\ref{40}) and (\ref{41})) by
$p^{-1/2}$ and $n^{-1/2}$ respectively for the sake
of notational simplicity. 
Hence
\beqn
|F_i(t)-F_i(-t)|&\leq& |\Psi(t)-\Psi(-t)|+{2\over \sqrt{p}}\\
|G_i(t)-G_i(-t)|&\leq& |\Psi(t)-\Psi(-t)|+{2\over \sqrt{n}}
\eeqn
$\forall t$. 
For small $t>0$ we can bound the above expressions by
\beqn
|F_i(t)-F_i(-t)|&\leq& t\sqrt{2\over \pi} +{2\over \sqrt{p}}\\
|G_i(t)-G_i(-t)|&\leq& t\sqrt{2\over \pi}+{2\over \sqrt{n}}
\eeqn
which imply 
\beqn
\IP\lt[p^{-1/2} |S_p|\leq t\sqrt{2}\rt] &\leq & 2t\sqrt{2\over \pi} +{4\over \sqrt{p}}\\
\IP\lt[n^{-1/2} |T_n|\leq t\sqrt{2}\rt] &\leq&  2t\sqrt{2\over \pi} +{4\over \sqrt{n}}
\eeqn
and consequently 
\beq
\IP\lt[{ |S_pT_n|\over np} \geq  {2t_1t_2\over\sqrt{{np}}} \rt] \geq \lt(1-2t_1\sqrt{2\over \pi} -{4\over \sqrt{p}}\rt)\lt(1-2t_2\sqrt{2\over \pi} -{4\over \sqrt{n}}\rt)\label{42}
\eeq
which is what we want to prove.

\end{proof}
\section{Average coherence bound: proof of Lemma \ref{lem2}}\label{sec5}

\begin{proof}
Write 
\beqn
\nu(\bA)
&=&{1\over N-1} \max_{j'} \lt|\sum_{l=1}^n\sum_{k=1}^p
\sum_{j\neq j'}\phi^*_{ij'} \phi_{ij}\rt|,\quad i=(k-1)n+l
\eeqn
and  consider the sums   over $k$ and $j$ simultaneously with
a fixed $j'$ and fixed $n$ sensor locations. 
This is a summation of $p\cdot N$ independent random variables  $\phi_{ij'} \phi_{ij}$ each bounded by $n^{-1}p^{-1}$ in absolute value.
Note that 
\beq
\label{35}
\IE_{\xi,\eta} \lt[ \phi^*_{ij'} \phi_{ij}\rt]=0,\quad \forall j,j', i
\eeq
since $\theta_{kj}$ are uniformly distributed in $[0,2\pi]$. 
Applying Hoeffding inequality
with 
\[
t={c_1\over (N-1)^{1/2} p^{3/2} n},\quad c>0
\]
 we have
\beq
\label{33}
\IP_{\xi,\eta}\lt[{1\over N-1}\lt|\sum_{k=1}^p
\sum_{j\neq j'}\phi_{ij'} \phi_{ij}\rt|\geq {c_1\over(N-1)^{1/2} p^{1/2}n} \rt]\leq 4 e^{-c_1^2}
\eeq
where $\IP_{\xi,\eta}$ is the probability conditioned
on fixed  $\xi=(\xi_j), \eta=(\eta_j)\in \IR^n$. 
\commentout{
Maximizing over the $m$ possible sensor locations, we obtain 
\[
\IP\lt[ \max_{j} \lt|\sum_{k=1}^p
\sum_{j\neq j'}\phi_{ij'} \phi_{ij}\rt|\geq {\sqrt{2} c\over pn^{3/2}} \rt]\leq 4m e^{-c^2 (N-1)/(2np)}.
\]
}
In analyzing the sum over $l=1,...,n$ we shall restrict to
the event 
\[
\cA=\lt\{\Theta=[\theta_{kj}]: {1\over N-1}\lt|\sum_{k=1}^p
\sum_{j\neq j'}\phi_{ij'} \phi_{ij}\rt|<
 {c_1\over np^{1/2}(N-1)^{1/2}} \,\,\hbox{for almost all  sensor locations}. \rt\}
\]
Since there are at most $N$ possible sensor locations, by (\ref{33}) 
\beq
\label{36}
\IP (\cA^c)\leq 4N e^{-c_1^2}
\eeq
where $\cA^c$ denotes the complement of $\cA$. 

Let
\[
Z_{j'l}={1\over N-1}\sum_{k=1}^p
\sum_{j\neq j'}\phi^*_{ij'} \phi_{ij} 
\]
and 
 $\IE_\cA$ is the  expectation conditioned on the event  $\cA$. 

We proceed with the following estimate
\beq
\nn\lefteqn{\IP\lt[ \max_{j'}\lt|\sum_{l=1}^n\lt(Z_{j'l} -\IE_\cA Z_{j'l}\rt)\rt|\geq 
{c_2\over \sqrt{np(N-1)}} \rt]}\\
\nn&=& \IP_\cA\lt[ \max_{j'}\lt|\sum_{l=1}^n\lt(Z_{j'l}-\IE_\cA Z_{j'l}\rt)\rt|\geq 
{c_2 \over \sqrt{np(N-1)}}  \rt]\IP (\cA)\nn\\
&&+\IP_{\cA^c}\lt[ \max_{j'}\lt|\sum_{l=1}^n\lt(Z_{j'l}-\IE_\cA Z_{j'l}\rt)\rt|\geq {c_2 \over \sqrt{np(N-1)}}\rt]\IP (\cA^c)\nn\\
&\leq&  \IP_\cA\lt[ \max_{j'}\lt|\sum_{l=1}^n\lt(Z_{j'l}-\IE_\cA Z_{j'l}\rt)\rt|\geq {c_2 \over \sqrt{np(N-1)}} \rt]+ 4N e^{-c_1^2},\quad c_1, c_2>0\label{37}
\eeq
by (\ref{36}) where $\IP_\cA$ and
$\IP_{\cA^c}$ are respectively  the probabilities  conditioned on the events $\cA$ and $\cA^c$.

Applying  Hoeffding's inequality 
with 
\[
t={c_2\over p^{1/2} (N-1)^{1/2} n^{3/2}}
\]
 to estimate the first term on the right hand side of (\ref{37}), we obtain
\[
 \IP_\cA\lt[\lt|\sum_{l=1}^n\lt(Z_{j'l}-\IE_\cA Z_{j'l}\rt)\rt|\geq {c_2 \over \sqrt{np(N-1)}}\rt]
 \leq 4 e^{-c_2^2/(2c_1)^2}. 
\]
Maximizing over $j'=1,...,m$ and using the union bound we then arrive at 
\beq
\label{38}
\IP_\cA\lt[ \max_{j'}\lt|\sum_{l=1}^n\lt(Z_{j'l}-\IE_\cA Z_{j'l}\rt)\rt|\geq {c_2\over \sqrt{np(N-1)}}\rt]
 \leq 4N e^{-c_2^2/(2c_1)^2}. 
\eeq

Using (\ref{37})  and (\ref{38}) with
\[
c_2=c\sqrt{N-1\over np}, \quad c_2=2{c_1^2},\quad c>0
\]
we have
\beq
\nn{\IP\lt[\max_{j'}\lt|\sum_{l=1}^n\lt(Z_{j'l}-\IE_\cA Z_{j'l}\rt)\rt|\geq 
{c\over {np}} \rt]}
&\leq& 8N e^{-{c\over 2}\sqrt{{N-1\over np}}},\quad c>0, 
\eeq
which is what we set out to prove. 

Note that
\[
\IE_\Theta Z_{j'l} = {1\over n} \IE \lt(e^{\imath \xi_l\om(x_{j'}-x_j)/z_0}\rt)
\IE\lt(e^{\imath \eta_l\om(y_{j'}-y_j)/z_0}\rt),\quad \forall j'=1,...,m,\quad l=1,...,n,\quad j'\neq j
\]
where $\IE_\Theta$ is the expectation conditioned on $\Theta=(\theta_{kj})\in \IC^{p\times n}$.
If
\[
{1\over \rho}={A\ell\over \lambda z_0}\in \IN
\]
then 
\[
\IE_\Theta Z_{j'l}=0,\quad \forall j'=1,...,m,\quad l=1,...,n
\]
and hence
\[\IE_\cA Z_{j'l}=0,\quad \forall j'=1,...,m,\quad l=1,...,n.
\]

\end{proof}

\section{Operator norm bound: proof of Lemma \ref{lem3}}
\label{sec:thm6}

\begin{proof}
It suffices to show that
the matrix $\bA$ satisfies
\beq
\label{gram}
\|{np\over N}\bA \bA^{*}-\bI_{np}\|_2<1
\eeq
where $\bI_{np}$ is the $np\times np$ identity matrix
with the corresponding probability bound.  Since the diagonal elements of ${np\over N}\bA\bA^{*}$ are
unity,
 (\ref{gram}) would in turn
follow from 
\beq
\label{19'}
\mu \lt(\bA^{*}\rt)< {1\over np-1} 
\eeq
by the Gershgorin circle theorem.

The pairwise coherence  has the form
\beq
\nn
{np\over N}\sum_{j=1}^N \phi_{ij}\phi^{*}_{i'j}
&=&{1\over N}  e^{\imath \om(\xi_{l}^2+\eta_{l}^2+\xi_i^2-\xi_{l'}^2-\eta_{l'}^2)/(2z_0)}\sum_{j=1}^N e^{\imath \om x_j (\xi_{l'}-\xi_{l})/z_0} e^{\imath \om y_j (\eta_{l'}-\eta_{l})/z_0}e^{\imath (\theta_{kj} -\theta_{k'j})}.
\eeq
There are two cases: (i) $k\neq k'$, (ii) $k=k', l\neq l'$.

For case (i), $\theta_{kj}-\theta_{k'j}$ are independent random
variables for $j=1,...,N$. Applying Hoeffding inequality to 
\[
Z_N\equiv \sum_{j=1}^N e^{\imath \om x_j (\xi_{l'}-\xi_{l})/z_0} e^{\imath \om y_j (\eta_{l'}-\eta_{l})/z_0}e^{\imath (\theta_{kj} -\theta_{k'j})}
\]
we obtain
\beq
\IP\lt[{1\over N} \lt| Z_N\rt|\geq t\rt]\leq 4 e^{-N t^2}.
\eeq

Set $t=\alpha/\sqrt{N}$, we have
\beqn
\IP\lt[\lt| {np\over N}\sum_{j=1}^N \phi_{ij}\phi^{*}_{i'j}\rt|\geq {\alpha \over\sqrt{ N}}\rt]\leq 4 e^{ -\alpha^2}
\eeqn
and thus
\beq
\IP\lt[\sup_{k\neq k'\atop \forall l,l'} \lt| {np\over N}\sum_{j=1}^N \phi_{ij}\phi^{*}_{i'j}\rt|\geq {\alpha \over\sqrt{ N}} \rt]\leq 2n^2p(p-1)e^{ -\alpha^2}\label{45}
\eeq
by the union bound.

For case (ii), $\theta_{kj}-\theta_{k'j}=0$ and $Z_N$ becomes a geometric
series 
\beqn
Z_N&=&{e^{\imath \om(\xi_{l'}-\xi_{l})(x_1+\sqrt{N}\ell)/z_0}-e^{\imath \om(\xi_{l'}-\xi_{l})x_1/z_0}\over 1-e^{\imath \om(\xi_{l'}-\xi_{l})\ell/z_0}}\times
{e^{\imath \om(\eta_{l'}-\eta_{l})(y_1+\sqrt{N}\ell)/z_0}-e^{\imath \om(\eta_{l'}-\eta_{l})y_1/z_0}\over 1-e^{\imath \om(\eta_{l'}-\eta_{l})\ell/z_0}}.
\eeqn

Thus,
\beqn
{np\over N} \lt|\sum_{j=1}^N \phi_{i j}\phi^{*}_{i'j}\rt|&\leq& {1\over N}
\lt|{\sin{\om\ell\sqrt{N}(\xi_{l'}-\xi_{l})\over 2z_0} \over \sin{\om\ell (\xi_{l'}-\xi_{l})\over 2z_0}}\rt| \cdot\lt|{\sin{\om\ell\sqrt{N}(\eta_{l'}-\eta_{l})\over 2z_0} \over \sin{\om\ell (\eta_{l'}-\eta_{l})\over 2z_0}}\rt|.  
\eeqn

Let 
\beqn
\label{view2}
\kappa=\min_{l\neq l'}\min_{j\in \IZ }\lt\{\lt|{ \ell (\xi_{l'}-\xi_{l})\over  \lambda z_0}-j\rt|, \lt|{ \ell (\eta_{l'}-\eta_{l})\over \lambda z_0}-j\rt|\rt\}. 
\eeqn

Clearly $\kappa$ is nonzero with probability one.
For $l\neq l'$ the probability density functions  (PDF) for the random variables
\[
{ \ell (\xi_{l'}-\xi_{l})\over  \lambda z_0},\quad
{ \ell (\eta_{l'}-\eta_{l})\over \lambda z_0}
\]
are either  the symmetric triangular distribution or
its self-convolution supported
on $[- 2\rho^{-1},  2\rho^{-1}]$. 
In either case, their PDFs are bounded by $\rho$. 
Hence the probability that $\{\kappa >\beta\}$ for small $\beta>0$ is larger than 
\[
(1-2\rho\beta)^{n(n-1)/2} >1-\beta\rho n(n-1)
\] 
where the exponent counts the number of distinct  unordered  pairs $(l,l')$. Note that the above analysis is independent
of $k=k'$. Since $\sin\theta\geq \theta, \forall \theta\in [0,\pi/2]$ we
have that
\beq
\IP\lt[\sup_{k=k'\atop l\neq l'} {np\over N} \lt|\sum_{j=1}^N \phi_{ij}\phi^{*}_{i'j}\rt|\geq  {\pi^2\over 4 N\beta^2}\rt]\leq \beta\rho n(n-1).\label{49}
\eeq

Setting  
\beq
\label{201}
\max\lt\{{\alpha\over \sqrt{N}}\rt\} < {\pi^2\over 4N\beta^2}= {1\over np-1}
\eeq
and using  (\ref{45}) and (\ref{49}) we have
 \beq
\IP\lt[\sup_{i\neq i'} {np\over N} \lt|\sum_{j=1}^N \phi_{ij}\phi^{*}_{i'j}\rt|\geq {1\over np-1}\rt]\leq \beta\rho n(n-1)+2n^2p(p-1) e^{-\alpha^2}.\label{50}
\eeq

As a consequence, 
 \beqn
\IP\lt[\sup_{i\neq i'} {np\over N} \lt|\sum_{j=1}^N \phi_{ij}\phi^{*}_{i'j}\rt|\geq  {1\over np-1}\rt]<  \rho n(n-1){\pi \over 2} \sqrt{np-1\over N}+2n^2p(p-1) e^{-{N\over (np-1)^2}}
\eeqn
by maximizing the right hand side of (\ref{50}) under the constraint (\ref{201}). 
\end{proof}

\commentout{
\subsection{Average coherence bound}

Average coherence is defined as \cite{BCJ}
\[
\nu(\bPhi)={1\over N-1}
\max_{j'} \lt| \sum_{j\neq j'} \Phi_{j'}^*\phi_j\rt|
\]
which can be written as
\[
\nu(\bPhi) ={m\over N-1} \max_{j} \lt|\Phi^*_{j}\bar \Phi -{1\over N}\rt|
\]
with 
\[
\bar \Phi={1\over N} \sum_{j'=1}^N \Phi_{j'}
\] 
 being the row-averaged  column vector. 
 \begin{proposition}\label{prop1}
 For any set of  sensor locations
 \beq
\IP_{\xi,\eta}\lt[ |\bar \Phi|_\infty < K\sqrt{2\over Nnp}\rt]> 1-2\delta {np\over N}
\eeq
where $\IP_{\xi,\eta}$ is the conditional probability
given the sensor locations. 
 \end{proposition}
\begin{proof}
Consider
 \beq
 \bar\Phi_{i} &=& {1\over N\sqrt{np}}\sum_{j=1}^N e^{\imath \om|x_j-\xi_l|^2/(2z_0)}e^{\imath \om|y_j-\eta_l|^2/(2z_0)}e^{\imath \theta_{kj}},\quad  i=(k-1)n+l 
 \eeq

Let  
\beq
\label{9.1'}
X_{j}&=&\cos{\lt[{\om\over 2 z_0}\lt((\xi_l-x_j)^2+(\eta_l-y_j)^2\rt)+\theta_{kj}\rt]}\\
Y_{j}&=&\sin{\lt[{\om\over 2 z_0}\lt((\xi_l-x_j)^2+(\eta_l-y_j)^2\rt)+\theta_{kj}\rt]} \label{9.2'}
\eeq
and their respective sums
\beqn
\label{9'}
S_i=\sum_{j=1}^N X_j,\quad T_N=\sum_{j=1}^N Y_j.
\eeqn
Note that $\IE S_N=\IE T_N=0$ for all $\xi_l, \eta_l, k$  by the definition of $\theta_{kj}$. 
Hence 
\beq
\label{sum'}
{1\over N} \lt|S_N+iT_N\rt|\leq {1\over N}\sqrt{\lt|S_N\rt|^2
+\lt|T_n\rt|^2}.
\eeq

Applying  the Hoeffding inequality with 
\[
t=K/\sqrt{N}
\]
we  obtain 
\beq
\label{hoeff2'}
\IP_{\xi,\eta}\lt[m^{-1}\lt|S_N\rt|\geq K/\sqrt{N}\rt]
&\leq& 2e^{-{K^2/2}}\\
\IP_{\xi,\eta}\lt[m^{-1}\lt|T_N\rt|\geq K/\sqrt{N}\rt]
&\leq& 2e^{-{K^2/2}}\label{hoeff22'}
\eeq
and hence by (\ref{sum'})
\beq
\IP_{\xi,\eta}\lt[ |\bar \phi_i| \geq K\sqrt{2\over Nnp}\rt]\leq 4e^{-K^2/2}.
\eeq

To estimate $\max_i |\bar \phi_i| $ we use the union bound to obtain
\beqn
\IP_{\xi,\eta}\lt[ \max_i |\bar \phi_i| \geq K\sqrt{2\over Nnp}\rt]\leq 4np e^{-K^2/2}
\eeqn
which is what we wanted to prove.

\end{proof}

To estimate $\Phi_{j}^*\bar \Phi-{1\over N}$ we write
\[
\Phi_{j}^*\bar \Phi -{1\over N}=\sum_{k=1}^p\sum_{l=1}^n
\lt[\phi_{ij}^*\bar \Phi_{i}-{1\over Nnp}\rt]
\]
with $ i=(k-1)n+l$. 

First consider the summation over the sensor locations $l=1,...,n$. 
From the selections of the sensor locations, this is
a sum of independent random variables for a fixed
set of $\{\theta_{kj}: k=1,...,p; j=1,...,m\}$ since
the $i$-th summand depends only on the $i$-th sensor
location. Note
that
\[
\IE_{\Theta} \lt[\phi^*_{ij}\bar \Phi_{i}\rt]-{1\over Nnp}
={1\over N} \IE_\Theta \lt[\phi^*_{ij}\sum_{j'\neq j} \phi_{ij'}\rt]
= {1\over N} \sum_{j'\neq j}\IE_\Theta \lt[\phi^*_{ij} \phi_{ij'}\rt]=0
\]
by the aperture condition (\ref{apert}) where
$\IE_\Theta$ is the conditional expectation given
$\Theta=[\theta_{kj}]$. 
Applying 
the Hoeffding inequality with
\[
t={K^2\sqrt{2}\over\sqrt{mp^2n^3}}
\]
conditioned on
the event, $E$,  in Proposition \ref{prop1}
we obtain 
\beq
\IP_\Theta\lt[ \lt|\sum_{l=1}^n
\phi^*_{ij}\bar \Phi_{i}-{1\over Np}\rt|\geq {K^2\sqrt{2}\over p \sqrt{mn}}| E\rt]\leq 4e^{-K^2/2}.  
\eeq
Maximizing over $j$ using the union bound as before  we 
have 
\beqn
\IP_\Theta\lt[ \max_{j} \lt|\sum_{l=1}^n
\phi^*_{ij}\bar \Phi_{i}-{1\over Np}\rt|\geq {K^2\sqrt{2}\over p \sqrt{mn}}\rt]\leq 4(N-1)e^{-K^2/2}<2\delta
\eeqn
or
\beq
\label{42}
\IP_\Theta\lt[ \max_{j} \lt|\sum_{l=1}^n
\phi^*_{ij}\bar \Phi_{i}-{1\over Np}\rt|< {K^2\sqrt{2}\over p \sqrt{mn}}\rt]>1-2\delta.
\eeq

Now let us turn to the summation over independent random
illuminations $k=1,...,p$.   First note
\[
\IE_{\xi,\eta} \sum_{l=1}^n\lt[\phi^*_{ij}\bar \Phi_{i}-{1\over Nnp}\rt]=0
\]
by the $2\pi$-uniform distribution of $\theta_{kj}$ where
$\IE_{\xi,\eta}$ is the conditional expectation given
the sensor locations. Applying Hoeffding inequality with
(\ref{42}) we obtain 
\beq
\label{43}
\IP\lt[ \max_{j} \lt|\sum_{k=1}^p\sum_{l=1}^n
\lt[\phi^*_{ij}\bar \Phi_{i}-{1\over Nnp}\rt]\rt|< {K^3\sqrt{2}\over \sqrt{mnp}}\rt]>1-2\delta.
\eeq
}

\commentout{
 \subsection{Operator norm estimate}
 We follow the idea in \cite{RV} which is to estimate
 \[
 \IE \lt\|\bI_{np} -  {np\over N}\bPhi\bPhi^*\rt\|.
 \]
 First note that 
 \beq
 \bPhi\bPhi^*= \sum_{j=1}^N
 \Phi_j\otimes\Phi_j^*
 \eeq
 where $\Phi_j$ is the $j$-th column of $\bPhi$. 
 Moreover, 
 \beq
 \label{31}
 {np\over N} \sum_{j=1}^N
 \IE\lt[\Phi_j\otimes\Phi_j^*\rt]=\bI_{np}
 \eeq
 which can be derived as follows.
 We have 
 \beq
\lefteqn{ \IE\lt[\phi_{ij}\otimes\phi_{i'j}^*\rt]}\label{30}\\ &=& 
{1\over np} \IE\lt[e^{\imath \om x_j (\xi_{l'}-\xi_l)/(2z_0)}
e^{\imath \om y_j (\eta_{l'}-\eta_l)/(2z_0)}
e^{\imath \om(\xi_l^2-\xi_{l'}^2)/(2z_0)}
e^{\imath \om(\eta_l^2-\eta_{l'}^2)/(2z_0)}
 e^{\imath \theta_{kj}} e^{-\imath \theta_{k'j}}\rt]\nn
 \eeq
 with
 \beq
 i=(k-1)n+l ,\quad i'=(k'-1)n+l'.\label{ind}
 \eeq
 If $k\neq k'$, then the expectation in (\ref{30}) is zero
 due to the independence of $\theta_{kj}$ and  $\theta_{k'j}$.
 If $l\neq l'$, then the expectation is again zero
 due to the independence of $(\xi_l, \eta_l)$ and $(\xi_{l'},\eta_{l'})$. If $i=i'$, then the expectation is $(np)^{-1}$. 
 Hence the identity (\ref{31}). 
 
 Second, observe that for any given  the sensor locations 
the columns of $\bPhi$ are independent due to random illuminations. However,
they are not independent in the joint probability space
of sensors and illuminations. Denote by $\IE_{\xi,\eta}$  the
conditional expectation given the sensor locations
$\xi=(\xi_1,...,\xi_n), \eta=(\eta_1,...,\eta_n)$. 
We have 
\[
\IE_{\xi,\eta}\lt[f(\Phi_j) g(\Phi_{j'})\rt]
=\IE_{\xi,\eta} [f(\Phi_j)]\IE_{\xi,\eta} [g(\Phi_{j'})],\quad j\neq j'
\]
for all bounded measurable functions $f,g$. 

\begin{proposition}
Let $A$ and $B$ be two random vectors in $\IC^{np}$ that
are conditionally independent w.r.t. $\IE_{\xi,\eta}$.
If $\IE B=0$, then
\[
\IE \|A+B\|_2 \geq \IE \|A\|_2.
\]

\end{proposition}
\begin{proof}
We have the calculation 
\beqn
\IE \|A+B\|_2 =\IE \lt[\IE_{\xi,\eta} \|A+B\|_2\rt]
&=&\IE \lt[\IE_{\xi,\eta} \|A+\IE_{\xi,\eta} B+ (B-\IE_{\xi,\eta} B)\|_2\rt].
\eeqn
By  the conditional independence of
$A$ and $B-\IE_{\xi,\eta}$, partial integration with respect
to $B$ using Fubini's theorem and Jensen's inequality,  
\beqn
\IE_{\xi,\eta} \|A+\IE_{\xi,\eta} B+ (B-\IE_{\xi,\eta} B)\|_2
&\geq&\IE_{\xi,\eta} \|A+\IE_{\xi,\eta} B\|_2.
\eeqn
Hence
\beqn
\IE \|A+B\|_2 &\geq &\IE\lt[
\IE_{\xi,\eta} \|A+\IE_{\xi,\eta} B\|_2\rt]
=\IE \|A+\IE_{\xi,\eta} B\|_2. 
\eeqn

\end{proof}
}

  \commentout{
\section{Localized extended objects}
In this section we extend the above result to
the case of extended  objects that are localized in  space. 

 We 
represent such objects  by piecewise constant interpolation from grid points.  
Suppose that the function $\tau(\br)=\nu(\br)u(\br)$ is continuous 
and has  a compact support. 
Consider the interpolation from the lattice $\cL$
\beq
\label{533}
\tau_{\ell}(x,y)=\sum_{\bq\in \cL/\ell}
 g({(x,y)\over \ell}-\bq ) \tau(\ell\bq)
\eeq
where $g$ is the interpolation element. 

 Let $\bq=(q_1,q_2)$ and $l=(q_1-1)\sqrt{N}+ q_2$. 
Define the vector $X=(X_l)\in \IC^N$ with $X_l=\tau(\ell\bq)$ as before. Immediately after the object plane we
place a parabolic lens which has the effect of multiplying
the object with
the quadratic phase factor
\[
e^{-\imath \om (x^2+y^2)/(2z_0)}.
\]
This lens is introduced here to simplify our analysis. In practice, 
the lens should not be needed and should have
a negligible effect on performance.

After proper normalization of the scattered field data we can write
the data vector $Y$ 
in the form (\ref{u1}) with  the sensing matrix elements 
\beqn
\label{535}
\phi_{jl}&=&{2z_0 e^{-\imath \om z_0} \over \ell^2 \sqrt{n}  \hat g((\xi_j,\eta_j)\om \ell/z_0)}
\int g({(x,y)\over\ell} -\bq)e^{-\imath \om (x^2+y^2)/(2z_0)}\Gp((\xi_j,\eta_j,0), (x,y,z_0)) dx dy
\eeqn
where $\hat g$ is the Fourier transform of $g$. 
In this setting
the error term $E$ in (\ref{u1}) also includes the  discretization error. With some algebra this leads to the simple expression
\beqn
\phi_{jl}&=& {1\over \sqrt{n}} e^{\imath \om (\xi_j^2+\eta_j^2)/(2z_0)}
e^{-\imath \om \xi_j q_1\ell/z_0} e^{-\imath \om \eta_j q_2\ell/z_0}
\eeqn
which has the form (\ref{sand}) with $\bD_2=\bI$ and 
the same $\bD_1$ and $\bPsi$ after
identifying $x_l=q_1\ell, y_l=q_2\ell$.

Exactly the same argument now extend  Theorem \ref{thmA} 
to the case of localized extended objects if 
we can show that
 the discretization error $E_d$ can be  made arbitrarily small,
e.g. 
$\|E_d\|_2\leq \ep'$ for any given $\ep'>0$. 

 First, by the inequality $\|E_d\|_2\leq \|E_d\|_\infty \sqrt{n}$
it suffices to have $\|E_d\|_\infty \leq \ep /\sqrt{n} $. Now
consider the transformation
$\cT$, defined by 
\beqn
 (\cT v)_j &=&{2z_0 e^{-\imath \om z_0} \over  \sqrt{n}  \hat g((\xi_j,\eta_j)\om\ell/z_0)}
\int v(x,y) e^{-\imath \om (x^2+y^2)/(2z_0)}\Gp((\xi_j,\eta_j,0), (x,y,z_0)) dx dy
\eeqn
 from  the space of continuous functions supported on  $[\ell, m\ell]^2$ to
$\IC^n$. By definition
  \[
 E_d= \cT \tau -\cT \tau_{\ell}
 \]
 we have
 \[
 \|E_d\|_\infty\leq {2z_0 \|\tau-\tau_{\ell}\|_1\over \sqrt{n}  \min_{j}|\hat g((\xi_j,\eta_j)\om\ell/z_0)|}.
\]
and hence 
\beqn
\|E_d\|_\infty\leq {2z_0\over \sqrt{n}}  {\|v-v_{\ell}\|_1}
\|\hat g^{-1}\|_{L^\infty([-\pi, \pi]^2)}. 
\eeqn
where $\|\cdot\|_{L^\infty([-\pi, \pi] ^2)}$ denotes
the $L^\infty$-norm of functions defined on $[-\pi, \pi]^2$. 
 
 \begin{theorem} \label{thm:loc-ext}
Consider imaging method as described above. Assume  (\ref{75})  and
\beq
\label{disc-err}
 {\|v-v_{\ell}\|_1}
\leq  {\ep' \over 2z_0 \|\hat g^{-1}\|_{L^\infty([-\pi, \pi]^2)}}. 
\eeq
Suppose that (\ref{77}) 
 holds with $\sigma=2s$ and any $\delta<\sqrt{2}-1$.   Then 
the bound (\ref{101})
holds  true  with probability at least $1-\gamma$.
\end{theorem}

\begin{remark}
 For example, consider the indicator function  $g$ on
 the unit square $[-{1\over 2}, {1\over 2}]^2$. The resulting
 interpolation  is the
 piecewise constant approximation. Then
 \[
 \hat g(k_1, k_2)={2\over \pi}\cdot {\sin{k_1\over 2}\over k_1} \cdot
 {\sin{k_2\over 2}\over k_2}
 \]
We obtain the condition
 \beq
 \label{piece}
  {\|v-v_{\ell}\|_1}\leq
 {\ep'\over \pi^3 z_0}
 \eeq
 which can be realized for sufficiently small $\ell$ since piecewise constant interpolation is accurate up to  first-order:
  \[
   {\|v-v_{\ell}\|_1}\leq c \ell
   \]
   with a constant $c$ independent of $\ell$.

\end{remark}

}

\section{Numerical simulations }\label{sec:num}

  \begin{figure}[t]
\begin{center}
\includegraphics[width=0.8\textwidth]{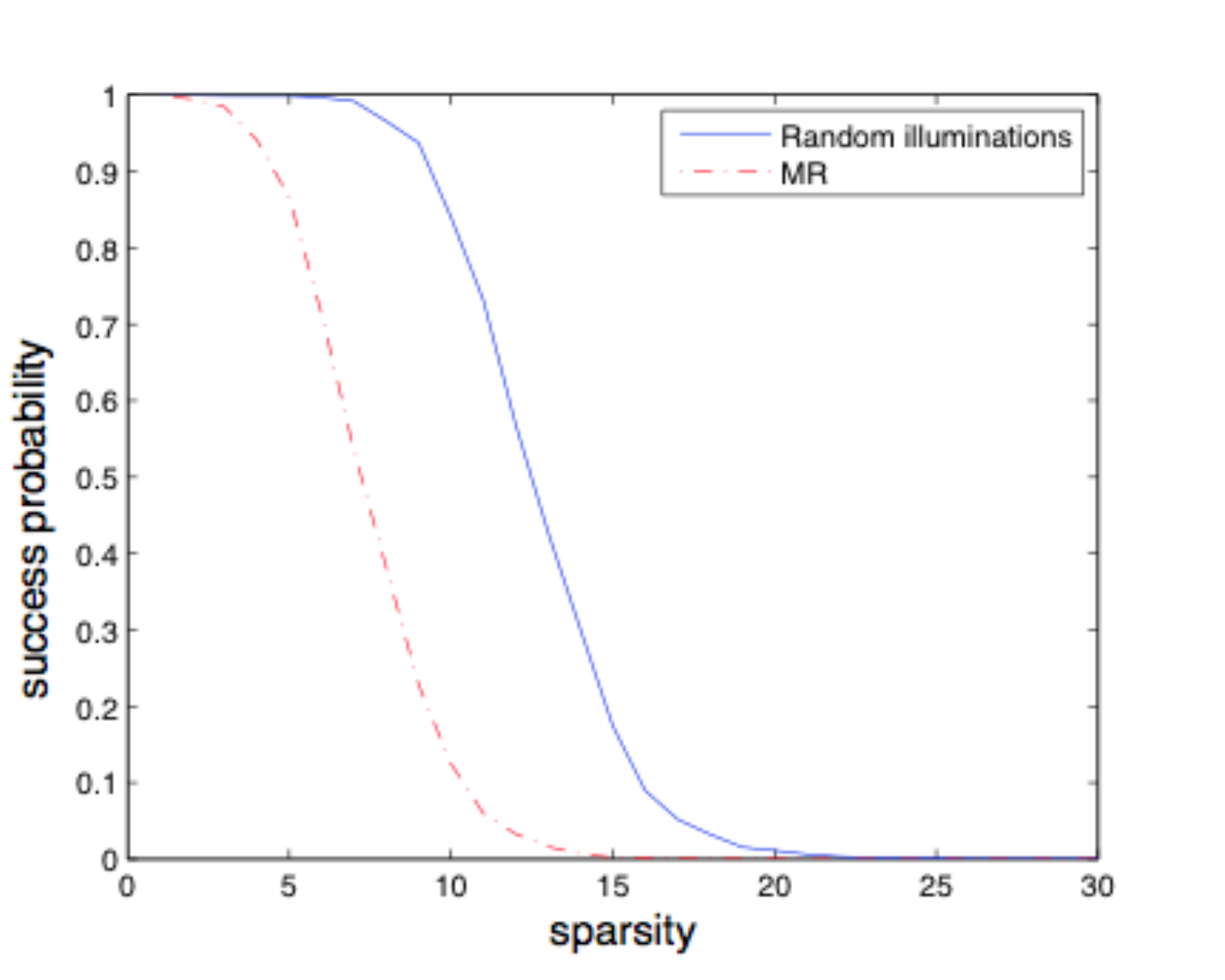}
\end{center}
\caption{The Lasso performance comparison between RI with $n=11, p=6$ and MR with $n=11$. The vertical axis is for the success probability and the horizontal axis is for the number of objects.
The success probability is estimated from 1000 independent trials.}
\label{fig1}
\end{figure}

  \begin{figure}[t]
\begin{center}
\includegraphics[width=0.8\textwidth]{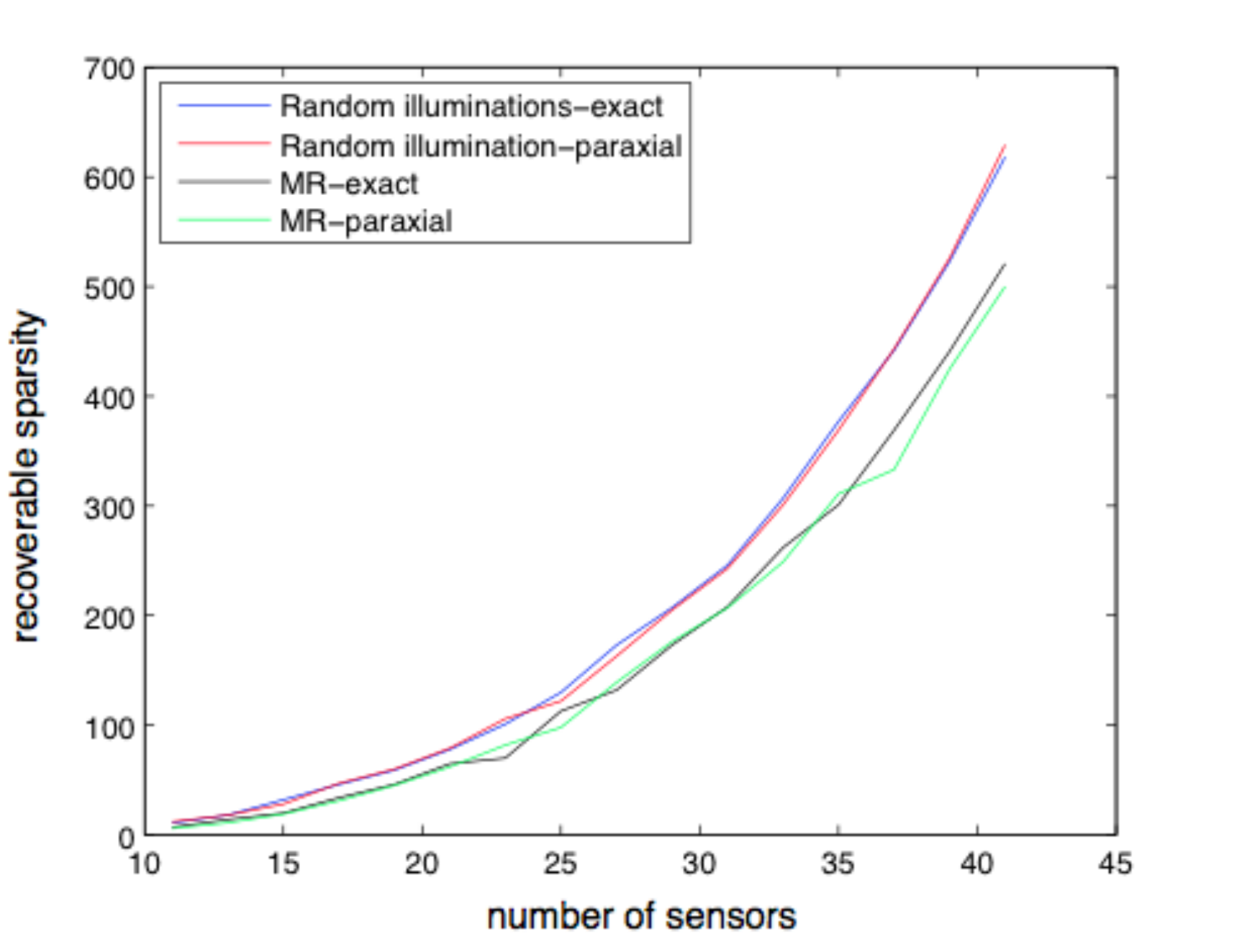}
\end{center}
\caption{The numbers  of recoverable (by the Lasso) objects 
for  RI with $p=(n+1)/2$ and MR as $n$ varies. The curves indicate a quadratic behavior predicted by the theory. The difference between recoveries  with the exact and paraxial Green functions is negligible in both the RI and MR set-ups. }
\label{fig1'}
\end{figure}

We use two numerical settings: the diffraction-limited case when (\ref{apert}) is satisfied  (Figure \ref{fig1}, \ref{fig1'},  \ref{fig4}, \ref{fig5}) and the under-resolved case
when the ratio in (\ref{apert})  is smaller than unity (Figure \ref{fig6}).

For the diffraction-limited case we set $z_0=10000$ and 
 $\lambda=0.1$ for the search domain  $[-250, 250]^2$ with $\ell=10$. The targets are i.i.d. uniform random points
 in the grid with amplitudes  in the range  $[1, 2]$. 
We randomly select sensor locations from $[-50,50]^2$
with the aperture  $A=100$ satisfying  (\ref{apert}). With these parameters
\[
{(A+\ell\sqrt{N})^4\over \lambda z^3_0}\approx 1.3
\]
the condition (\ref{6-2}) is barely satisfied. 
 For the Lasso solution  we have used  the Matlab  code
 {\em Subspace Pursuit} (available at {\tt http://igorcarron.googlepages.com/cscodes}).

We use the 
true Green
function 
(\ref{green}) in the computation of  scattered waves
and in recovery the exact Green function as well as  its  paraxial approximation to construct the sensing matrix (for comparison). In other words, we allow model 
 mismatch between the forward  and inversion steps.
 
 In the first set of simulations, we compare 
 the performances of the Lasso for two
 imaging set-ups: one with  random illumination (RI) 
 and the other with {\em multi-static responses} (MR).
As Figure \ref{fig1} shows, the RI set-up has
a higher success probability than the MR set-up.
Another comparison is shown in Figure \ref{fig1'} in terms
of the number of recoverable objects over a range of
$n$. The quadratic behavior is consistent with
the prediction of (\ref{162}) and (\ref{138}). The difference between the exact and paraxial Green functions recoveries  is negligible in both the RI and MR set-ups. For a given $n$, the Lasso with the RI set-up
recovers a higher number of objects than does the Lasso
with the MR set-up.

Figure \ref{fig4} compares the performances
of the Lasso (top panel) and OST (bottom panel)
in terms of the number of recoverable objects for
a fixed $np=600$ but variable $n$. Clearly, the Lasso can recover
far more objects exactly than does the OST.
For a fixed $np$ the performance for each
method appears relatively constant over 
the whole range of $n$. For small $n$, the performance curves of both methods indicate superresolution. As noise level increases 
the Lasso performance decays  (Figure \ref{fig5}).

\commentout{
\section{MF and RIP}
Let $Y$ be the $n$-dimensional signals
received by the array of $n$ sensors. 
Let $G(\br,\ba)$ be the Green function
of the time-invariant medium 
and let $\bG$ be
the Green  vector
\beq
\label{1.0}
\bG(\br) =[G(\br,\ba_1),
G(\br,\ba_2),...,G(\br,\ba_n)]^t
\eeq
where $t$ denotes transpose.
The conventional MFP
uses the Bartlett processor with the
 ambiguity surface
\beq
\label{Bar}
B(\br)={\bG^*(\br)YY^*\bG(\br)\over
\|\bG(\br)\|_2^2}
\eeq
\cite{Tol}. 
The Bartlett processor is motivated by the
following optimization problem:
Maximize
the quantity
\beq
\label{snr}
W^* YY^* W
\eeq
subject to the constraint:
\[
W^* W=1.
\]
The solution 
\[
W=Y/\|Y\|_2
\]
is the weight vector for
the matched filter. 
In the case of one point source of amplitude $\tau_1$ located at $\bx_1$, 
\[
Y=\tau_1\bG(\br_1)
\]
hence
\beq
\label{mf1}
W={\tau_1\bG(\br_1)\over |\tau_1| \|\bG(\br_1)\|_2}.
\eeq
Extending (\ref{mf1}) to an arbitrary field point $\br$ 
by substituting $\br$ for  $\br_1$
we obtain  the Bartlett processor from (\ref{snr}).

In our setting with $G=G_p$, $\|\bG(\br)\|_2$ is a constant
independent of $\br$ in the computation domain
and hence $B(\br)$ can be interpreted as
 the intensity of the time reversed  field $\bG^*(\br) Y$ at $\br$ modulo a constant factor. Here {\em time reversal}  refers
 to phase conjugation in $\bG^*(\br) Y$. 

In the case of imaging, time reversal is implemented 
computationally and is the simplest form of MFP.
In the case of focusing or communication,  
time reversal is implemented  physically by reversing
and back-propagating 
the wave front from antennas  (see \cite{trm} and references therein). 

}

\commentout{
Now we would like to point out an interesting connection
between time reversal and RIP.

In the simulations, we set $z_0=10000$ and for the most part
 $\lambda=0.1$ for the computation domain  $[-250, 250]^2$ with $500\times 500$ grid points. The targets are i.i.d. uniform random variables
 of the grid with target amplitudes  in $[1, 2]$. 
We use the aperture  $A=100$ which,according to Theorem~\ref{thm3},  is the threshold, optimal aperture. Therefore the paraxial regime (\ref{6-2}) is enforced.

 \begin{figure}[t]
\begin{center}
\includegraphics[width=0.47\textwidth]{new-figures/tr_equal2500.pdf}
\includegraphics[width=0.47\textwidth]{new-figures/tr_random2500.pdf}
\end{center}
 \caption{Time reversal of waves from $40$  source points from 
  $2500$ antennas equally spaced (left) and
  randomly placed (right) in the aperture $[-50,50]^2$.  The red circles represent the true locations of the targets.  }
 \label{fig-tr}
 \end{figure}

  \begin{figure}[t]
\begin{center}
\includegraphics[width=0.47\textwidth]{new-figures/tr-equal121.pdf}
\includegraphics[width=0.47\textwidth]{new-figures/Passive_SP_paraxial_MFP.pdf}
\end{center}
 \caption{Time reversal of waves from $40$  source points from 
  $121$ antennas equally spaced (left) and
  randomly placed (right) in the aperture $[-50,50]^2$.  The red circles represent the true locations of the targets.  }
 \label{fig-tr'}
 \end{figure}
 }
 
 \begin{figure}[h]
\begin{center}
\includegraphics[width=0.8\textwidth]{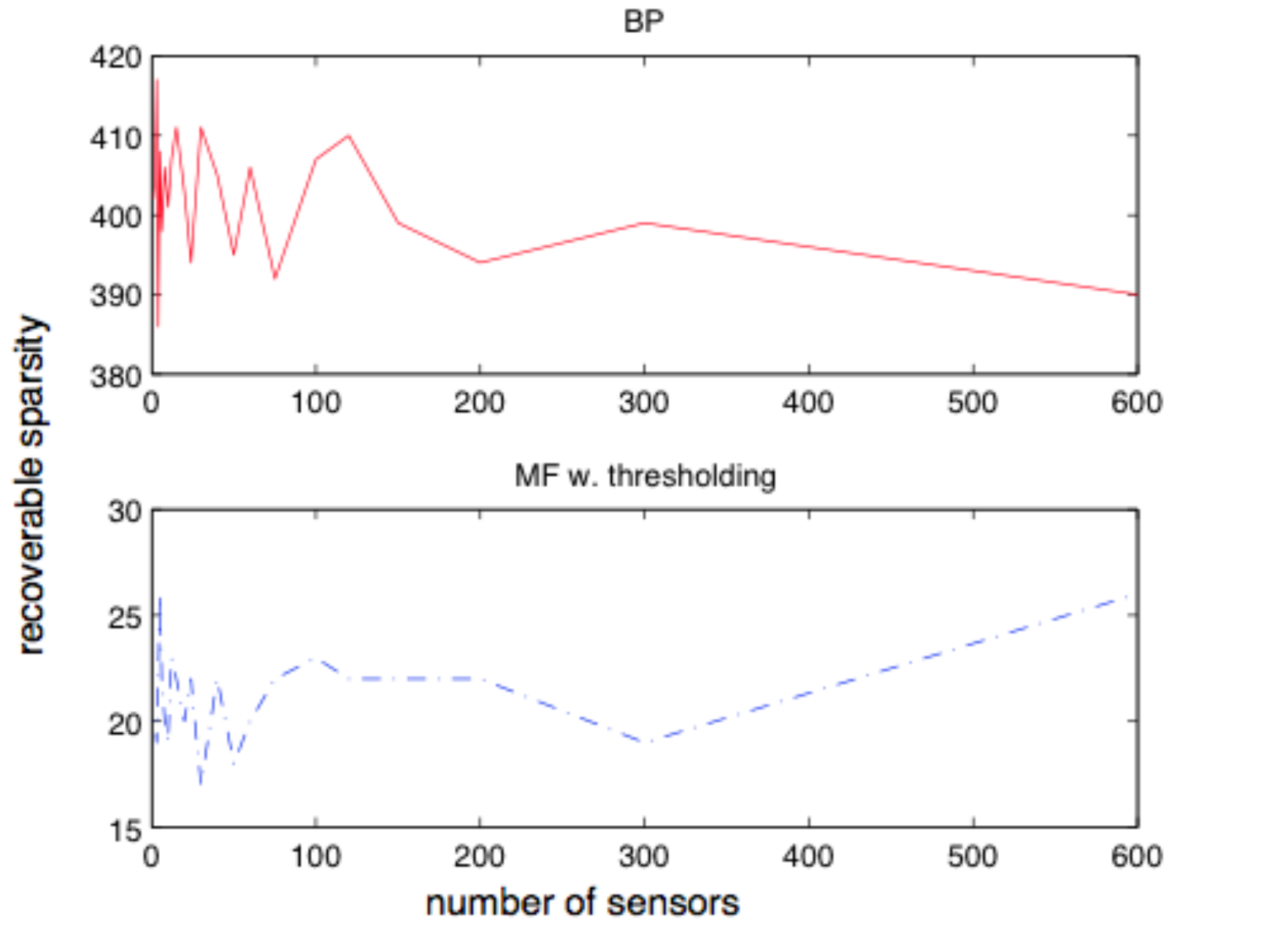}
\end{center}
\caption{The number of recoverable objects as a function of the number of sensors $n =$ 1,2,3,4,5, 6, 8, 10, 12, 15,  20, 24, 25, 30, 40, 50, 60, 75, 100, 120, 150, 200, 300, 600 with $np=600$ fixed. The top panel is for the Lasso
and the bottom panel for OST. 
The  left ends of both curves indicate superresolution. }
\label{fig4}
\end{figure}

  \begin{figure}[t]
\begin{center}
\includegraphics[width=0.80\textwidth]{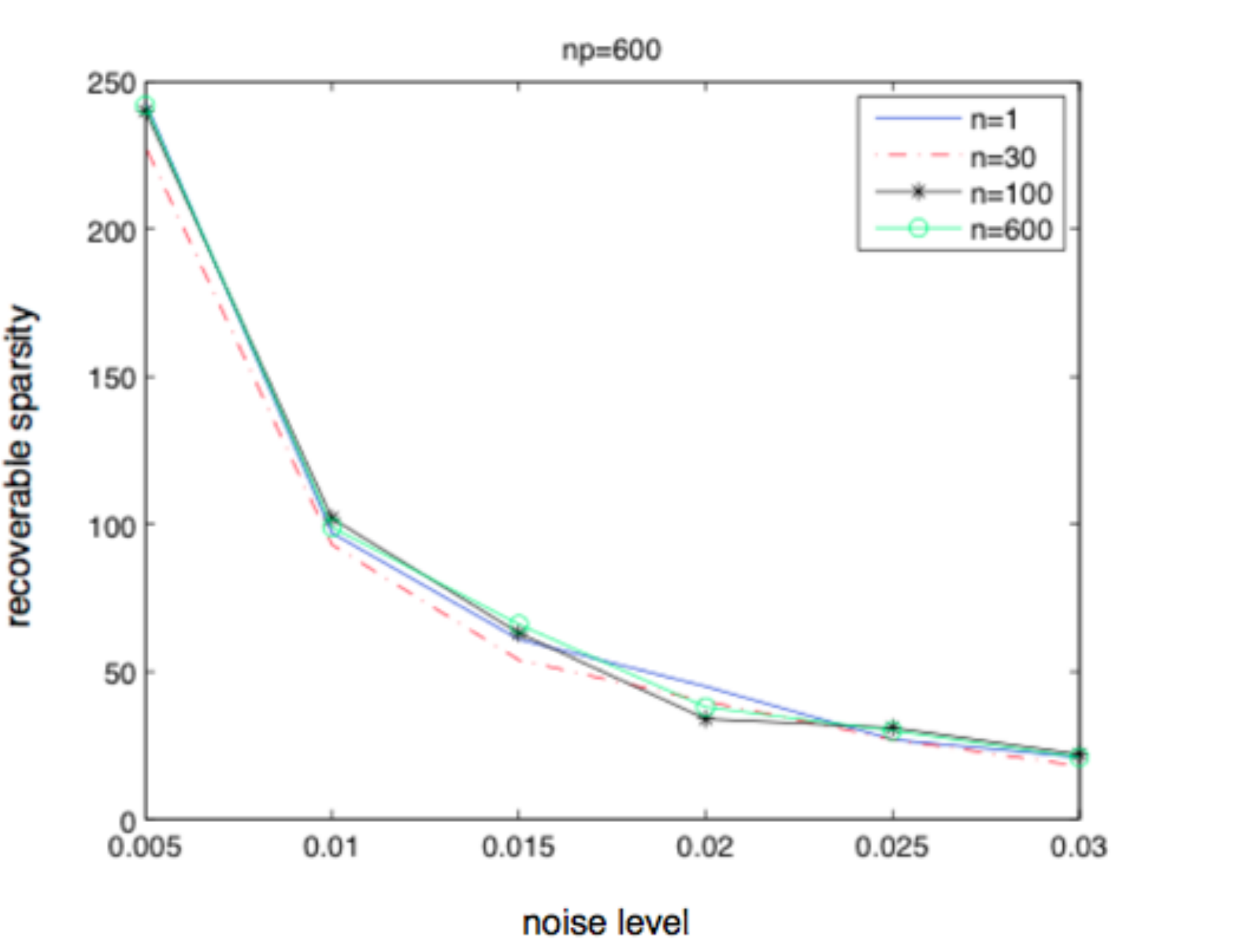}
\end{center}
\caption{Noisy recovery by the Lasso for $n=1,30,100,600$
and with $np=600$ fixed. The noise is given by the circularly random 
Gaussian noise of magnitude
$\sigma\|Y\|_2$ where $\sigma$ is the horizontal coordinate. 
Note that in this case $\IE\|E\|_2^2=np\sigma^2\|Y\|_2^2.$}
\label{fig5}
\end{figure}

To  further understand  the superresolution effect of random illumination, we consider
the set-up with $z_0=25000, \lambda=0.4$ for which
the ratio in (\ref{apert}) is $0.1$. This is an under-resolved case
whose performance  is shown in Figure \ref{fig6}. In contrast to  the
diffraction-limited case  (Figure \ref{fig4}), the number of recoverable objects in the under-resolved case
decays rapidly as $p$ decreases ($n$ increases).  To maintain high
performance in the under-resolved case, it is necessary
that $p\gg 1$. The number of recoverable objects is calculated
based on 90\% recovery of 100 independent trials.

We demonstrate   in Figures \ref{fig8}- \ref{fig9'} the performance for extended objects in the presence of  external  noise of the form
\[
{p\over \sqrt{2}} (\nu_1+ \imath \nu_2 ) { \|Y\|_2 \over \sqrt{np}},\quad
p=5\%, 20\%
\]
where $p$ is the percentage of noise in each entry of the data vector and $\nu_1,\nu_2$ are i.i.d. uniform random variables in $[0,1]$.

Figure \ref{fig8} shows the original 
$40\times 80$ pixel image (left)  and its reconstructions  (middle panel, $5\%$ noise;  right panel, $20\%$ noise) by the BPDN solver YALL1 ({\tt http://yall1.blogs.rice.edu/}) using  one sensor and 500 random illuminations while Figure \ref{fig8'} shows the results
with one illuminations and 500 randomly distributed sensors.   

Figure \ref{fig9} shows the original $70\times 70$ pixel image (left), the Shepp-Logan phantom, and its reconstructions (middle panel, $5\%$ noise; right panel, $20\%$ noise) by the total-variation minimization \cite{CL, ROF}  solver TVAL3 (http://www.caam.rice.edu/~optimization/L1/TVAL3/) using 
one sensor and  1000 random illuminations while
Figure \ref{fig9'} shows the results with 
one illumination and 1000 randomly distributed sensors. 

The  low pixel numbers
 are chosen to reduce the run time of the programs. 

For the one-illumination reconstructions (Figures \ref{fig8'} and \ref{fig9'}), the classical resolution criterion (\ref{apert}) is met. 
Note, however, that the Shepp-Logan phantom is not 
in the class of sparse extended objects analyzed  in Section \ref{sec:ext} because the object support covers more than $50\%$ of
the domain (only the gradient is sparse). As a result, the same percentage of noise represents a greater amount of
noise in the case of Shepp-Logan phantom and has a more
serious  effect on performance (Figures \ref{fig9} and \ref{fig9'}, right panels).

 \begin{figure}[h]
\begin{center}
\includegraphics[width=0.8\textwidth]{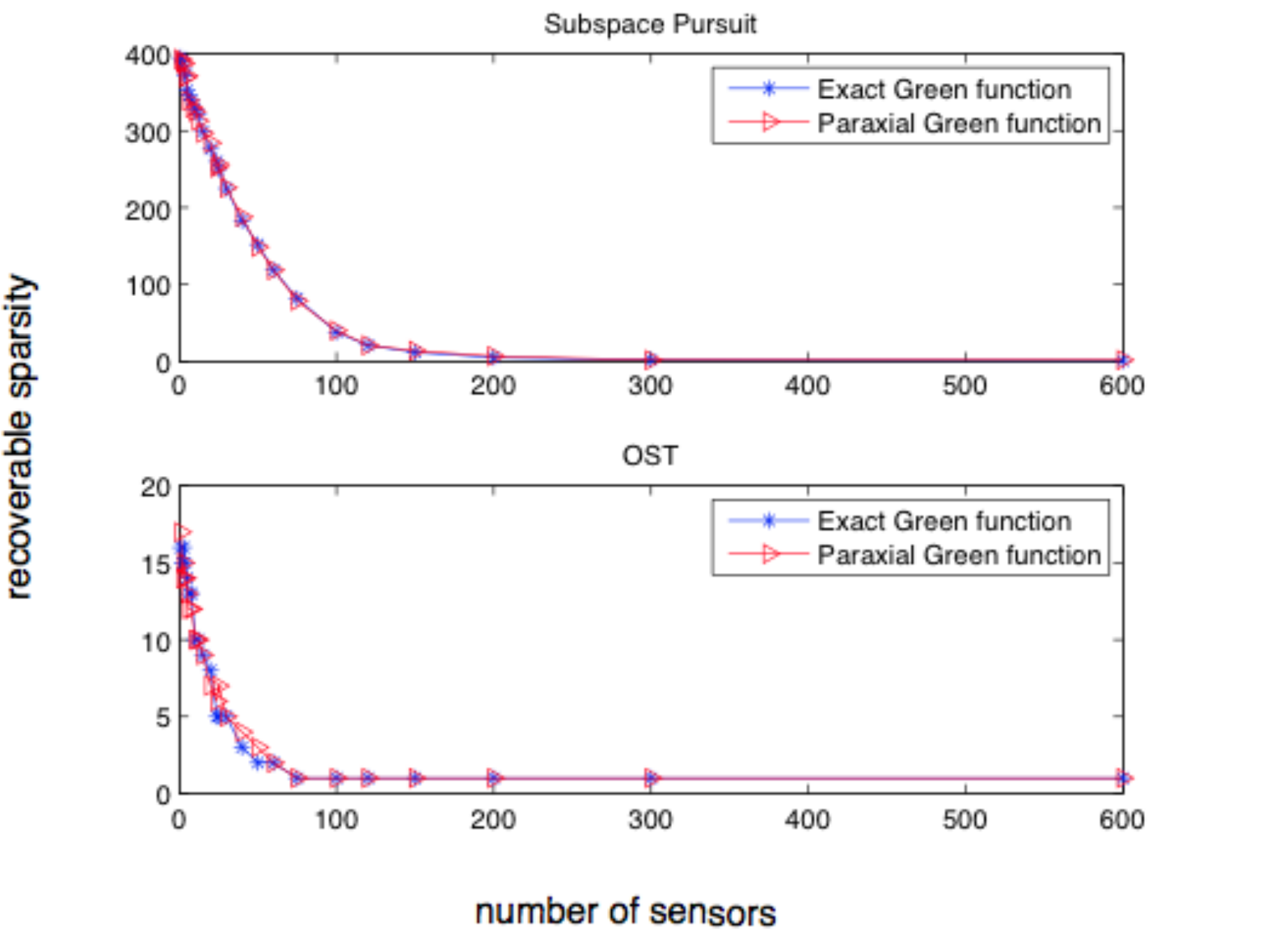}
\end{center}
\caption{The number of recoverable objects in the under-resolved case as a function of the number of sensors $n =$ 1,2,3,4,5, 6, 8, 10, 12, 15,  20, 24, 25, 30, 40, 50, 60, 75, 100, 120, 150, 200, 300, 600 with $np=600$ fixed.  }
\label{fig6}
\end{figure}

 \begin{figure}[h]
\begin{center}
\includegraphics[width=0.32\textwidth]{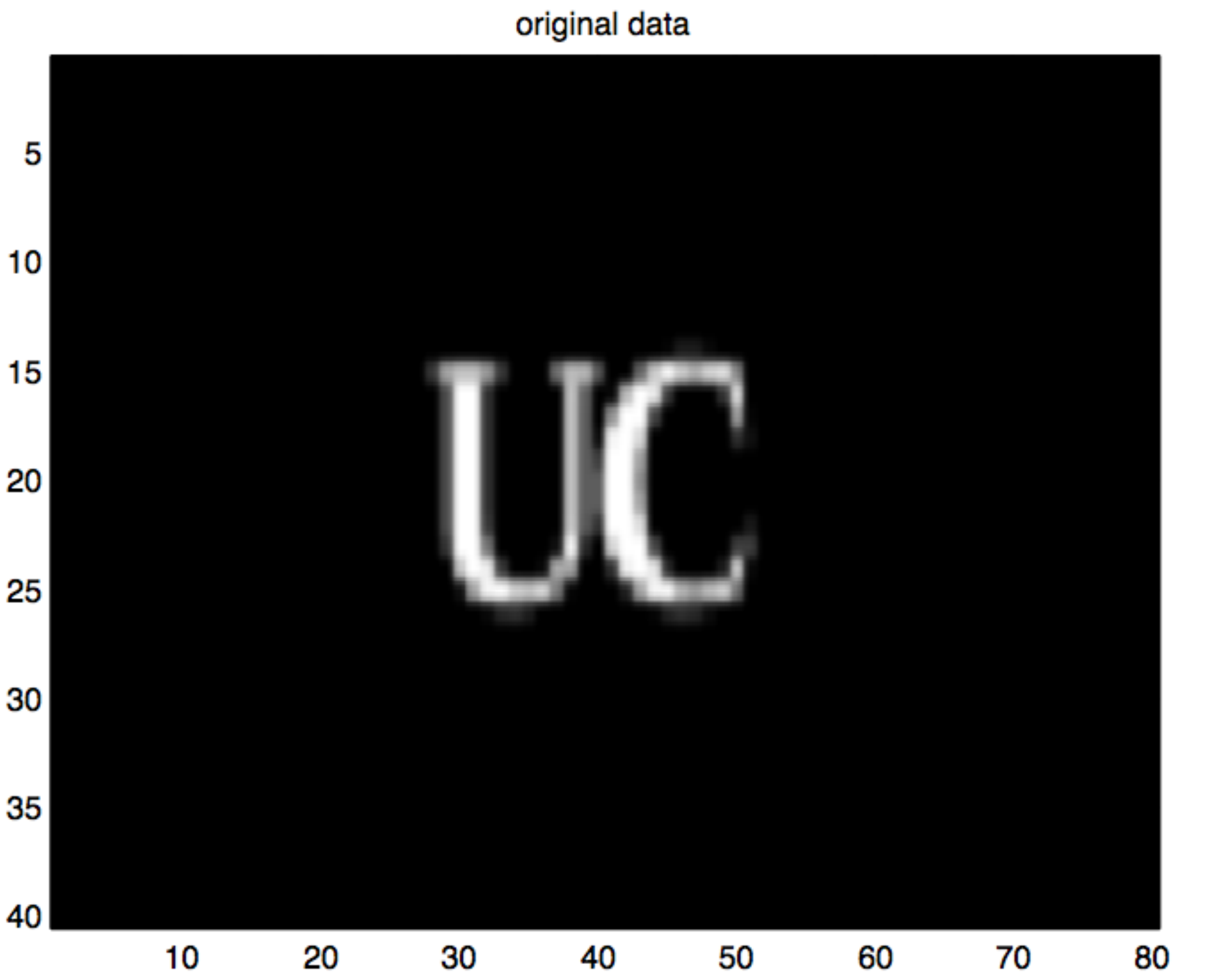}
\includegraphics[width=0.32\textwidth]{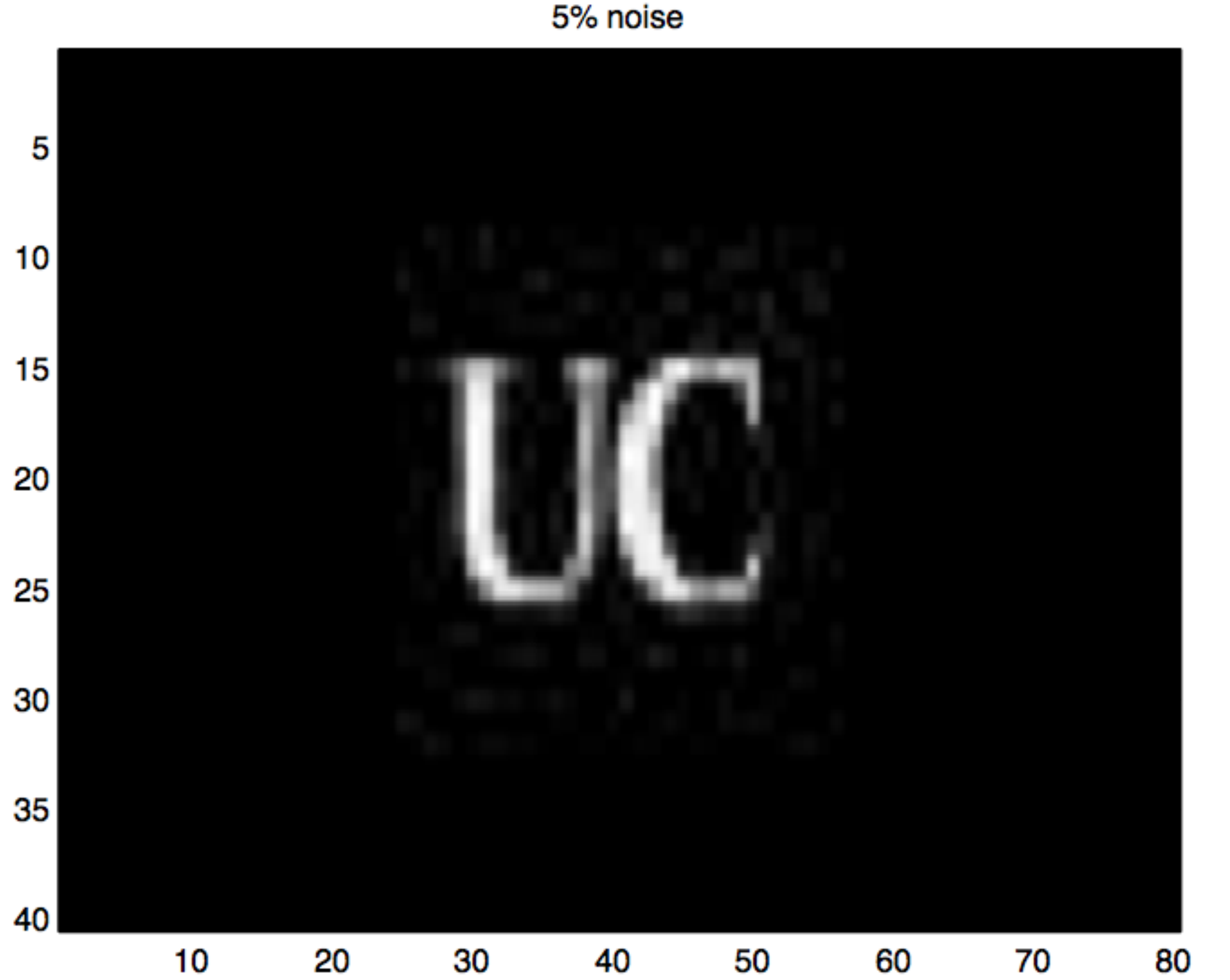}
\includegraphics[width=0.32\textwidth]{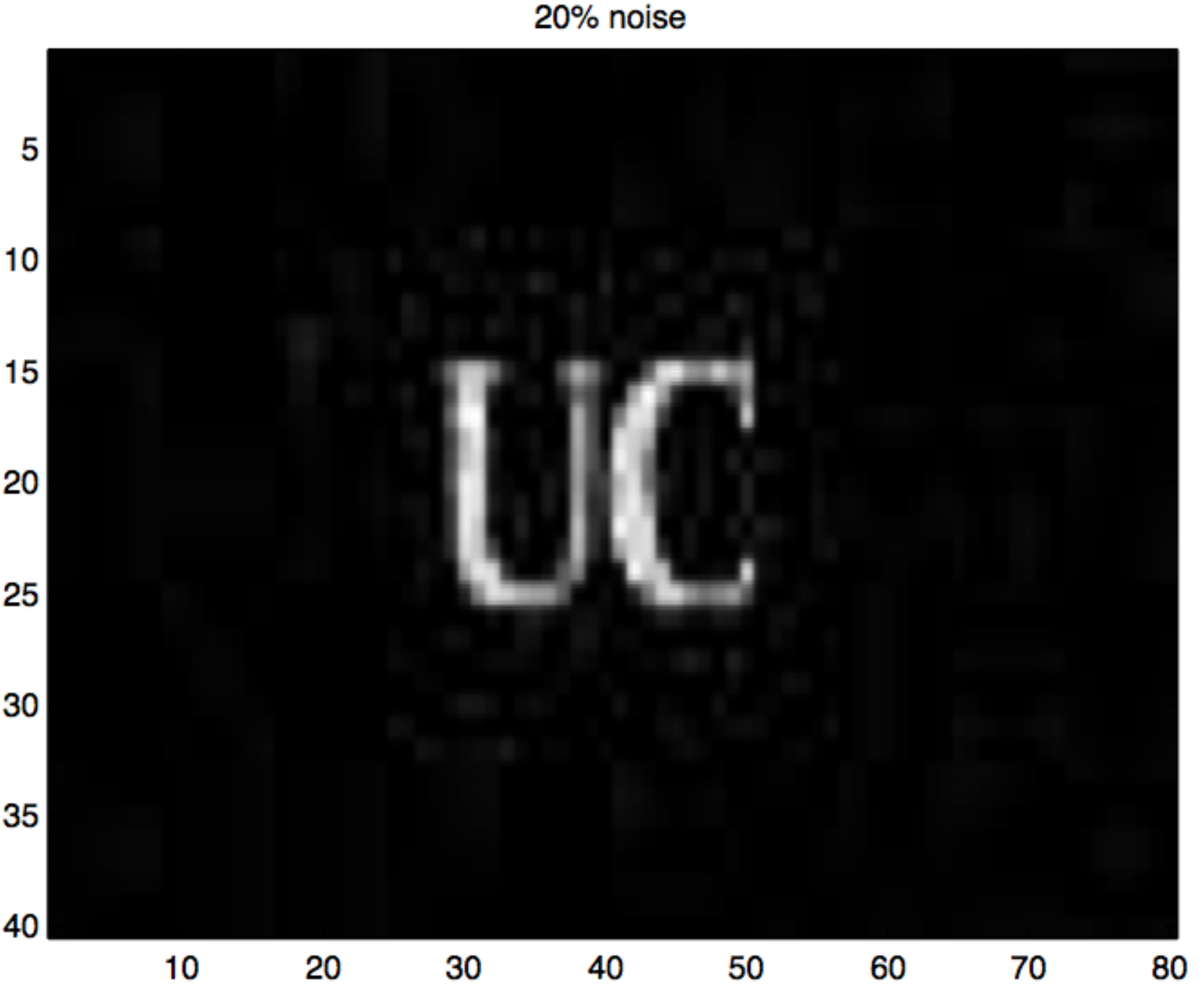}
\end{center}
\caption{The original $40\times 80$ pixel image (left) and the BPDN reconstructions (middle panel,  $5\%$ noise; right panel $20\%$ noise) with one sensor and 500 random illuminations. }
\label{fig8}
\end{figure}

 \begin{figure}[h]
\begin{center}
\includegraphics[width=0.32\textwidth]{UC0.pdf}
\includegraphics[width=0.32\textwidth]{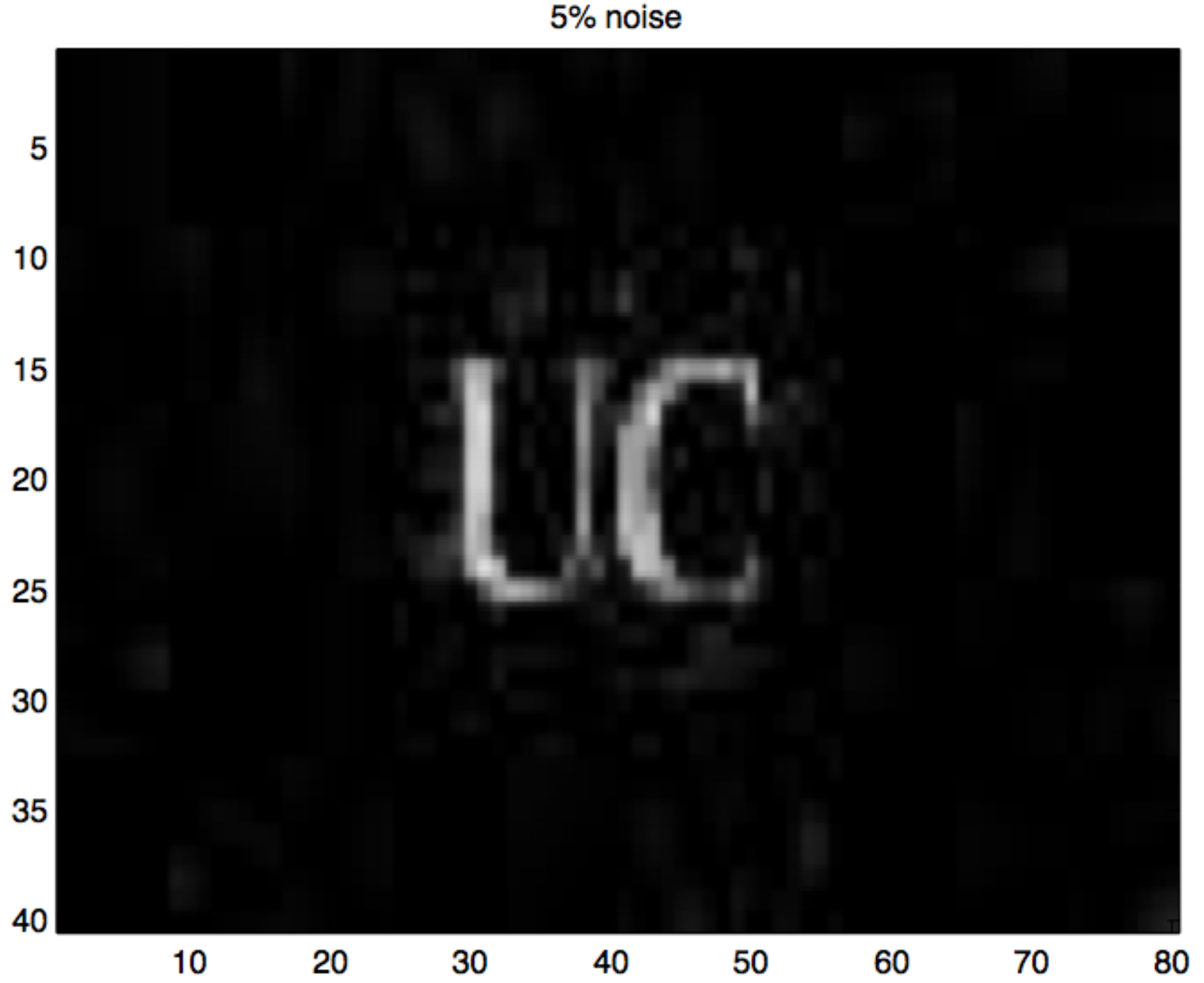}
\includegraphics[width=0.32\textwidth]{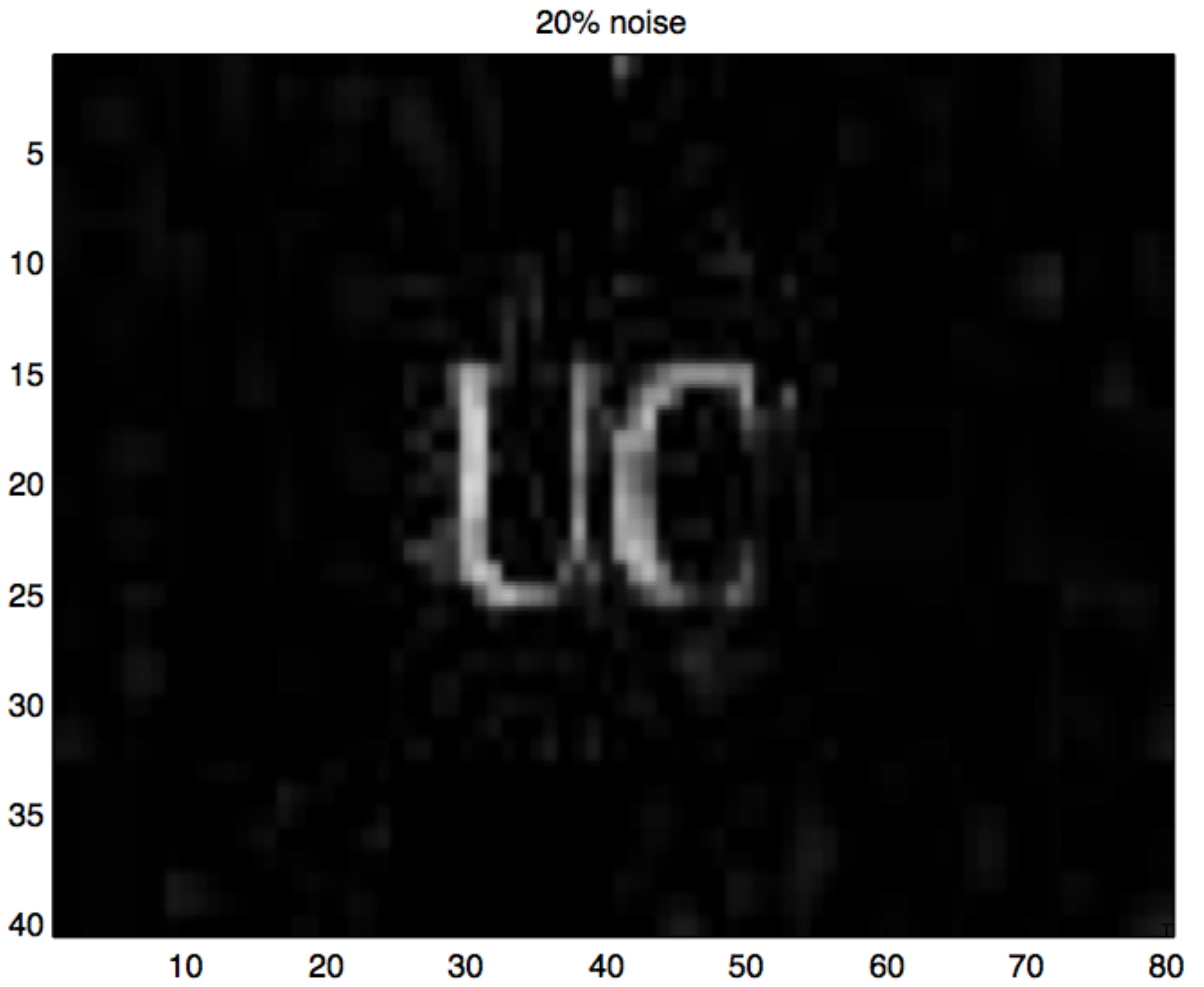}
\end{center}
\caption{The original $40\times 80$ pixel image (left) and the BPDN reconstructions (middle panel, $5\%$ noise; right panel, $20\%$ noise)  with one illumination and 500 randomly distributed sensors.  }
\label{fig8'}
\end{figure}

 \begin{figure}[h]
\begin{center}
\includegraphics[width=0.32\textwidth]{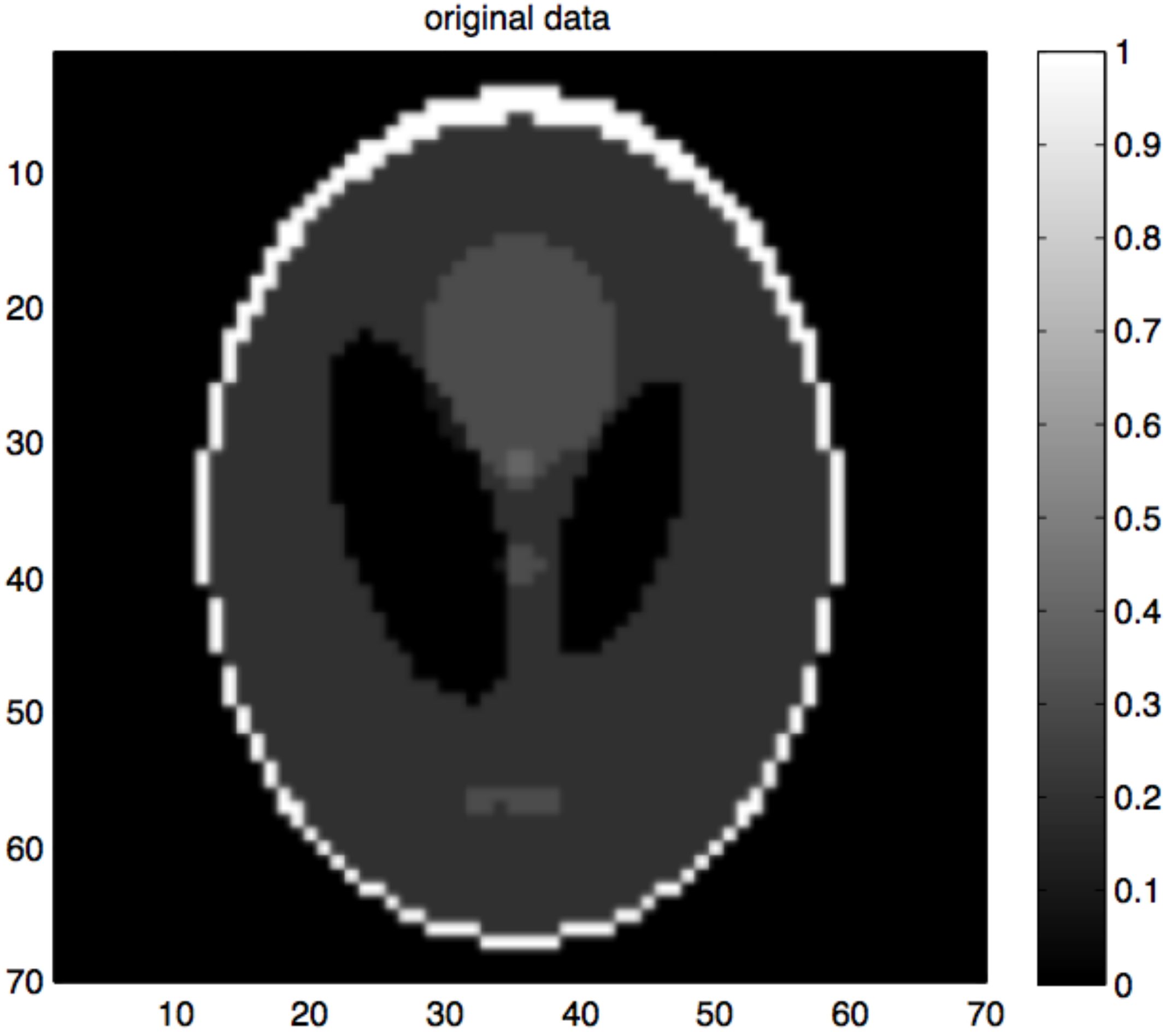}
\includegraphics[width=0.32\textwidth]{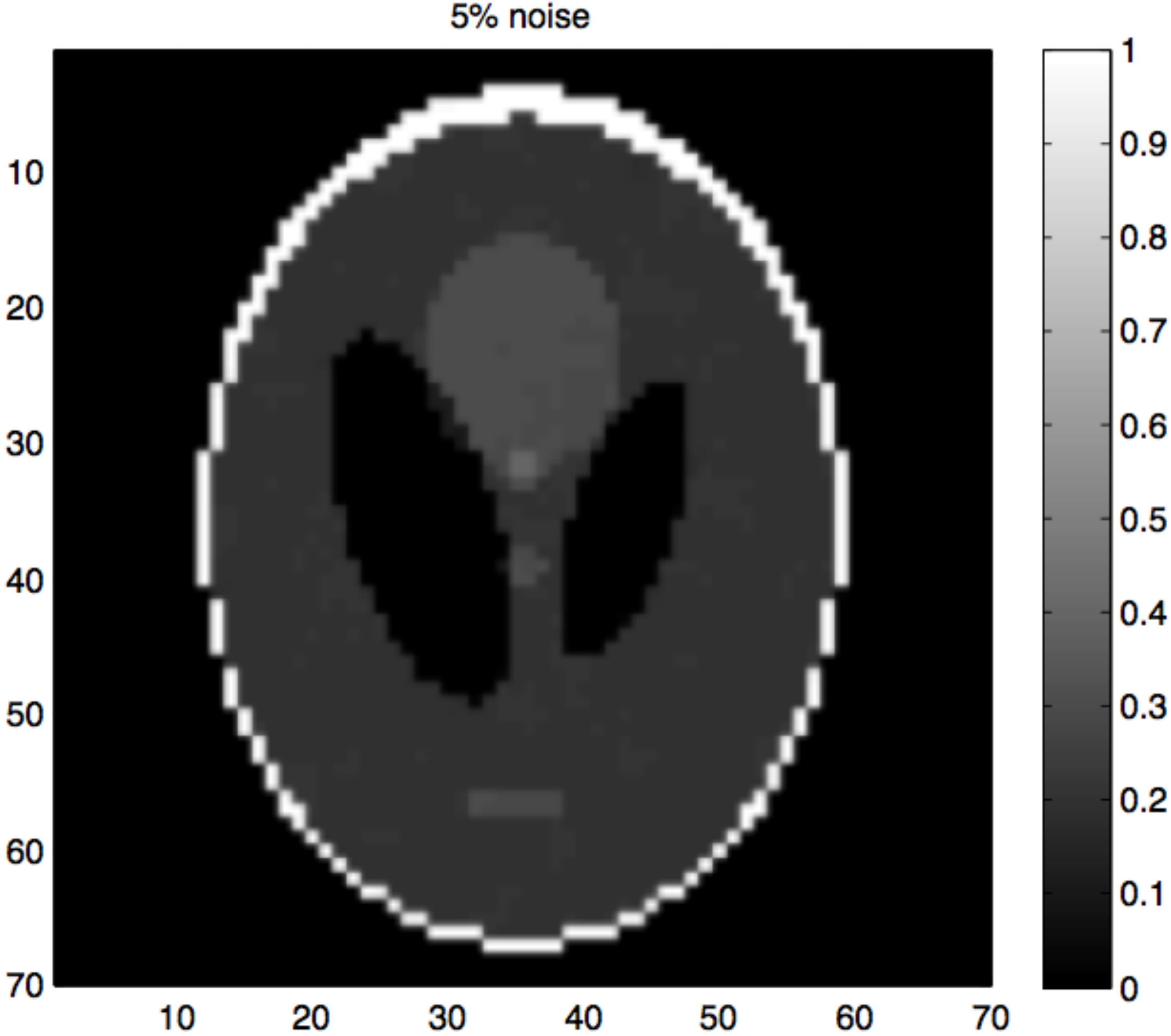}
\includegraphics[width=0.32\textwidth]{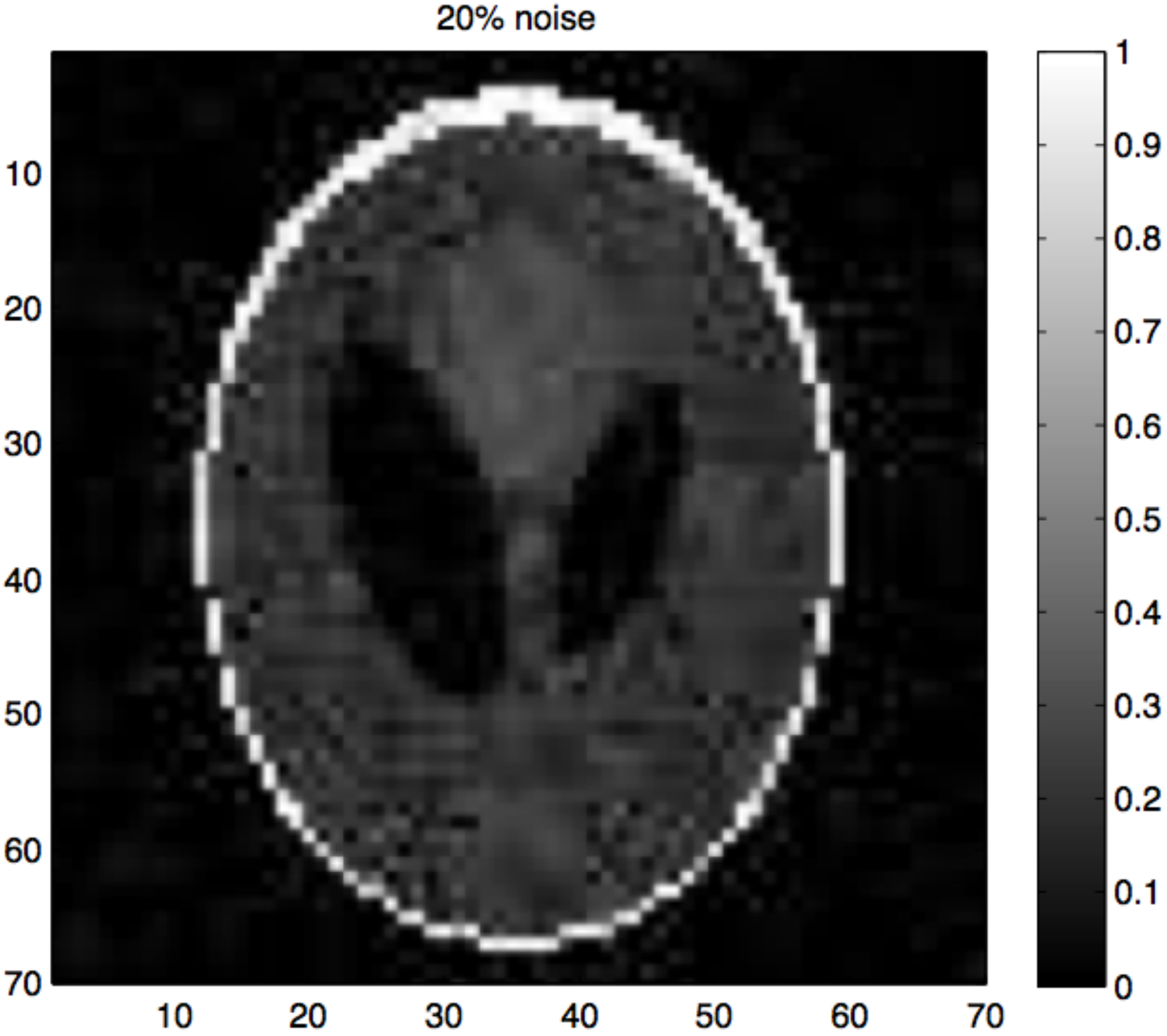}
\end{center}
\caption{The original $70\times 70$ pixel image (left), the Shepp-Logan phantom, and  the TV-minimization  reconstructions 
(middle panel, $5\%$ noise; right panel, $20\%$ noise) with
one sensor and 1000 random illuminations.  }
\label{fig9}
\end{figure}

 \begin{figure}[h]
\begin{center}
\includegraphics[width=0.32\textwidth]{SLP0.pdf}
\includegraphics[width=0.32\textwidth]{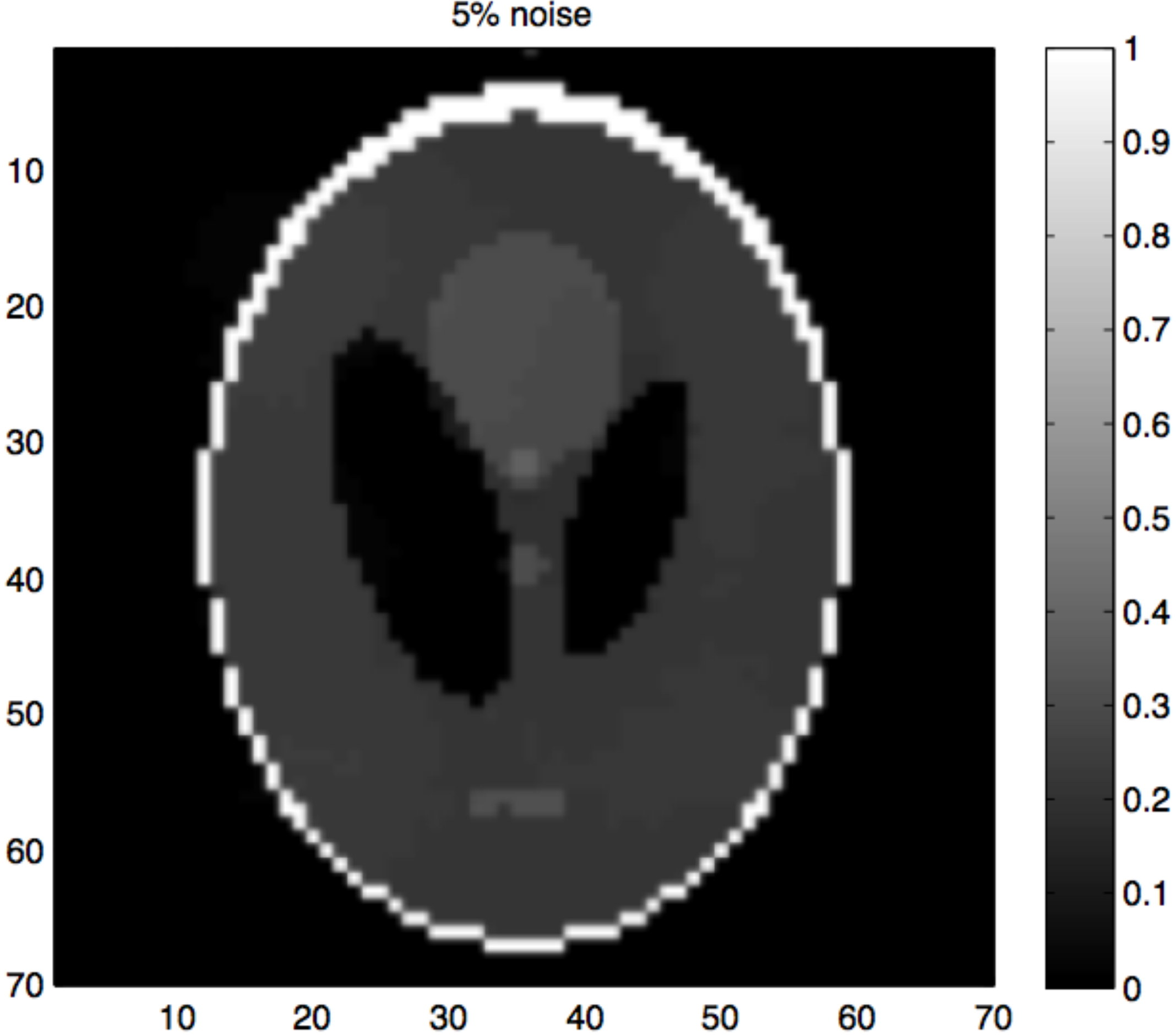}
\includegraphics[width=0.32\textwidth]{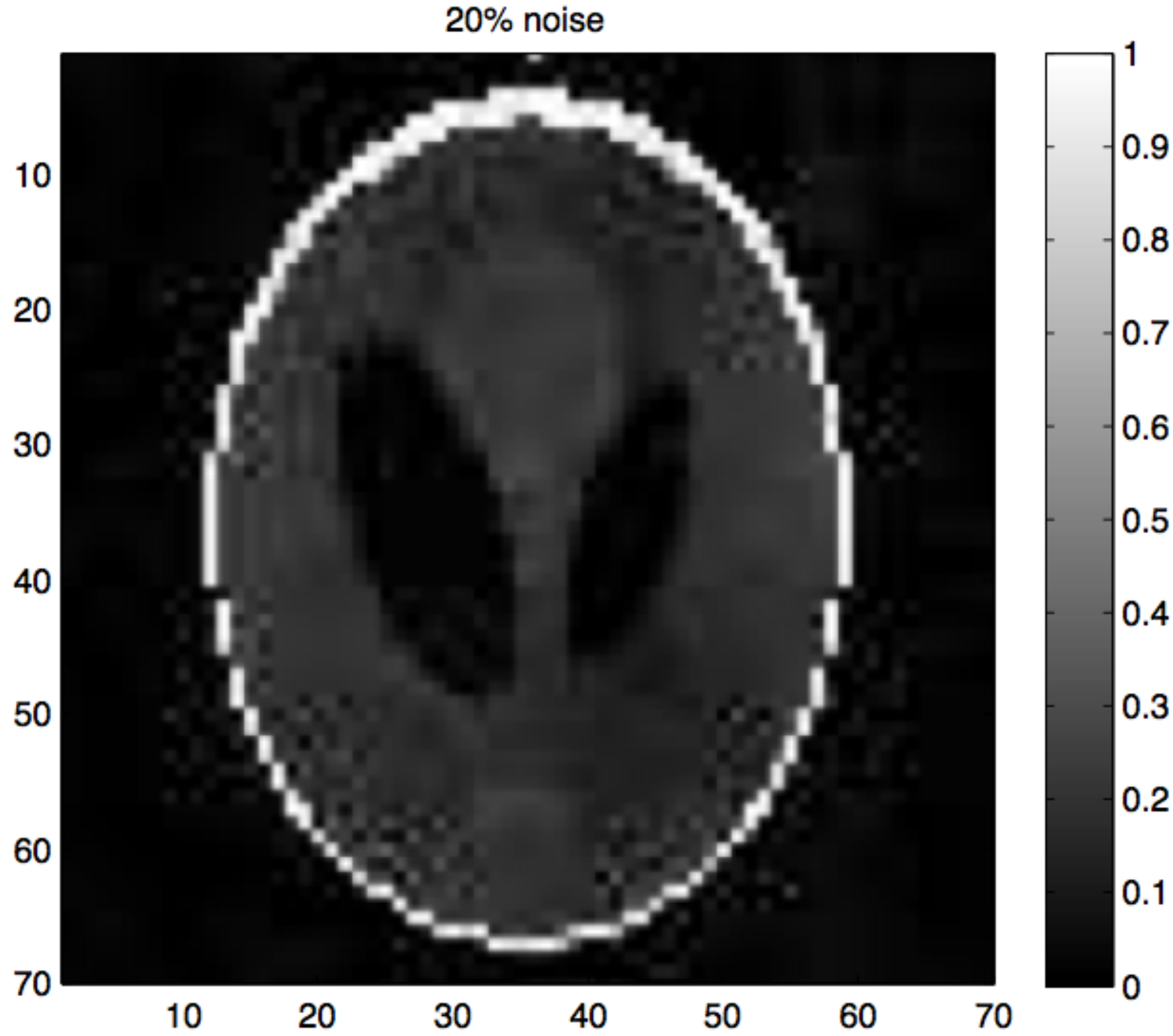}
\end{center}
\caption{The original $70\times 70$ pixel image (left), the Shepp-Logan phantom, and the TV-minimization  reconstructions (middle panel, $5\%$ noise; right panel, $20\%$ noise)  with one illumination 
and 1000 randomly distributed sensors.  }
\label{fig9'}
\end{figure}

 \section{Conclusion}\label{sec9}
 
 We have proposed a new approach to superresolving
 point and extended objects
 based on random illumination and compressed sensing
 reconstruction.
 
  We have proved that in the diffraction-limited case both the Lasso
 and the OST with random illumination can exactly localize
 $s=\cO(m)$ objects where  the number of data $m$
 is the product of the numbers of random probes  and
  sensors. For the under-resolved case where the
  Rayleigh resolution limit is broken, the Lasso
  still has a similar performance guarantee if
  the number of random illuminations  is sufficiently large. 
  It is possible to extend the OST result to the under-resolved
  case which is omitted here to simplify the presentation.

  Numerical evidence supports our theoretical prediction
  and confirms  the superiority of the Lasso to
  the OST in the set-up with random illumination. 
  
  We have also shown that  the BPDN is suitable for
  imaging extended objects and have provided numerical examples
  to demonstrate its performance.

  The superresolution effect with random illumination  revealed here 
  contrasts with  the subwavelength resolution  with deterministic near-field
  illumination  studied in \cite{subwave}.
  
  Finally we note that in our approach it is essential to measure
  the wave field. For {\em intensity-only} measurements, additional techniques such as interferometry or 
  phase retrieval methods are necessary for object reconstruction. 
  

\commentout{
\section{Piecewise smooth objects}
 In this section, we analyze the case of piecewise smooth objects.
 Instead of the objects themselves, the derivatives of such objects are sparse. 

 By (\ref{271}) and integration by parts we see that
 \beq
\label{272}
{\imath \om \over z_0}  (\cF O)_i \ba_l ={1\over  g(\ba_l)} \sum_{j=1}^N e^{\imath \theta_{kj}}   \int_{\square_j} \nabla O(x,y) e^{-\imath \om \xi_lx_j/z_0}e^{-\imath \om \eta_ly_j/z_0}  dx dy, 
 \eeq
 whose left hand side can be thought of as  the uncontaminated signal  produced by the sparse object $\nabla O$. 
 
Instead of  the vector-valued object  $\nabla O$ we employ
the complexified object 
 $I=\partial_x O+\imath \partial_y O$ provided that $O$ is real-valued.
Accordingly  we set the data as 
 \[
  {\imath \om \over z_0}  (\cF O)_i (\xi_l+\imath\eta_l),\quad i=(k-1)n+l.
  \]
  The sensing matrix is exactly the same as (\ref{ext1}). 
  
 Proceeding as before, we seek to first reconstruct the discrete approximation $X=(I(\br_j))\in \IC^{np}$ on the grid $\cL$
and recover the discrete approximation $O_\ell$ by integrating $I_\ell=\sum_{j} I(\br_j) \II_{\square_j}$.

We reconstruct $X$ by the BPDN (\ref{269})
subject to the additional curl-free constraint 
\beq
\label{277}
D_y\Re(Z)=D_x \Im(Z),\quad Z\in \IC^{np}
\eeq
where $D_x$ and $D_y$ are the discrete analogs of $\partial_x$
and $\partial_y$, respectively. 
BPDN with the additional constraint (\ref{277}) is equivalent to the the discrete version of the Total Variation Minimization (TVM) \cite{Cha, CL, ROF}. 

Instead of following the standard
TVM approach \cite{Cha}, we resort
to BPDN
to obtain a quantitative performance guarantee analogous
to Theorem \ref{thm3}. 

First, we observe the curl-free constraint (\ref{277}) defines
a convex subset, indeed a subspace over $\IR$,  in $\IC^{np}$. Hence BPDN with (\ref{277})
is a convex, quadratic  program that can be solved efficiently. 
Second the proof of Proposition \ref{prop3} (see \cite{music} for
the complex-value version) is valid without change under
the constraint (\ref{277}). 
In other words, Proposition \ref{prop3} holds true for BPDN with
the additional constraint (\ref{277}). Hence analogous to Theorem
\ref{thm3} we have the following statement. 

\begin{theorem}  
Suppose that 
\[
 {\|I-I_{\ell}\|_{L^1}}\leq {\ep\over \sqrt{np}}  \min_l |g(\ba_l)|
 \]
 holds true where $g$ is given by (\ref{260}).  Suppose that  the external noise $E_{\rm ext}$ satisfies $\|E_{\rm ext}\|_2\leq \ep'$ such that the total error $\|E\|_2\leq \|E_{\rm ext}\|_2+\|E_{\rm disc}\|_2\leq \ep+\ep'$. Then the reconstruction  $\hat X$ by BPDN with (\ref{277}) satisfies the error bound (\ref{280})
for all $s$ satisfying (\ref{321'}). 

\end{theorem}
}

  \bigskip
  
 {\bf Acknowledgement} \,\, I am grateful to  Mike Yan for producing Figures \ref{fig1}-\ref{fig6} and Hsiao-Chieh Tseng for
 producing Figures \ref{fig8}-\ref{fig9'} of Section \ref{sec:num}. 

\commentout{

}

\end{document}